\definecolor{shadecolor}{named}{lightgray}
\renewcommand{\mit}{\mathit}
\newcommand{\seqn}{{=_{\mit{nf}}}}
\newcommand{\seqh}{{=_{\mit{hnf}}}}
\newcommand{\seqw}{{=_{\mit{whnf}}}}
\newcommand{\eqn}{\mathrel{\seqn}}
\newcommand{\eqh}{\mathrel{\seqh}}
\newcommand{\eqw}{\mathrel{\seqw}}
\newcommand{\extended}[1]{#1}%{{\color{gray}#1}}
\renewcommand{\pagebreak}{}
\newcommand{\short}[1]{}%{#1}
\theoremstyle{plain}
\newtheorem{definition}{Definition}
\newtheorem{theorem}{Theorem}
\newtheorem{lemma}{Lemma}
\newtheorem{remark}{Remark}
\newtheorem{notation}{Notation}
\def\moverlay{\mathpalette\mov@rlay}
\def\mov@rlay#1#2{\leavevmode\vtop{%
   \baselineskip\z@skip \lineskiplimit-\maxdimen
   \ialign{\hfil$#1##$\hfil\cr#2\crcr}}}
\newcommand{\picitem}[1]{
  \scalebox{0.8}{
  \vspace{.5ex}
  \begin{tikzpicture}[baseline=-1ex]
    \node [circle,drop shadow,fill=blue!10,draw,outer sep=.7mm,inner sep=.5mm] {#1};
  \end{tikzpicture}\,}
  }
\newcommand{\appsection}[1]{
  \let\oldthesection\thesection
  \renewcommand{\thesection}{Appendix \oldthesection}
  \section{#1}
  \let\thesection\oldthesection
}
\renewcommand{\theenumi}{(\roman{enumi})}
\renewcommand{\labelenumi}{\theenumi}
\newcommand{\llbracket}{{[\![}}
\newcommand{\rrbracket}{{]\!]}}
\newcommand{\ignore}[1]{}
\newcommand{\mcl}{\mathcal}
\newcommand{\msf}{\mathsf}
\newcommand{\pairlft}{{\langle}}                % pair left delimiter
\newcommand{\pairrgt}{{\rangle}}                % pair right delimiter
\newcommand{\pairsep}{{,\,}}                    % pair separator
\newcommand{\pairstr}[1]{\pairlft#1\pairrgt}    % pair structure
\newcommand{\pair}[2]{\pairstr{#1\pairsep#2}}   % binary
\newcommand{\triple}[2]{\pair{#1\pairsep#2}}    %
\newcommand{\quadruple}[2]{\triple{#1\pairsep#2}} 
\newcommand{\tuple}[1]{\pairstr{#1}}   % binary
\newcommand{\sfunin}{{:}}
\newcommand{\funin}{\mathrel{\sfunin}}
\newcommand{\powerset}[1]{\wp(#1)}%{{\boldsymbol 2}^{#1}}
\newcommand{\setemp}{{\varnothing}}%{{\emptyset}}
\newcommand{\setsize}{\lstlength}
\newcommand{\mybind}[3]{#1#2.\:#3}
\newcommand{\myex}{\mybind{\exists}}
\newcommand{\myall}{\mybind{\forall}}
\newcommand{\mylam}[2]{\lambda#1.#2}
\newcommand{\nat}{\mathbb N}
\newcommand{\mywash}[2]{\setbox0=\hbox{$\m@th#1{#2}$}\wd0=0pt\box0}
\newcommand{\asig}{\Sigma}
\newcommand{\avars}{\mathcal{X}}
\newcommand{\ster}{\mit{Ter}}
\newcommand{\ter}{\bfunap{\ster}}
\newcommand{\TT}{\ter{\Sigma}{\avars}}
\newcommand{\TZ}{\ter{\Sigma}{\setemp}}
\newcommand{\atrs}{R}
\newcommand{\binap}[3]{#2\mathbin{#1}#3}
\newcommand{\funap}[2]{#1(#2)}
\newcommand{\bfunap}[3]{\funap{#1}{#2,#3}}
\newcommand{\qfunap}[5]{\funap{#1}{#2,#3,#4,#5}}
\newcommand{\pfunap}[6]{\funap{#1}{#2,#3,#4,#5,#6}}
\newcommand{\where}{\mathrel{|}}
\newcommand{\sdefdby}{{:}{=}}
\newcommand{\defdby}{\mathrel{\sdefdby}}
\newcommand{\defd}{\defdby}
\renewcommand{\implies}{\Rightarrow}
\newcommand{\aalg}{\mathcal{A}}
\newcommand{\xlb}{\scalebox{1}[.9]{$\llbracket$}}
\newcommand{\xrb}{\scalebox{1}[.9]{$\rrbracket$}}
\newcommand{\sint}{\xlb{\cdot}\xrb}
\renewcommand{\int}[1]{\xlb #1 \xrb}
\newcommand{\sorts}{\mcl{S}}
\newcommand{\autstates}{Q}
\newcommand{\astate}{q}
\newcommand{\lang}{\funap{\mcl{L}^\omega}}
\newcommand{\strff}{\msf} % stream function font
\newcommand{\lstemp}{\varepsilon} %\bot
\newcommand{\lstlength}[1]{|#1|}
\newcommand{\astr}{\sigma}
\newcommand{\bstr}{\tau}
\newcommand{\sstrcns}{\strff{:}}
\newcommand{\strcnsd}[1]{\binap{\sstrcns}}
\newcommand{\hole}{\Box}
\newcommand{\cxthole}{\hole}
\newcommand{\cpi}[2]{\mathrm{\Pi}^{#1}_{#2}}
\newcommand{\csig}[2]{\mathrm{\Sigma}^{#1}_{#2}}
\newcommand{\cdel}[2]{\mathrm{\Delta}^{#1}_{#2}}
\newcommand{\tm}{\msf{M}}
\newcommand{\tmstates}{Q}
\newcommand{\tmsig}{\Gamma}
\newcommand{\stmtrans}{\delta}
\newcommand{\tmtrans}{\bfunap{\stmtrans}}
\newcommand{\stmblank}{\Box}
\newcommand{\tmL}{L}
\newcommand{\tmR}{R}
\newcommand{\tmstart}{\astate_0}
\newcommand{\tmes}[1]{\aes_{#1}}
\newcommand{\tmesn}[1]{\aes^{\mit{n}}_{#1}}
\newcommand{\tmtrs}[1]{\atrs_{#1}}
\newcommand{\tmtrsn}[1]{\atrs^{\mit{n}}_{#1}}
\newcommand{\pto}{\rightharpoonup}
\newcommand{\pp}[2]{#1|_{#2}}
\newcommand{\reducible}{\le}
\newcommand{\domain}[1]{\textit{domain}(#1)}
\newcommand{\lencode}[1]{\langle\!\langle #1 \rangle\!\rangle}
\newcommand{\sbitstream}{\sstream}%{\mit{BitStream}}
\newcommand{\sstream}{\mit{S}}
\newcommand{\sbit}{\mit{B}}
\newcommand{\snat}{\mit{N}}
\newcommand{\ones}{\msf{ones}}
\newcommand{\zeros}{\msf{zeros}}
\newcommand{\alt}{\msf{blink}}
\newcommand{\oftype}{\mathrel{{:}{:}}}
\newcommand{\ar}[1]{\mit{ar}(#1)}
\newcommand{\scons}{:}
\newcommand{\cons}{\mathrel{\scons}}
\newcommand{\sunary}{\msf{unary}}
\newcommand{\unary}{\funap{\sunary}}
\newcommand{\siszeros}{\msf{is}_{\msf{zeros}}}
\newcommand{\iszeros}{\funap{\siszeros}}
\newcommand{\suhd}{\msf{uhd}}
\newcommand{\uhd}{\funap{\suhd}}
\newcommand{\sutl}{\msf{utl}}
\newcommand{\utl}{\funap{\sutl}}
\newcommand{\suc}{\funap{\msf{s}}}
\newcommand{\shd}{\msf{head}}
\newcommand{\hd}{\funap{\shd}}%[1]{\shd\;#1}
\newcommand{\stl}{\msf{tail}}
\newcommand{\tl}{\funap{\stl}}%[1]{\stl\;#1}
\newcommand{\szip}{\msf{zip}}
\newcommand{\zip}{\bfunap{\szip}}%[2]{\szip\;#1\;#2}
\newcommand{\szipn}[1]{\szip_{#1}}
\newcommand{\zipn}[1]{\funap{\szip_{#1}}}
\newcommand{\sdup}{\msf{dup}}
\newcommand{\dup}{\funap{\sdup}}%[1]{\sdup\;#1}
\newcommand{\sinv}{\msf{inv}}
\newcommand{\inv}{\funap{\sinv}}%[1]{\sinv\;#1}
\newcommand{\seven}{\msf{even}}
\newcommand{\even}{\funap{\seven}}
\newcommand{\snatstr}{\msf{natstr}}
\newcommand{\natstr}{\funap{\snatstr}}
\newcommand{\xbis}{\equiv}
\newcommand{\oracle}{\xi}
\newcommand{\aes}{E}
\newcommand{\sstart}{\msf{S}}
\newcommand{\srun}{\msf{run}}
\newcommand{\nb}{\nobreakdash}
\newcommand{\sdots}{\scalebox{.9}{\ldots}}
\newcommand{\sol}[2]{\int{#1}_{#2}}%{\bfunap{\ssol}}
\newcommand{\solful}[2]{\int{#1}_{#2\!,\,\text{full}}}%{\bfunap{\ssolful}}
\newcommand{\num}{\underline}
\newcommand{\satisfies}{\mid\!\hspace{-0.25pt}\equiv}
\newcommand{\problem}[2]{{\setlength{\leftmargini}{47pt}
  \vspace{-1ex}
  \begin{itemize}
    \item [{\sc Input:}] #1
    \item [{\sc Question:}] #2
  \end{itemize}\vspace{-1ex}}}
\newcommand{\tminit}[2]{\tm(#1 ; #2)}
\newcommand{\inttminit}[2]{\int{\tm}(#1 ; #2)}
\newcommand{\levi}{L\'{e}vy--Longo}
\newcommand{\levy}{\levi}
\newcommand{\ber}{Berarducci}
\newcommand{\bohm}{B\"{o}hm}
\newcommand{\slesseq}{\msf{leq}}
\newcommand{\lesseq}{\bfunap{\slesseq}}
\newcommand{\srtSetA}{C}
\newcommand{\srtSetB}{D}
\newcommand{\tiff}{\;\;\text{iff}\;\;}
\begin{document}
\newcommand{\empytset}{\setemp}

\title{%
  On the Complexity of Equivalence 
  of Specifications of Infinite Objects
}

\author{%
  J\"{o}rg Endrullis \and Dimitri Hendriks \and Rena~Bakhshi \\
  VU University Amsterdam \\
  Department of Computer Science \\
  De Boelelaan 1081a \\
  1081 HV Amsterdam \\
  The Netherlands \\
  \texttt{\{j.endrullis,\,r.d.a.hendriks,\,r.r.bakhshi\}@vu.nl}
}

\maketitle

\begin{abstract}
    We study the complexity of deciding the equality of infinite objects
  specified by systems of equations, and of infinite objects specified by $\lambda$-terms.
  For equational specifications there are several natural notions of equality:
%   depending on the intended application:
  equality in all models,
  equality of the sets of solutions, and
%   (iii) equality providing the existence of unique solutions,
  equality of normal forms for productive specifications.
  For $\lambda$-terms we investigate B\"ohm-tree equality 
  and various notions of observational equality.
  We pinpoint the complexity of each of these notions
  in the arithmetical or analytical hierarchy.
  
  We show that the complexity of deciding equality in all models subsumes the entire analytical hierarchy.
  This holds already for the most simple infinite objects, 
  viz.\ streams over % the alphabet 
  $\{0,1\}$, 
  and stands in sharp contrast to the low arithmetical $\cpi{0}{2}$-completeness 
  of equality of equationally specified streams derived in~\cite{rosu:2006}
  employing a different notion of equality.
% old:
%   of equality of equationally specified streams claimed in~\cite{rosu:2006}.
%   This discrepancy is due to a different notion of %stream 
%   models.
%   We~argue that the semantics that we consider here 
%   is more suitable for equational specifications. % of streams.

\end{abstract}

% \category{F.3.2}{Logics and Meanings of Programs}{Semantics}
% \category{F.1.3}{Computation by Abstract Devices}{Complexity Measures and Classes}
% \category{F.1.1}{Models of Computation}{}
% \category{F.4}{Mathematical Logic and Formal Languages}{} 
% 
% \keywords%
% Infinite objects, equational specifications, lambda terms, equality, semantics, complexity.

\section{Introduction}\label{sec:intro}
In the last two decades interest has grown towards infinite data,
as witnessed by the application of type theory to infinite objects~\cite{coqu:1993}, 
as well as the emergence of coalgebraic techniques for infinite~data types
like streams~\cite{rutt:2003}, 
infinitary term rewriting and infinitary lambda calculus~\cite{tere:2003}.
In functional programming, the use of %potentially 
infinite data structures
dates back to 1976, see~\cite{hend:morr::1976,frie:wise:1976}.
% such as Haskell, Miranda or Clean, 

We are concerned with the complexity of deciding the equality of % coinductive
infinite objects specified by systems of equations,
and infinite objects specified by $\lambda$-terms.
The equational specification of coinductive objects is common practice in
coalgebra, term rewriting and functional programming. 
Consider the following example from~\cite{rosu:2006}:
\begin{gather}
  \left.
  \begin{aligned}
    \zeros &= 0:\zeros &&&
    \ones &= 1:\ones\\
    \alt &= 0:1:\alt &&&
    \zip{x:\sigma}{\tau} &= x:\zip{\tau}{\sigma}
  \end{aligned}
  \hspace{0.3cm}\right\}\hspace{-0.3cm}
  \label{spec:zip:alt}
\end{gather}
This is an equational specification of three infinite lists of bits, 
and a binary function over infinite lists.%
  \footnote{%
    In Haskell there is $\szip \oftype {[a] \to [b] \to [(a,b)]}$,
    but we prefer to use `zip' %% believe `zip' is better reserved 
    for the \emph{interleaving} of lists,
    as defined by the equation in \eqref{spec:zip:alt},
    since that is what a zipper does: 
    it interleaves rows of teeth.%
  }
Then, a typical question is whether the following equality holds: 
\begin{align}
  \zip{\zeros}{\ones} = \alt \label{eq:zip:alt}
\end{align}
% holds. 
%
The answer depends on the semantics we choose to interpret the equality;
% for terms specified by systems of equations;
for example \eqref{eq:zip:alt} is not valid in the hidden models considered in~\cite{rosu:2006};
for more details we refer to Section~\ref{sec:related}.
In order to answer such a question, 
we first need to settle on the precise semantics of equality
for equational specifications;
the candidates we consider in this paper are
{%
\renewcommand{\theenumi}{\Roman{enumi}}
\renewcommand{\labelenumi}{\theenumi.}
\begin{enumerate}
  \item\label{eq:model} Equality in all models.
  \item\label{eq:solution}\label{eq:uniquex} Equality of the set of solutions.
%   \item\label{eq:unique} equality providing the existence of unique solutions,
\end{enumerate}%
}
For $\lambda$-terms we are not concerned with equality in the sense of convertibility
(which is known to be $\cpi{0}{2}$-complete, see~\cite{bare:1984}).
Instead, we are interested in \emph{behavioral} equivalence of $\lambda$-terms in all contexts,
because this corresponds to the interchangeability of expressions in purely functional languages. 
It is also closely related to referential transparency, % \cite ?
and the notion of B\"ohm trees as values of expressions including those without normal form.
% \todo{here you mean \emph{all} expressions including those without nf?}.
Thus we consider the following equivalences for $\lambda$-terms:
{%
\renewcommand{\theenumi}{\Roman{enumi}}
\renewcommand{\labelenumi}{\theenumi.}
\begin{enumerate}
  \setcounter{enumi}{2}
  \item\label{eq:obs} Observational equivalences. % $\seqn$, $\seqh$, $\seqw$.
  \item\label{eq:bohm} B\"ohm-tree equality.
\end{enumerate}%
}
The `right' choice of equivalence depends on the intended application.
The classic semantics mentioned in items~\ref{eq:model} and \ref{eq:uniquex} above, 
are defined by model-theoretic means.
From a algebraic perspective \ref{eq:model} and \ref{eq:uniquex} 
are the most natural semantics to consider for equational reasoning.
On the other hand, \ref{eq:obs} and \ref{eq:bohm}, 
are defined by means of evaluation, i.e., rewriting.
In functional programming the latter are of foremost importance,
because these take (lazy) evaluation strategies into account.
From an evaluation perspective, two terms are equal 
if they have the same observable behavior,
independent of the context they are in.
In contrast to the model-theoretic notions, this equality is invariant under the exchange of 
\emph{meaningless subterms}, that is, subterms which cannot be evaluated to a (weak) head normal form.
% The semantics~\ref{eq:model} and~\ref{eq:solution} are the native semantics
% from an algebraic perspective.

Another candidate for the semantics of equality is
{%
\renewcommand{\theenumi}{\Roman{enumi}}
\renewcommand{\labelenumi}{\theenumi.}
\begin{enumerate}
  \setcounter{enumi}{4}
  \item\label{eq:prod} Equality of normal forms for productive specifications.
\end{enumerate}%
}
\noindent
A rewrite specification is \emph{productive}~\cite{sijt:1989,endr:grab:hend:2008} 
if the terms under consideration can be fully evaluated, that is, (outermost-fair) rewriting 
yields a (possibly infinite) constructor normal form in the limit.
In such a setting, equality of the normal forms 
is a suitable semantics for the equivalence of terms. 
Deciding the equality of productive specifications 
has been shown to be a $\cpi{0}{1}$\nb-complete problem 
in~\cite{grab:endr:hend:klop:moss:2012}; 
this semantics is not considered here.

We now briefly describe the concepts \ref{eq:model}--\ref{eq:bohm}.

\paragraph{Equality in models (\ref{eq:model} and \ref{eq:solution}).}
The semantics~\ref{eq:model} (equality in all models) is useful when the objects
under consideration are specified in the same specification.
This semantics interprets the objects simultaneously in each model satisfying the specification.
This allows us to compare objects that depend on a common unknown, an underspecified object;
see~\eqref{spec:X} below for an illustrating example.
If the objects under consideration are fully specified, that is, have unique solutions,
then semantics~\ref{eq:model} coincides with semantics~\ref{eq:solution}.

In contrast to~\ref{eq:model}, semantics~\ref{eq:solution} is more suitable 
for comparing objects specified by different specifications, as we explain below.
The objects are compared via the set of their solutions (in their respective specifications).
This semantics is well-known from equations over real (or complex) numbers,
where two equations, like
\begin{align*}
  (x-1)^2 - 1 &= 0 &&\text{ and }& x^2-2x&= 0\,,
\end{align*}
are equivalent if they have the same solutions for $x$, here $\{0,2\}$.

% The concepts \ref{eq:model}--\ref{eq:uniquex} are subject to a choice of
% a class of $\asig$-algebras~$\mclass$ in which the terms are interpreted semantically.

A $\asig$-algebra $\aalg$ consists of a carrier set $A$ (the domain of $\aalg$) 
and an interpretation $\sint$ of the symbols $\asig$ occurring in the equational specification 
as functions over $A$.
Then $\aalg$ is called a \emph{model}~of~an equational specification $E$, 
which we denote by $\aalg \models E$,
if all equations of $E$ respect the interpretation; 
that is, for every equation of $E$ both sides have the same interpretation 
for every assignment of the variables.
As the domain we will typically choose (a subset of) the \emph{final coalgebra}~\cite{sang:rutt:2012}
describing the class of objects we are specifying.
The final coalgebra ensures that the model is \emph{continuous},
that is, if we have a converging sequence of terms $t_1,t_2,\ldots$ with limit~$t_\omega$,
then the sequence of interpretations $\int{t_1},\int{t_2},\ldots$ converges towards $\int{t_\omega}$.
For example, in a specification like
\begin{align}
  \ones &= 1:\ones &
  \ones' = 1:\ones'
\end{align}
the symbols $\ones$ and $\ones'$ are guaranteed to have the same interpretation.
% \todo{a more precise definition of `continuous' is required, example is not enough}
% (since they represent the same stream).
Continuity is crucial to conclude the validity of equations such as $\ones = \ones'$
which are not satisfied in non-continuous models like the \emph{initial} algebra of the specification.
%
% For the domain of the models there are natural choices:
% (a) the final coalgebra,
% or (b) non\nb-empty subsets of the final coalgebra.
% These choices give rise to different classes $\mcl{M}$ of models,
% and consequently slightly different notions of equality.

Let $E$ be a specification of $M$ and $N$.
Then $M$ is considered equal to $N$ with respect to semantics \ref{eq:model} 
if every model of $E$ is also a model of $M = N$: \hspace{2mm}
%\begin{align*}
  $\myall{\aalg}{ \;\;\aalg \models E \;\;\implies\;\; \aalg \models M = N }\;$. 
%%   \aalg \models E \implies \aalg \models M = N \text{ , for all algebras $\aalg$}
%\end{align*}
%
This notion is especially of interest if $M$ and $N$ depend on a common unknown %\todo{(see above)}
and consequently have to be interpreted simultaneously in the same model.
For example in
\begin{gather}
  \left.
  \begin{aligned}
  M &= \zip{X}{X} & \zip{x:\sigma}{\tau} &= x:\zip{\tau}{\sigma} \\
  N &= \dup{X} & \dup{x:\sigma} &= x:x:\dup{\sigma} \,
  \end{aligned}
  \hspace{0.2cm}\right\}\hspace{-.3cm}\label{spec:X}
\end{gather}
the streams $M$ and $N$ are both specified in terms of an unspecified stream~$X$.
Whatever interpretation $X$ has, $M$ and $N$ are equal,
and so they are equal in the sense of semantics~\ref{eq:model}.

On the other hand, semantics~\ref{eq:model} has the effect that an underspecified constant 
is not equivalent to its renamed copy. 
This is illustrated by the following specification:
\begin{align}
  M &= 0:\tl{M} & 
  N &= 0:\tl{N} \label{spec:copy}
\end{align}
Here $M$ and $N$ are not equal in every model;
for example, let $\int{M} = 0:0:\ldots$ and $\int{N} = 0:1:1:\ldots$.
% \begin{align*}
%   \int{M} &= 0:0:\ldots & \text{ and } & &
%   \int{N} = 0:1:1:\ldots
% \end{align*}
Nevertheless, $M$ and $N$ are equal in the sense that 
they exhibit the same behaviors. 
That is, they have the same set of solutions:
every stream starting with a zero is a solution for $M$ as well as for $N$.
Thus, $M$ and $N$ are equal with respect to the semantics~\ref{eq:solution}. 
% which defines equality via the set of solutions. 
This paves the way for comparing objects $M$ and $N$ 
that are given by separate specifications $E_M$ and $E_N$, respectively.
Note that it is not always suitable to apply semantics~\ref{eq:model}
to the union $E_M \cup E_N$ even if the specifications have disjoint signatures (using renaming),
see further Remark~\ref{rem:union}.
% Taking the union of multiple specifications may result in additional restrictions on the admissible domains,
% and thereby falsely identify the objects as identical, see further Remark~\ref{rem:union}.

%
Two objects $M$ and $N$ are equal with respect to semantics~\ref{eq:solution} if
the set of solutions of $M$ in $E_M$ coincides with the set of solutions of $N$ in $E_N$:
% \begin{align*}
  \;\;$\{\, \int{M}^{\aalg} \mid \aalg \models E_M \,\}
  \;=\; \{\, \int{N}^{\aalg} \mid \aalg \models E_N \,\}$\;.
% \end{align*}
Here the set of solutions of a constant $X$ in a specification $E_X$ 
is the set of interpretations of $X$ in all models of $E_X$.

\paragraph{Observational equivalence (\ref{eq:obs} and \ref{eq:bohm}).}
In purely functional languages based on the $\lambda$-calculus~\cite{bare:1984}, 
the evaluation of expressions is \emph{free of side effects}.
As a consequence, an expression (or subexpression) can always be replaced by its normal form, 
the so-called \emph{value} of the expression. 
This principle is known as \emph{referential transparency}.
This also implies that expressions can be substituted for each other 
if they have the same normal form.

For specifications of coinductive objects, such as infinite lists (called \emph{streams}) or infinite trees,
the value typically is an infinite term.
For example in $\ones = 1:\ones$,
the term $\ones$ has as~value (or infinite normal form) the infinite term $1:1:1:\ldots$.
However, it is not always guaranteed that a term can be fully evaluated.
During the evaluation to the (possibly infinite) normal form, we may encounter subterms that cannot be evaluated 
because these subterms do not have a head normal~form.
In $\lambda$\nb-calculus, such terms are known as \emph{meaningless terms}.
For example, consider:% the specification:
\begin{align*}
  \msf{natsx}(n) &= n : \msf{g}(0) : \msf{natsx}(n+1) & \msf{g}(n) = \msf{g}(n)\\
  \msf{natsx'}(n) &= n : \msf{g}(n) : \msf{natsx'}(n+1) 
\end{align*}
Here $\msf{g}(n)$ is meaningless for every $n$. 
Consequently, $\msf{natsx}(0)$ evaluates to a stream in which every second element is meaningless, and therefore, 
undefined. An infinite value containing undefined parts can be represented by means of B\"ohm trees~\cite{bare:1984} 
introduced in 1975 by Corrado B\"ohm.
In particular, the B\"ohm tree of $\msf{natsx}(0)$~is:
% \begin{align*}
$0:\bot:1:\bot:2:\bot:3:\bot:4:\bot:\ldots$,
% \end{align*}
where $\bot$ is a special symbol representing an undefined element.

In $\lambda$-calculus (or orthogonal higher-order rewriting), 
terms with equal B\"ohm trees can be exchanged (for each other) 
without changing the meaning of the whole expression.
In the specification above, 
$\msf{natsx}(0)$ and $\msf{natsx'}(0)$ have the same B\"ohm tree,
and hence are interchangeable. % in any functional program.
In contrast, from the model-theoretic perspective $\msf{natsx}(0)$ and $\msf{natsx'}(0)$ are different.
%However, $\msf{natsx}(0)$ and $\msf{natsx'}(0)$ are different from the model-theoretic perspective.
In every model of $\msf{natsx}(0)$ all elements at odd indexes coincide, 
whereas $\msf{natsx'}(0)$ admits models that assign different interpretations to these elements.
From a rewriting as well as functional programming perspective, these differences are irrelevant 
as they concern undefined subterms.

% While the above description gives the general idea, 
There are several notions of infinite values,
depending on what terms are considered meaningless,
including B\"ohm trees, L\'evy-Longo trees, Berarducci trees, 
$\eta$-B\"ohm trees, $\eta^\infty\!\!$-B\"ohm trees; see further~\cite{deza:giov:2001}.
The terms $\mylam{x}{xx}$ and $\mylam{x}{x(\mylam{z}{xz})}$, for instance, 
have distinct B\"ohm trees, but we may want to consider the terms
\emph{behaviorally}, or \emph{observationally equivalent} as they are $\eta$-convertible.
There are several natural concepts of \emph{observational equivalence} for $\lambda$-calculus, 
where terms are considered \emph{equivalent} if they yield the same observations in every context. 
To that end, we consider three forms of \emph{observations}: 
normal forms (nf), head normal forms (hnf), and weak head normal forms (whnf).
A \emph{head normal form} is a $\lambda$\nb-term
of the form $\mylam{x_1}{\ldots\mylam{x_n}{y N_1 \ldots N_m}}$ with $n, m \ge 0$.
A \emph{weak head normal form} is an hnf or an abstraction,
i.e., a whnf is a term of the form $x M_1 \ldots M_m$ or $\mylam{x}{M}$.
%t  
Each of the observations gives rise to an equivalence $\seqn$, $\seqh$ or $\seqw$, defined by
{\allowdisplaybreaks
\begin{align*}
  M \eqn N &\tiff (\,\myall{C}{C[M] \text{ has a nf} \tiff C[N] \text{ has a nf}}\,)\\
  M \eqh N &\tiff (\,\myall{C}{C[M] \text{ has a hnf} \tiff C[N] \text{ has a hnf}}\,)\\
\pagebreak
  M \eqw N &\tiff (\,\myall{C}{C[M] \text{ has a whnf} \tiff C[N] \text{ has a whnf}}\,)
\end{align*}}%
In fact, the equivalence $\seqn$ corresponds to $\eta$-B\"ohm trees,
and $\seqh$ to $\eta^\infty\!\!$\nb-B\"ohm trees.
For more details we refer to~\cite{deza:giov:2001}, where it is argued 
that $\seqw$ corresponds to evaluation strategies used by lazy functional languages.
If two expressions behave the same in every context, then no functional program can distinguish them. 
%  argued that $\seqw$ captures the natural concept
% of equivalence of expressions in purely functional programming languages.   

% at least when ignoring program transformations. 

% The B\"ohm tree~\cite{bohm:1975} $\BT{M}$ of a $\lambda$-term $M$ represents the result of fully (deeply) evaluating $M$,
% and contains a special symbol $\Omega$ at those positions that cannot be evaluated to a head normal form. 
% The B\"ohm tree of a $\lambda$-term $M$
% is defined by:
% If $M$ has a head normal form 
% $\mylam{x_1x_2\ldots x_n}{\,x_i \, M_1 \ldots M_k}$
% then $\BT{M} = \mylam{x_1x_2\ldots x_n}{x_i \BT{M_1} \ldots \BT{M_k}}$,
% otherwise $\BT{M} = \Omega$.

\paragraph{Contribution.}
We characterize for each of the semantics \ref{eq:model}--\ref{eq:bohm} 
the complexity of deciding the equality of terms.
For \ref{eq:model} and \ref{eq:solution} we will focus on equational specifications of bitstreams,
% for \ref{eq:prod} on term rewriting specifications (that is, oriented equations),
and for \ref{eq:obs} and \ref{eq:bohm} on behavioral equivalences of $\lambda$-terms and B\"ohm tree equality.

Each of these equivalences is undecidable, therefore we characterize their complexity
by means of the arithmetical and analytical hierarchies,
see Figure~\ref{fig:hierarchy}.
The arithmetical hierarchy classifies the complexity of a problem $P$ by the minimum 
number of quantifier alternations in first-order formulas that characterize $P$.
The analytical hierarchy extends this classification
to second-order arithmetic, then counting the alternations of set quantifiers.
%(or equivalently function) quantifiers.

{
\newcommand{\mdown}{-1mm}
\renewcommand{\small}{}
\begin{figure}[!pb]
  \vspace{-2ex}
  \begin{center}
  \scalebox{0.9}{
  \begin{tikzpicture}[node distance=12mm,inner sep=.5mm,thick,>=stealth]
    \node (d10) {$ \cpi{1}{0} = \cdel{1}{0} = \csig{1}{0}$};
    \node (p11) [above left of=d10,yshift=\mdown] {$\cpi{1}{1}$}; \draw [->] (d10) -- (p11);
    \node (s11) [above right of=d10,yshift=\mdown] {$\csig{1}{1}$}; \draw [->] (d10) -- (s11);
    \node (d12) [above right of=p11,yshift=\mdown] {$\cdel{1}{2}$}; \draw [->] (p11) -- (d12); \draw [->] (s11) -- (d12);
    \node (p12) [above left of=d12,yshift=\mdown] {${\cpi{1}{2}}$}; \draw [->] (d12) -- (p12);
    \node (s12) [above right of=d12,yshift=\mdown] {$\csig{1}{2}$}; \draw [->] (d12) -- (s12);
    \node (d13) [above right of=p12,yshift=\mdown] {$\vdots$}; \draw [->] (p12) -- (d13); \draw [->] (s12) -- (d13);
  
    \begin{scope}[every node/.style={rectangle,outer sep=2mm,inner sep=.5mm}]
    \node [anchor=west] at (d10.east) {{\small \emph{arithmetic predicates $\Phi$}}};
    \node [anchor=west] at (s11.east) {{\small $\myex{X}{\Phi(X)}$}};
    \node [anchor=east] (fp11) at (p11.west) {{\small $\myall{X}{\Phi(X)}$}};
    \node [anchor=west] at (s12.east) {{\small $\myex{X}{\myall{Y}{\Phi(X,Y)}}$}};
    \node [anchor=east] at (p12.west) {{\small $\myall{X}{\myex{Y}{\Phi(X,Y)}}$}};
    \end{scope}
    
    \node (d00) [below of=d10,node distance=35mm] {$ \cpi{0}{0} = \cdel{0}{0} = \csig{0}{0}$};
    \node (p01) [above left of=d00,yshift=\mdown] {$\cpi{0}{1}$}; \draw [->] (d00) -- (p01);
    \node (s01) [above right of=d00,yshift=\mdown] {$\csig{0}{1}$}; \draw [->] (d00) -- (s01);
    \node (d02) [above right of=p01,yshift=\mdown] {$\cdel{0}{2}$}; \draw [->] (p01) -- (d02); \draw [->] (s01) -- (d02);
    \node (p02) [above left of=d02,yshift=\mdown] {${\cpi{0}{2}}$}; \draw [->] (d02) -- (p02);
    \node (s02) [above right of=d02,yshift=\mdown] {$\csig{0}{2}$}; \draw [->] (d02) -- (s02);
    \node (d03) [above right of=p02,yshift=\mdown] {$\vdots$}; \draw [->] (p02) -- (d03); \draw [->] (s02) -- (d03);
  
    \begin{scope}[every node/.style={rectangle,outer sep=2mm,inner sep=.5mm}]
    \node [anchor=west] at (d00.east) {{\small \emph{decidable predicates $D$}}};
    \node [anchor=west] (fs01) at (s01.east) {{\small $\myex{x}{D(x)}$}};
    \node [anchor=east] at (p01.west) {{\small $\myall{x}{D(x)}$}};
    \node [anchor=west] at (s02.east) {{\small $\myex{x}{\myall{y}{D(x,y)}}$}};
    \node [anchor=east] (fp02) at (p02.west) {{\small $\myall{x}{\myex{y}{D(x,y)}}$}};
    \end{scope}

    \begin{scope}[every node/.style={rectangle,outer sep=2mm,inner sep=.5mm}]
      \node [anchor=south east,yshift=-4mm] at (fp11.north east) {{\small \emph{well-foundedness}}};
      \node [anchor=south east,yshift=-4mm] at (fp02.north east) {{\small \emph{totality}}};
      \node [anchor=south west,yshift=-4mm] at (fs01.north west) {{\small \emph{recursively enumerable\hspace*{.5cm}}}};
    \end{scope}

    \begin{scope}[every node/.style={circle,drop shadow,fill=blue!10,draw,outer sep=1mm,inner sep=.5mm},node distance=5mm]
      \node [above of=d13] {A};
      \node [anchor=south,xshift=-.5mm,yshift=.0mm] at (p12.north) {C};
      \node [anchor=south,xshift=-.5mm,yshift=.0mm] at (p11.north) {B};
      \node [anchor=north,xshift=0mm,yshift=.5mm] at (d12.south) {E};
      \node [anchor=south,xshift=-.5mm,yshift=.0mm] at (p02.north) {D};
    \end{scope}
  \end{tikzpicture}}
  \end{center}
  \caption{Arithmetical (bottom) and analytical hierarchy (top).}
  \label{fig:hierarchy}
\end{figure}
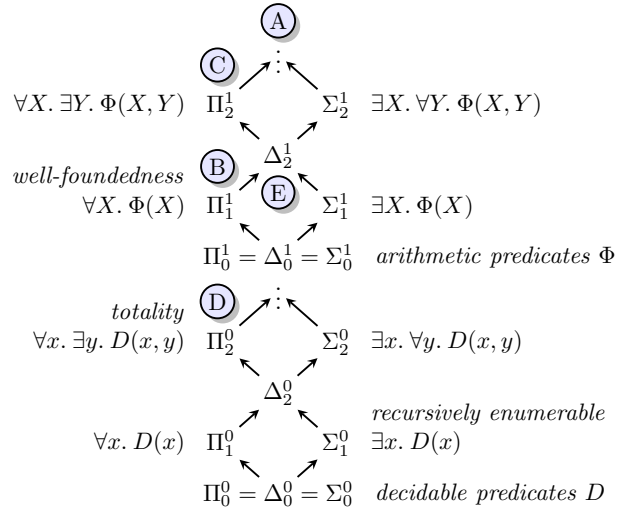
}
%
% \vspace{.5ex}
% \begin{tikzpicture}[baseline=-1ex]
%   \node [circle,drop shadow,fill=blue!10,draw,outer sep=1mm,inner sep=.5mm] {A};
% \end{tikzpicture}\,
\picitem{A} It turns out that the complexities of deciding the equality in all models as well as the  
equality of the set of solutions  
subsume the entire arithmetical and analytical hierarchy
when the domain of the models is the set of all streams,
so-called \emph{full} models, see Theorems~\ref{thm:models:full} and~\ref{thm:solutions:full}.
The idea of the proof is as follows.
We translate formulas of the analytical hierarchy into stream specifications
by representing $\forall$ set quantifiers by equations with variables.
This simulates a quantification over all streams as the models are full,
and the equations have to hold for all assignments of the variables.
The $\exists$ set quantifiers are eliminated in favor of Skolem functions (here stream functions).
The interpretation of the functions is determined by the model,
and the question whether there exists a model corresponds 
to an existential quantification over all Skolem functions.

\picitem{B}\!{\&}\!\!\picitem{C}
If we admit models whose domain does not contain all streams,
then the complexity of deciding equality drops to the level $\cpi{1}{1}$ of the analytical hierarchy
for semantics~\ref{eq:model}, and to $\cpi{1}{2}$ for~\ref{eq:solution},
see Theorems~\ref{thm:models}, and~\ref{thm:solutions}.
The reason is that equations with variables no longer have to hold for all streams,
but only for the streams that exist in the model. 
By the L\"owenheim-Skolem theorem we obtain that if
there exists a model, then there exists a countable model:
from an uncountable model we construct
a countable one, by taking the finitely many streams ``of~interest'' 
and closing them under all functions in the model. 
\pagebreak
Thus, it suffices to quantify over countable models
for which one single set quantifier is enough.

The aforementioned results already hold for bitstreams, % (streams over $0$ and $1$),
one of the simplest coinductive objects,
and thereby can serve as a lower bound on the hardness of the equality problem for other coinductive objects.
We also study the behavioral semantics from~\cite{rosu:2006}.
We find that if behavioral equivalence $\equiv$ is required to be a congruence,
like for example in~\cite{bido:henn:kurz:2003},
then the complexity of deciding behavioral equivalence is
catapulted out of the arithmetical hierarchy,
to the level $\cpi{1}{1}$ of the analytical hierarchy, see Theorem~\ref{thm:behavioral:models}.
% The same applies to 
Likewise so for the
behavioural equivalence for specifications of 
streams of natural numbers, relaxing the congruence requirement,
see Theorem~\ref{thm:confusion}.

% \picitem{C} For the equality of the set of solutions,
% the complexity drops to $\cpi{1}{2}$ instead of $\cpi{1}{1}$
% when considering all (also non-full) models.
% Thereby the complexity resides at a level in the analytical hierarchy,
% strictly above the complexity of deciding well-foundedness.

\picitem{D} For the equivalences on $\lambda$\nb-terms, we show that deciding the 
\bohm{} tree and \levi{} tree equality, as well as
the observational equivalences $\seqn$, $\seqh$ and $\seqw$ 
are $\cpi{0}{2}$-complete problems, see Theorem~\ref{thm:lambda}.
(It is clear that when an object is given by a rewrite system, 
like the $\lambda$\nb-calculus, then the complexity resides in the arithmetical hierarchy,
since it suffices to quantify over a number steps to evaluate parts of the object.)

\picitem{E} Finally, we consider the complexity of unique solutions.
A term $s$ has a unique solution within a specification $E$ if 
%there is in 
there exists models of $E$, and in all models of $E$, 
%there exists models of $E$ such that in all these models of $E$, 
$s$ has the same interpretation.
The problem of deciding unique solvability in all full models 
again subsumes the analytical hierarchy, see \short{Theorem~\ref{thm:all:full}}\extended{Theorems~\ref{thm:atleast:full}, \ref{thm:unique:full} and~\ref{thm:atmost:full}}.
When also considering the non\nb-full models, we find that the problem
is $\cpi{1}{1}$- and $\csig{1}{1}$-hard, but is strictly contained in $\cdel{1}{2}$,
see \short{Theorem~\ref{thm:unique}}\extended{Theorems~\ref{thm:atmost}, \ref{thm:atleast} and~\ref{thm:unique}}.
  
%   \item\label{eq:model} equality in all models,
%   \item\label{eq:solution} equality of the set of solutions,
%   \item\label{eq:unique} equality providing the existence of unique solutions,
%   \item\label{eq:prod} equality providing productivity and
%   \item\label{eq:bohm} B\"ohm-tree equality.

\paragraph*{Outline.} We first discuss related work.
We formally introduce bitstream specifications and stream models  
in Section~\ref{sec:specs}, and Turing machines with oracles %(in the form of equational specifications) 
in Section~\ref{sec:turing}. We recall the basic complexity-related notions in Section~\ref{sec:levels}.
We use these concepts in Section~\ref{sec:models:theoretic}~to derive the complexity results for the
model-theoretic notions.  
In Section \ref{sec:rosu} we consider a different notion of models, 
namely the behavioural semantics as in~\cite{rosu:2006}. 
Finally, we investigate the observational equivalences of $\lambda$-terms in Section~\ref{sec:lambda}.

\section{Related Work}\label{sec:related}

The complexity of the equality of streams specified by systems of equations
has been investigated in the ICFP paper~\cite[Corollary 1]{rosu:2006}; we cite:
\emph{Proving equality on streams defined equationally is a $\cpi{0}{2}$-complete problem}.
% \begin{quote}\vspace{.5ex}\noindent
%   \emph{Proving equality on streams defined equationally\\ is a $\cpi{0}{2}$-complete problem.}
%   \hfill \cite[Corollary 1]{rosu:2006}\vspace{.5ex}
% \end{quote}
This result is based on a behavioral notion of stream models~\cite{rosu:2000}.
We briefly summarize the main characteristics of these models:
\begin{enumerate}
  \item Every stream $\sigma \in \{0,1\}^\omega$ 
    can have multiple representatives in the model (known as \emph{confusion}).
  \item For every equation $\ell = r$ it is required that
    the interpretations $\int{\ell}$ and $\int{r}$ are \emph{behaviorally equivalent},
    denoted by $\equiv$, that is, equality under all $\int{\shd}(\int{\stl}^n(\hole))$ experiments.
    In particular, it is not required that $\int{\ell} = \int{r}$.
  \item Behavioral equivalence $\equiv$ is not required to be a congruence.
\end{enumerate}
Behavioral models have a wide range of applications,
for example for modeling computations with hidden states,
or capturing certain forms of nondeterminism.
For these applications it is often intended that the semantics is not preserved under equational reasoning.
For example, consider the following specification from~\cite{rosu:2006}
\newcommand{\spush}{\msf{push}}
\newcommand{\push}{\funap{\spush}}
\begin{align*}
  \tl{\push{\sigma}} = \sigma\;,
\end{align*}
specifying a function $\spush$ that prefixes an element to the argument stream,  
while leaving unspecified which element.
In the behavioral models we obtain a restricted form of nondeterminism~\cite{wali:meld:2001},
for example, the following is not behaviorally satisfied:
\begin{align}
  \push{\tl{\push{\sigma}}} = \push{\sigma}\;,
  \label{eq:push}
\end{align}
although derivable by pure equational reasoning.
For a nondeterministic operation, 
it is of course desirable that \eqref{eq:push} does not hold.

However, for function definitions employing pattern matching,
behavioral models sometimes yield unexpected results; consider:
\begin{align}
  \ones &= 1:\ones &
  \msf{f}(x:\sigma) &= \sigma \label{spec:ones:f}
\end{align}
Now, there are models that satisfy the specification~\eqref{spec:ones:f},
but not~\eqref{eq:f:ones}:
\begin{align}
  \msf{f}(\ones) = \ones\,
  \label{eq:f:ones}
\end{align}
In these models we have that $\int{\ones} \ne \int{1\,{:}\,\ones}$ 
and, at the same time, that $\int{\ones}$
cannot be constructed by the stream constructor~$\int{{:}}$,
that is, $\int{\ones} \ne \int{{:}}(x,s)$ for all $x \in \{0,1\}$ and~$s \in A_{\sbitstream}$.
Consequently, the interpretation $\int{\msf{f}}(\int{\ones})$ can be arbitrary.

\extended{%
\begin{remark}
  We construct a behavioral model for specification~\ref{spec:ones:f} 
  in which $\msf{f}(\ones) = \ones\,$ is not satisfied:
  {\allowdisplaybreaks
  \renewcommand{\u}{}%{\underline}
  \begin{gather*}
    A_{\sbitstream} = \{o\} \cup \{\u{w} \where w \in \{0,1\}^\omega\} \quad \text{(the domain of the model)}\\[0ex]
    \begin{aligned}
  %    \\[0ex]
    \int{\shd}(o) &= 1 &
    \int{\shd}(\u{0w}) &= 0 &
    \int{\shd}(\u{1w}) &= 1 \\
    \int{\stl}(o) &= o &
    \int{\stl}(\u{0w}) &= \u{w} &
    \int{\stl}(\u{1w}) &= \u{w} \\[0ex]
    \int{{:}}(0,o) &= \u{0\,(1^\omega)} &
    \int{{:}}(0,w) &= \u{0w} &
    \int{{:}}(1,w) &= \u{1w} \\
    \int{{:}}(1,o) &= \u{1^\omega} & \int{\ones} &= o\\[0ex]
    \int{\msf{f}}(o) &= \u{0^\omega} &
    \int{\msf{f}}(\u{0w}) &= \u{w} &
    \int{\msf{f}}(\u{1w}) &= \u{w}
    \end{aligned}
  \end{gather*}
  }%
  %\todo{In our opinion, equational reasoning ought to be sound 
  %in every sensible model of equational specifications.}
  This model illustrates that the requirements of~\cite{rosu:2006} 
  do not ensure that every element of the stream domain $A_{\sbitstream}$ can be constructed by the stream constructor $\int{{:}}$.
  For example, the element $o$ represents the stream of ones, 
  but $o \ne \int{{:}}(a,b)$ for all $a \in \{0,1\}$ and~$b \in A_{\sbitstream}$.
  In general, $\int{M}$ and $\int{\hd{M} : \tl{M}}$
  need not be the same element of the domain, although they are behaviorally equivalent.
\end{remark}}

Thus, behavioral reasoning is typically not sound for behavioral models,
and therefore the corresponding specifications are usually referred to as \emph{behavioral specifications}.
In this paper we are interested in specifications where equational reasoning is sound.

\begin{remark}\label{counterexample}

We construct a behavioral model $\pair{A}{\sint{}}$ in the sense of~\cite{rosu:2006} 
where specification~\eqref{spec:zip:alt}
is behaviorally satisfied but the goal equation $\zip{\zeros}{\ones} = \alt$ is not.
The model thereby forms a counterexample to \cite[Example 2]{rosu:2006}.

We define the domain by $A_{\sbit} = \{0,1\}$ and
\begin{gather*}
  A_{\sbitstream} = \{z_w \mid w \in \{0,1\}^*\} \cup \{o_w \mid w \in \{0,1\}^*\} \cup \{0,1\}^\omega
\end{gather*}
Here $z_\lstemp$ and $o_\lstemp$ are alternative representations of $0^\omega$ and $1^\omega$,
respectively, and $z_w$ and $o_w$ have an additional finite prefix $w \in \{0,1\}^*$.
We define the interpretations $\sint{}$
for every $a \in \{0,1\}$, $\sigma \in \{0,1\}^\omega$, $w,v \in \{0,1\}^*$ and $x,y \in A_{\sbitstream}$.
For $\int{\shd}$ and $\int{\stl}$ we define:
\begin{align*}
  \int{\shd}(z_{\varepsilon}) &= 0 &
  \int{\shd}(o_{\varepsilon}) &= 1 &
  \int{\shd}(a\sigma) &= a \\
  \int{\stl}(z_\varepsilon) &= z_\varepsilon &
  \int{\stl}(o_\varepsilon) &= o_\varepsilon &
  \int{\stl}(a\sigma) &= \sigma \\  
  \int{\shd}(z_{aw}) &= a &
  \int{\shd}(o_{aw}) &= a \\
  \int{\stl}(z_{aw}) &= z_w &
  \int{\stl}(o_{aw}) &= o_w 
\end{align*}
We define the interpretation $\int{{:}}$ of the stream constructor by:
\begin{align*}
  \int{{:}}(a,z_w) &= z_{aw} &
  \int{{:}}(a,o_w) &= o_{aw} &
  \int{{:}}(a,\sigma) &= a\sigma
\end{align*}
Note that the elements $z_\lstemp$ and $o_\lstemp$ cannot be constructed by $\int{{:}}$.

We interpret $\int{\zeros}$, $\int{\ones}$ and $\int{\alt}$ as follows: 
\begin{align*}
  \int{\zeros} &= z_\varepsilon &
  \int{\ones} &= o_\varepsilon &
  \int{\alt} &= (01)^\omega &
\end{align*}
We define an auxiliary function $\Join$ that (similar to zip) 
interleaves the elements of finite or infinite words; 
for $u_1,u_2 \in \{0,1\}^{\le \omega} = \{0,1\}^* \cup \{0,1\}^\omega$, let
% \begin{align*}
  $au_1 \Join u_2 = a(u_2 \Join u_1)$ and
  $\lstemp \Join u_2 = u_2$.
% \end{align*}
%
We now define the interpretation $\int{\szip}$ of the symbol $\szip$ as follows:
\begin{align*}
  \int{\szip}(z_w,o_v) &= 
  \begin{cases}
    (w \Join v) 0^\omega &\text{for $|w| = |v|$}\\
    (w\;0^\omega) \Join (v\;1^\omega) &\text{otherwise}
  \end{cases}\\
  \int{\szip}(o_w,z_v) &= 
  \begin{cases}
    (w \Join v) 0^\omega &\text{for $|w| = |v|+1$}\\
    (w\;1^\omega) \Join (v\;0^\omega) &\text{otherwise}
  \end{cases}
\end{align*}
\newcommand{\emb}{\funap{\mit{emb}}}%
and in all other cases, we define
% \begin{align*}
$\int{\szip}(x,y) = \emb{x} \Join \emb{y}$
% \end{align*}
where 
$\emb{z_w} = w 0^\omega$,\,
$\emb{o_w} = w 1^\omega$, and
$\emb{\sigma} = \sigma$.

\short{%
  It is straightforward to check that specification~\ref{spec:zip:alt} is behaviorally satisfied,
  whereas
  $\int{\zip{\zeros}{\ones}} = 0^\omega \not\equiv (01)^\omega = \int{\alt}$.
}

\extended{
We check that the specification~\ref{spec:zip:alt} is behaviorally satisfied:
\begin{align*}
  \int{\zeros} = z_\varepsilon &\equiv z_0 = \int{0:\zeros}\\
  \int{\ones} = o_\varepsilon &\equiv o_1 = \int{1:\ones}\\
  \int{\alt} = (01)^\omega &= 01 (01)^\omega = \int{0:1:\alt}
\end{align*}
Observe that $z_\lstemp$ and $z_0$ (and likewise $o_\lstemp$ and $o_1$) are behaviorally equivalent. 
For the $\szip$ equation we distinguish the following cases:
\begin{enumerate}
  \item\label{pr:i} If $\int{\sigma} = z_w$, $\int{\tau} = o_v$, we have:
    \begin{align*}
      \int{\zip{x:\sigma}{\tau}} &= 
      \begin{cases}
        (xw \Join v) 0^\omega &\text{if $|xw| = |v|$}\\
        (xw\;0^\omega) \Join (v\;1^\omega) &\text{otherwise}
      \end{cases}\\
      \int{x:\zip{\tau}{\sigma}} &=
      \begin{cases}
        x (v \Join w) 0^\omega &\text{if $|v| = |w|+1$}\\
        x ((v\;1^\omega) \Join (w\;0^\omega)) &\text{otherwise}
      \end{cases}
    \end{align*}
    The equality $\int{\zip{x:\sigma}{\tau}} = \int{x:\zip{\tau}{\sigma}}$
    follows by the definition of $\Join$
    together with $|xw| = |w|+1$.

  \item The case $\int{\sigma} = o_v$, $\int{\tau} = z_w$ is analogous to \ref{pr:i}.
    
  \item If $\int{\sigma},\int{\tau} \in \{0,1\}^\omega$, we have:
      \begin{align*}
        \int{\zip{x:\sigma}{\tau}} &= (x\int{\sigma}) \Join \int{\tau} 
        \\ &= x(\int{\tau} \Join \int{\sigma}) 
        = \int{x:\zip{\tau}{\sigma}}
      \end{align*}

  \item\label{pr:iv} If $\int{\sigma} = z_w$, $\int{\tau} \in \{0,1\}^\omega$, then:
      \begin{align*}
        \int{\zip{x:\sigma}{\tau}} &= (xw\;0^\omega) \Join \int{\tau} \\
        &= x (\int{\tau} \Join (w\;0^\omega)) = \int{x:\zip{\tau}{\sigma}} 
      \end{align*}
  \item The case $\int{\sigma} = o_w$, $\int{\tau} \in \{0,1\}^\omega$ is analogous to~\ref{pr:iv}.
  \item The case $\int{\sigma} \in \{0,1\}^\omega$, $\int{\tau} = z_w$ is analogous to~\ref{pr:iv}.
  \item The case $\int{\sigma} \in \{0,1\}^\omega$, $\int{\tau} = o_w$ is analogous to~\ref{pr:iv}.
  \item\label{pr:viii} If $\int{\sigma} = z_w$, $\int{\tau} = z_v$, then:
      \begin{align*}
      \int{\zip{x:\sigma}{\tau}} &= (xw\;0^\omega) \Join (v\;0^\omega) \\
                   &= x((v\;0^\omega) \Join (w\;0^\omega)) = \int{x:\zip{\tau}{\sigma}} 
      \end{align*}
  \item The case $\int{\sigma} = o_w$, $\int{\tau} = o_v$ is analogous to~\ref{pr:viii}.

\end{enumerate}
Hence $\pair{A}{\sint}$ behaviorally satisfies Specification~\ref{spec:zip:alt}. 
However:
\begin{align*}
  \int{\zip{\zeros}{\ones}} &= \int{\szip}(\int{\zeros},\int{\ones}) \\
  &= \int{\szip}(z_\varepsilon,o_\varepsilon) = 0^\omega
\end{align*}
whereas
\begin{align*}
  \int{\alt} &= (01)^\omega
\end{align*}
Consequently, the equation $\zip{\zeros}{\ones} = \alt$ is not behaviorally satisfied in this model. 
}

% \bibliography{main}
% \bibliographystyle{abbrv}
% 
% \end{document}

  %
%   For a similar reason, also the proof of $\cpi{0}{2}$-hardness (\cite[Theorem 2]{rosu:2006})
%   is flawed in this setting.
\end{remark}

%
% While this could be circumvented by imposing additional requirements
% on the interpretation $\int{{\cons}}$ of the stream constructor,
% it would not solve the general problem that behavioral reasoning fails to be sound.

% Moreover, the word `defined' is confusing since the $\cpi{0}{2}$-hardness
% result depends on the use of specifications that do \emph{not} uniquely
% define a stream.

The counterexample in Remark~\ref{counterexample} employs the fact that
the behavioral models of~\cite{rosu:2006} do not require
that every stream can be constructed by the (interpretation of the) stream constructor $\int{{:}}$.
As a consequence, the equation $\zip{x:\sigma}{\tau} = x:\zip{\tau}{\sigma}$
does not fully define $\int{\szip}$;
it defines $\int{\szip}(\sigma,\tau)$
only for those arguments $\sigma$ that can be constructed by $\int{{:}}$.

The example illustrates
that the behavioral models of~\cite{rosu:2006} do not go along with 
function definitions using pattern matching.
To fully define $\int{\szip}$, we can specify it using the stream destructors:
$\hd{\zip{\sigma}{\tau}} = \hd{\sigma}$, and
$\tl{\zip{\sigma}{\tau}} = \zip{\tau}{\tl{\sigma}}$.
This  change of the specification format resolves the problem.

Alternatively, keeping the specification format,
we can adapt the notion of models.
To reestablish soundness of equational reasoning one can
(i)~exclude confusion
or~(ii) require that $\equiv$ is a congruence.
% and
% (iii) restricting to \emph{continuous} models, 
% see~\cite{kahr:2010,endr:grab:hend:klop:vrij:2009}.
Note that the common models of streams are free of confusion:
final coalgebras~\cite{rutt:2005b}, 
one-sided infinite words $A^\omega$, 
and the function space $\nat \to A$.
% 
% The~usual approaches of modeling streams exclude this phenomenon.
%
%
% As all common models of streams are free of confusion,
% we think that the models used in~\cite{rosu:2006} are rather unusual. % models of streams.
%
In hidden algebras~\cite{malc:1997}, confusion is often allowed
but its negative effects are prevented by restricting to
behavioral models~\cite{bido:henn:kurz:2003}, in which behavioral equivalence
is a congruence: %,
$s \xbis t \implies f(\ldots,s,\ldots) \xbis f(\ldots,t,\ldots)$.
\extended{
Then equational reasoning is sound with respect to behavioral equality,
and for a specification like
$\ones = 1:\ones$,
$\ones' = 1:\ones'$,
the equality $\msf{g}(\ones) = \msf{g}(\ones')$ holds behaviorally.
}%
%
% From a model-theoretic perspective, % for specifications of infinite structures,
% this property is typically ensured by requiring the models to be continuous.
%
% Using equality from~\cite{rosu:2006}, 
% this equation and even $\msf{g}(\ones) = \msf{g}(1:\ones)$ does not hold (even behaviourally).

% Maybe, instead of \emph{stream models}, they should be called 
% \emph{models of black boxes that produce infinite streams}.
% This of course raises the philosophical question how stream functions
% can distinguish two black boxes with identical behavior;
% then the boxes are not that black after all.

Our results show that when $\xbis$ is required to be a congruence (or confusion is eliminated),
then the complexity of the equality of bitstreams that are specified equationally
jumps from the low level $\cpi{0}{2}$ of the arithmetical hierarchy to
the level $\cpi{1}{1}$ of the analytical hierarchy, thereby exceeding
the arithmetical hierarchy. 
Moreover, we show that even for behavioral specifications with confusion
(as in~\cite{rosu:2006}), equality of \emph{streams of natural numbers} is $\cpi{1}{1}$-complete.
Consequently, the results of~\cite{rosu:2006} are valid only for bitstreams 
in combination with the behavioral equality discussed above.
For general behavioral specifications (not the special case of stream specifications), the $\cpi{1}{1}$-completeness 
has been shown in~\cite{buss:rosu:2000}.
%
% Finally, we consider various other notions of equality that have been neglected in~\cite{rosu:2006}.

Term rewriting systems are closely related to equational specifications.
The complexity of deciding various standard properties of term rewriting systems, such as
productivity, termination and confluence (Church--Rosser),
has been investigated in~\cite{endr:geuv:simo:zant:2011,endr:grab:hend:2009}. % simo:2009,endr:geuv:zant:2009,.

% \begin{remark}\label{rem:fix} 
%   Both the counterexample to~\cite[Example 2]{rosu:2006} and the proof of~\cite[Theorem 2]{rosu:2006}
%   can also be repaired by requiring that every stream in $A_s$ is constructed by the stream constructor.
%   More precisely, for every element $a \in A_s$ there are elements 
%   $b \in \{0,1\}$ and $c \in A_s$ such that $a = \int{{:}}(b,c)$.
%   However, this condition would not suffice to make equational reasoning sound.
% \end{remark} 

\section{Bitstream Specifications}\label{sec:specs}

We will focus mainly on \emph{streams}, one-sided infinite sequences of symbols,
the prime example of coinductive structures.
%
% Infinite sequences arise in several fields, ranging from formal language 
% %Infinite sequences of symbols (streams) arise in several fields, ranging from formal language 
% theory and mathematics (dynamical systems~\cite{holm:1996}, fractal theory~\cite{peit:juer:saup:2004}, 
% number theory~\cite{pyth:2002}) to physics.
%
There are various ways of introducing streams:
as functions $\nat \to A$ mapping an index $n$ to the $n$-th element of the stream,
as final coalgebras  over the functor $X \mapsto A\times X$,
using coinductive types~\cite{geuv:1992},
or observational models~\cite{bido:henn:kurz:2003}.
All these definitions are equivalent in the sense
that the resulting coalgebras are isomorphic. 

For the model-theoretic semantics of equality,
we will focus on specifications of bitstreams, 
streams % infinite sequences 
over the alphabet $\{0,1\}$.
% Bitstreams are the prime example of coinductive structures.
Due to their simplicity, bitstreams can be embedded 
in almost every non-trivial coinductive structure. 
Specifications of bitstreams are inherently sorted, 
with a sort $\sbit$ for bits, and a sort $\sbitstream$ for bitstreams.
To this end, we introduce sorted terms.
Let $\sorts$ be a set of sorts; 
an \emph{$\sorts$-sorted set} $\srtSetA$ is a family of sets $\{{\srtSetA}_s\}_{s\in \sorts}$.
Let $\srtSetA$ and $\srtSetB$ be $\sorts$-sorted sets.
Then an $\sorts$-sorted \emph{function} (or \emph{map}) from $\srtSetA$ to $\srtSetB$
is a function $f \funin \srtSetA \to \srtSetB$ such that
$\funap{f}{{\srtSetA}_s} \subseteq {\srtSetB}_s$ for all $s \in \sorts$, 
that is, a function that respects the sorts.

An $\sorts$-sorted signature $\asig$ is a set of symbols $f \in \asig$,
each having a type $(s_1,\ldots,s_n,s) \in \sorts^{n+1}$,
denoted by $f \oftype s_1 \times \ldots \times s_n \to s$, where $n$ is the arity of $f$.
Let~$\avars$ be an $\sorts$-sorted set of \emph{variables}.
The $\sorts$-sorted set of \emph{terms} $\TT$ is inductively defined by:
\begin{itemize}
  \item $\avars_s \subseteq \TT_s$ for every $s \in \sorts$, and 
  \item $f(t_1,\ldots,t_n) \in \TT_s$ if $f \oftype s_1 \times \ldots \times s_n \to s$,
    $f \in \asig$, and $t_1 \in \TT_{s_1},\ldots,t_n \in \TT_{s_n}$.
%   \item $f(t_1,\ldots,t_{\ar{f}}) \in \TT_s$ if $f \oftype s_1 \times \ldots \times s_{\ar{f}} \to s$,
%     $f \in \asig$ and $t_1 \in \TT_{s_1},\ldots,t_{\ar{f}} \in \TT_{s_{\ar{f}}}$.
\end{itemize}
An $\sorts$-sorted \emph{equation $\ell = r$} consists of terms $\ell,r \in \TT_s \times \TT_s$ for some $s\in\sorts$.

\begin{definition}\normalfont
  A \emph{bitstream signature $\asig$} is an $\sorts$-sorted signature with $\sorts = \{\sbit,\sbitstream\}$
  such that $0,1,\,{:} \in \asig$ where
  $0,1 \oftype \sbit$ are called \emph{bits}, and
  the infix symbol `${:}$' of type $\sbit \times \sbitstream \to \sbitstream$ is the \emph{stream constructor}.
  An \emph{equational bitstream specification} over $\asig$ is a \emph{finite} set $\aes$ 
  of equations over $\asig$.
\end{definition}

From now on we let $\sorts = \{\sbit,\sbitstream\}$.

\begin{definition}\normalfont
  A \emph{stream algebra} $\aalg = \pair{A}{\sint}$ consists of:
  \begin{enumerate}
    \item an $\sorts$-sorted domain $A$; $A_{\sbit} = \{0,1\}$ and
          $\setemp \ne A_{\sbitstream} \subseteq \{0,1\}^\nat$,
    \item for every $f \oftype s_1 \times \ldots \times s_n \to s \in \asig$ 
          an \emph{interpretation}
          $\int{f} : A_{s_1} \times \ldots A_{s_n} \to A_s\,,$
    \item ${:} \in \asig$ with $\int{{\cons}}(x,\astr) = x \cons \astr$,
    \item\label{clause:01} $0,1\in\asig$ with $\int{0} = 0$ and $\int{1} = 1$.
  \end{enumerate} 
\end{definition}
The clause~\ref{clause:01} of the definition is optional; in fact, the results in this paper are independent of its presence.
We have included it since the models where $\int{0} = \int{1}$
are trivial, in the sense that then all bitstreams are equal.

\begin{definition}\normalfont
  Let $\aalg = \pair{A}{\sint}$ be a stream algebra.
  Moreover, let $\alpha : \avars \to A$ be a variable assignment.
  As usual, the \emph{interpretation of terms} $\sint^\aalg_{\alpha} : \TT \to A$ is defined inductively by:
  \begin{align*}
    \int{x}^\aalg_\alpha &= \alpha(x) &
    \int{f(t_1,\ldots,t_n)}^\aalg_\alpha &= \int{f}(\int{t_1}^\aalg_\alpha,\ldots,\int{t_n}^\aalg_\alpha)
  \end{align*}
  Then~$\aalg$ is called a \emph{(stream) model} of $\aes$
  if $\int{\ell}_\alpha = \int{r}_\alpha$ for every $\ell = r \in \aes$ and $\alpha : \avars \to A$.
  We write $\sint{}_\alpha$ for $\sint{}^\aalg_\alpha$ whenever $\aalg$ is clear from the context.
  For ground terms $t \in \ter{\asig}{\setemp}$, we have $\int{t}_\alpha = \int{t}_\beta$ 
  for all assignments $\alpha,\beta$;   we then write $\int{t}$ for short.
\end{definition}

Thus, we interpret function symbols as functions over bits and bitstreams as imposed by their sort.
In particular, terms of type $\sbitstream$ are interpreted as bitstreams.
In contrast to~\cite{rosu:2006}, our setup does not allow for \emph{confusion} in the models.
Recall that confusion means that the models can contain multiple representatives for the same stream.

\begin{definition}
  We say that a model $\aalg = \pair{A}{\sint}$ is \emph{full} if its domain contains all bitstreams, $A_{\sbitstream} = \{0,1\}^\nat$.
\end{definition}

\section{Turing Machines as Equational Specifications}\label{sec:turing}

We now define a set of standard equations (for bitstream specifications) that will be used throughout
 this paper:
\begin{gather}
  \hspace{-.1cm}
  \left.\begin{aligned}
    \zeros &= 0:\zeros \quad\;\; \ones = 1:\ones\\
%     \szipn{0} &= \zeros \\
    \zipn{1}{\tau} &= \tau \\
    \zipn{2}{x:\tau_1,\tau_2} &= x:\zipn{2}{\tau_2,\tau_1} \\
    \hspace{-.1cm}\zipn{n}{\tau_1,\sdots,\tau_n} &= \zipn{2}{\tau_1,\zipn{n-1}{\tau_2,\sdots,\tau_n}} &&
      \text{($n>2$)} \\
  \end{aligned}
  \hspace{-.3cm}\right\}\hspace{-.1cm}
  \label{eq:zap}
\end{gather}
%
%For example, we have
To give an example, 
\short{
$
  \zipn{3}{\sigma,\tau,\rho} 
  = \sigma(0) : \tau(0) : \sigma(1) : \rho(0) : \sigma(2) : 
  \tau(1) : \sigma(3) : \rho(1) : \sigma(4) : \tau(2) 
  : \dots
$,
}  
\extended{
\begin{align*}
  \zipn{3}{\sigma,\tau,\rho} 
  =\ &\sigma(0) : \tau(0) : \sigma(1) : \rho(0) : \sigma(2) : \\
  &\tau(1) : \sigma(3) : \rho(1) : \sigma(4) : \tau(2) 
  : \dots  
\end{align*}}
writing $\sigma(i)$ for the $i$'th entry of the stream $\sigma$.

We emphasize that all systems of equations in this paper are finite.
%To keep them finite, 
To that end, we extend the specifications 
only by those equations from \eqref{eq:zap} that are needed by the specification,
that is, the equations $\zipn{n}{\ldots} = \ldots$ 
for which a symbol $\szipn{m}$ with $n\le m$ occurs in the specification.

\begin{lemma}\label{lem:zap}
  In every stream model $\aalg = \pair{A}{\sint}$ of a specification including 
  the equations from~\eqref{eq:zap}
  we have: 
  \begin{enumerate}
    \item $\int{\zeros} = 0^\omega$ \;and\;
          $\int{\ones} = 1^\omega$,
    \item for all $\astr_1,\ldots,\astr_k \in A_{\sbitstream}$, $k\ge2$ and $n\in\nat$:\\[.1ex]
%           $\int{\szipn{0}} = (0:)^\omega$,\\[.1ex]
          $\int{\szipn{1}}(\astr_1) = \astr_1$,\\[.1ex]
          $\int{\szipn{k}}(\astr_1,\ldots,\astr_k)(2n) = \astr_1(n)$\\[.1ex]
          $\int{\szipn{k}}(\astr_1,\ldots,\astr_k)(2n+1) = \int{\szipn{k-1}}(\astr_2,\ldots,\astr_k)(n)$ %\qed
  \end{enumerate}
\end{lemma}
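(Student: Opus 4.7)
The plan is to tackle the three parts separately, using equational reasoning inside an arbitrary fixed model combined with induction on the position index (and, for part~2, on the arity~$k$).

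For part~1, the equation $\zeros = 0:\zeros$ interpreted in the model together with $\int{0} = 0$ and $\int{{:}}(x,\astr) = x \cons \astr$ gives $\int{\zeros} = 0 \cons \int{\zeros}$. Writing $\astr = \int{\zeros}$, projecting on the $n$-th position yields $\astr(0) = 0$ and $\astr(n+1) = \astr(n)$, so by induction on $n$ we get $\astr(n) = 0$ for every $n \in \nat$, i.e., $\int{\zeros} = 0^\omega$. The argument for $\int{\ones} = 1^\omega$ is completely symmetric.

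For part~2, the identity $\int{\szipn{1}}(\astr_1) = \astr_1$ is immediate from the equation $\zipn{1}{\tau} = \tau$, read with the assignment $\alpha(\tau) = \astr_1$. The core of the proof is the binary case $k=2$. Fix $\astr_1, \astr_2 \in A_{\sbitstream}$ and apply the equation $\zipn{2}{x:\tau_1,\tau_2} = x:\zipn{2}{\tau_2,\tau_1}$ with $x := \astr_1(0)$, $\tau_1$ the tail of $\astr_1$, and $\tau_2 := \astr_2$, obtaining
$$\int{\szipn{2}}(\astr_1,\astr_2) \;=\; \astr_1(0) \cons \int{\szipn{2}}(\astr_2,\text{tail}(\astr_1))\,.$$
I then prove the two identities $(A)$ $\int{\szipn{2}}(\astr_1,\astr_2)(2n)=\astr_1(n)$ and $(B)$ $\int{\szipn{2}}(\astr_1,\astr_2)(2n+1)=\astr_2(n)$ simultaneously by induction on $n$, uniformly in $\astr_1,\astr_2$. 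The base case $n=0$ reads off position $0$ directly from the displayed identity, and the odd position is handled by unfolding the inner call once more. The inductive step again unfolds the inner $\int{\szipn{2}}$ (swapping roles of the two arguments) and invokes the hypothesis at $n-1$.

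For the general arity $k>2$, the equation $\zipn{k}{\tau_1,\ldots,\tau_k} = \zipn{2}{\tau_1,\zipn{k-1}{\tau_2,\ldots,\tau_k}}$ interpreted in the model gives $\int{\szipn{k}}(\astr_1,\ldots,\astr_k) = \int{\szipn{2}}(\astr_1,\int{\szipn{k-1}}(\astr_2,\ldots,\astr_k))$, and the two displayed claims follow by direct appeal to the already-established binary case (with the second argument being $\int{\szipn{k-1}}(\astr_2,\ldots,\astr_k)$). No induction on $k$ is strictly needed here; only the binary case is invoked. The main technical point is that the unfolding in the binary step requires the tail of $\astr_1$ to itself lie in $A_{\sbitstream}$ so that the instantiation of $\tau_1$ is a legal variable assignment; this is automatic for full models and for the models built in the subsequent constructions of the paper, so it presents no real obstacle.
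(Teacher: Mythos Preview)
The paper states Lemma~\ref{lem:zap} without proof; it is treated as a routine verification and the text immediately proceeds to the definition of Turing machines. Your argument is exactly the expected one: read each defining equation in the model, project onto coordinates, and induct on the position index (with the $k>2$ case reduced to $k=2$ via the defining equation). There is nothing to compare against.

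Your flag about tail-closure is a genuine technical point that the paper glosses over: to instantiate $\tau_1$ with the tail of $\astr_1$ you need $\mathrm{tail}(\astr_1)\in A_{\sbitstream}$, and the definition of stream algebra does not require $A_{\sbitstream}$ to be closed under tails. You are right that this is harmless for every model actually used in the paper (the canonical model of Definition~\ref{def:canonical} is full, and the countable models obtained via L\"owenheim--Skolem in the membership proofs are only quantified over abstractly), so the lemma is applied only in settings where the issue does not arise. If one wanted the statement to hold verbatim for \emph{every} stream model, one would either add tail-closure as a standing hypothesis or observe that $\shd,\stl$ can be added to the signature with their standard interpretation, which forces the domain to be tail-closed. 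Your treatment of this point is appropriate.
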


\noindent
A \emph{Turing machine} $\tm$ is a quadruple $\quadruple{\tmstates}{\tmsig}{\tmstart}{\stmtrans}$
consisting of
a finite set of states $\tmstates$,
an initial state $q_0 \in \tmstates$,
a finite alphabet $\tmsig$ containing a designated \emph{blank} symbol $\stmblank$,  and
a partial \emph{transition function} 
$\stmtrans \funin \autstates \times \tmsig \pto \tmstates \times \tmsig \times \{\tmL,\tmR\}$.
%
% \begin{itemize}
% \item finite set of states $\tmstates$,
% \item an initial state $q_0 \in \tmstates$,
% \item a finite alphabet $\tmsig$ containing a designated symbol $\stmblank$, called \emph{blank}, and
% \item a partial \emph{transition function} 
%       $\stmtrans \funin \autstates \times \tmsig \pto \tmstates \times \tmsig \times \{\tmL,\tmR\}$.
% \end{itemize}

For convenience, we restrict $\tmsig$ to the alphabet $\tmsig = \{0,1\}$
where $0$ is the blank symbol $\stmblank$, and we %. It then suffices to 
denote Turing machines by triples $\triple{\tmstates}{\tmstart}{\stmtrans}$.
% The set of \emph{configurations} and \emph{step relation} on configurations
% are defined as usual.
As input for the Turing machines we typically use a unary number representation $11\ldots1$ ($n$-times)
to encode the number $n$.
Of course, another encoding is possible, as long as the encoding is computable, and the Turing machine is able to detect the end of the input
(since $0$ is part of the input alphabet and it is also the blank symbol).
% We stick to unary representations for simplicity.
  
We define a translation of Turing machines to equational specifications of bitstream functions,
based on the standard translation to term rewriting systems from~\cite{tere:2003}.
However, we represent the tape using streams instead of finite lists, and
have one instead of four rules for `extending' the tape.
In particular, the equation for extending the tape is the equation for $\zeros$ from~\eqref{eq:zap}.
The terms of the shape $\bfunap{\astate}{\sigma}{\tau}$ represent configurations of the Turing machine, 
where the stream $\tau$ contains the tape content below and right of the head,
and $\sigma$ the tape content left of the head.
Notably, the head of the machine stands on the first symbol of $\tau$.
\begin{definition}\normalfont\label{def:tmtrs}
  Let $\tm = \triple{\tmstates}{\tmstart}{\stmtrans}$ be a Turing machine.
  We define the specification $\tmes{\tm}$ to consist of the following equations:
  \begin{align*}
    \bfunap{\astate}{x}{b:y} &= \bfunap{\astate'}{b':x}{y}
    &&\text{ for every }\tmtrans{\astate}{b} = \triple{\astate'}{b'}{\tmR}\\
    \bfunap{\astate}{a:x}{b:y} &= \bfunap{\astate'}{x}{a:b':y}
    &&\text{ for every }\tmtrans{\astate}{b} = \triple{\astate'}{b'}{\tmL}
  \end{align*}
  and for halting configurations additionally:
  \begin{align*}
    \bfunap{\astate}{x}{b:y} &= b %:\bfunap{\astate}{x}{b:y}
    &&\text{ whenever }\tmtrans{\astate}{b}\text{ undefined}
  \end{align*}
  with the signature $\asig = \{0,1,{:}\} \cup \tmstates$
  with types
  $\astate \oftype \sbitstream \times \sbitstream \to \sbit$ for every symbol $\astate \in \tmstates$,
  and $0,1 \oftype \sbit$ and `${:}$' of type $\sbit \times \sbitstream \to \sbitstream$.
  Moreover, we use $\tmtrs{\tm}$ to denote the term rewriting system obtained from $\tmes{\tm}$
  by orienting all equations from left to right.
\end{definition}

Apart from the additional rule for termination, the translation $\tmtrs{\tm}$ is standard, 
and the rewrite rules model the transition relation of Turing machines
in one-to-one fashion.
So we take the liberty to define input of tuples $\tuple{n_1,\ldots,n_k} \in \nat^k$
and oracles directly on the term representations.
We pass $k$-tuples $\tuple{n_1,\ldots,n_k} \in \nat^k$ of natural numbers
as input to a Turing machine by choosing the following start configuration 
% \begin{align*}
$\bfunap{\tmstart}{\zeros}{\;\zipn{k+1}{\num{k},\num{n_1},\;\ldots,\;\num{n_k}}}$
% \end{align*}
where $\num{n}$ stands for $(1:)^n\;\zeros$. % $\overbrace{1:\ldots:1}^{\text{$n$-times}}:\zeros$.
The particular encoding of tuples is not crucial, 
but $\zipn{k+1}{\num{k},\num{n_1},\;\ldots,\;\num{n_k}}$ is 
for equational specifications more convenient than the G\"odel encoding.

We obtain machines with 
oracles $\oracle_1,\ldots,\oracle_m \subseteq \nat$
by writing the oracles elementwise interleaved on the tape left of the head:

\begin{notation}
  For $n \in \nat$ we use $\num{n}$ to abbreviate $(1:)^n : \zeros$.
  For $\oracle \subseteq \nat$, we let $\num{\oracle}$ denote the stream 
  $\chi_\oracle(0) : \chi_\oracle(1) : \chi_\oracle(2) : \ldots$
  where $\chi_\oracle$ is the characteristic function of $\oracle$.
% 
%   Let $n_1,\ldots,n_k \in\nat$ and $\oracle_1,\ldots,\oracle_m \subseteq \nat$.
%   We use $\vec{n}$ as shorthand for $n_1,\ldots,n_k$,\,
%   $\num{\vec{n}}$\, for $\num{n_1},\ldots,\num{n_k}$,\,
%   $\vec{\oracle}$\, for $\oracle_1,\ldots,\oracle_m$, and
%   $\num{\vec{\oracle}}$ for $\num{\oracle_1},\ldots,\num{\oracle_m}$
%   (whenever $k$ and $m$ are clear from the context).
% 
  We write $\vec{\alpha}$ short for $\alpha_1,\ldots,\alpha_k$ 
  and $\num{\vec{\alpha}}$\, for $\num{\alpha_1},\ldots,\num{\alpha_k}$
  if $k$ is clear from the context.

  For a term rewriting system $\atrs$, we write $\to_\atrs$ for a rewrite step with respect to $\atrs$,
  and $\to^*_\atrs$ is the reflexive-transitive closure of $\to_\atrs$.
\end{notation}

\begin{definition}\label{def:init}
  Let $\tm = \triple{\tmstates}{\tmstart}{\stmtrans}$ be a Turing machine.
  Then for stream terms
  $\oracle_1,\ldots,\oracle_m \oftype \sbitstream$ and $n_1,\ldots,n_k \oftype \sbitstream$,
  we define 
  \begin{align*}
    \tminit{\oracle_1,&\ldots,\oracle_m}{n_1,\ldots,n_k} \defd\\
    &
    \bfunap{\tmstart}{\zipn{m}{\oracle_1,\ldots,\oracle_m}}{\;\zipn{k+1}{\num{k},n_1,\;\ldots,\;n_k}}
  \end{align*}
\end{definition}

\begin{definition}\normalfont
  A Turing machine $\tm = \triple{\tmstates}{\tmstart}{\stmtrans}$
  \emph{halts (with output $b$)} on inputs $n_1,\ldots,n_k \in\nat$ with oracles $\oracle_1,\ldots,\oracle_m \subseteq \nat$ 
  if there is a rewrite sequence
%   \begin{align*}
    $\tminit{\num{\oracle_1},\ldots,\num{\oracle_m}}{\num{n_1},\;\ldots,\;\num{n_k}}
    \to^*_{\tmtrs{\tm}} b$,
%   \end{align*}
  where $b \in \{0,1\}$.
  Here $\num{\oracle}$ is short for the stream 
  $\chi_\oracle(0) : \chi_\oracle(1) : \chi_\oracle(2) : \ldots$
  where $\chi_\oracle$ is the characteristic function of $\oracle$.
\end{definition}
Note that the initial term is infinite due to the oracles, nevertheless
we consider only finite reduction sequences.
Due to the~rules for $\szipn{n}$ and $\zeros$,
there are infinite rewrite sequences even if the Turing machine halts.
However, $\tmtrs{\tm}$ is orthogonal and therefore
outermost-fair rewriting (or lazy evaluation) 
is normalizing, that is, computes the (unique) normal form $b \in \{0,1\}$ if it exists.

\begin{definition}\normalfont
  A \emph{$k$-ary predicate $P$ with $m$ oracles} is a relation $P \subseteq \powerset{\nat}^m \times \nat^k$.
  Then $P$ is called \emph{decidable} if there is
  a Turing machine $\tm$
  such that for all $\vec{\oracle} \in \powerset{\nat}^m$ and $\vec{n} \in\nat^k$:
%   we have that 
  $\tm$ halts on input $\vec{n}$ with oracles $\vec{\oracle}$,
  and the output is $1$ if and only if $P(\vec{\oracle},\vec{n})$.
\end{definition}

In correspondence with Definition~\ref{def:init} we define
for $\oracle_1,\ldots,\oracle_m,$ $n_1,\ldots,n_k \in \{0,1\}^\omega$,
$\inttminit{\oracle_1,\ldots,\oracle_m}{n_1,\ldots,n_k}$ as shorthand for
$\bfunap{\int{\tmstart}}{\int{\szipn{m}}(\oracle_1,\ldots,\oracle_m)}{\;\int{\szipn{k+1}}(\int{\num{k}},n_1,\;\ldots,\;n_k)}$.
Then for the models of Turing machine specifications we have:
\begin{lemma}\label{lem:tm}
  Let $P \,{\subseteq}\, \powerset{\nat}^m {\times} \nat^k$ be decidable,
  and $\tm = \triple{\tmstates}{\tmstart}{\stmtrans}$ the corresponding Turing machine.
    Then in every stream model $\aalg = \pair{A}{\sint}$ of a specification including 
  the equations from~\eqref{eq:zap} and $\tmes{\tm}$
  we have for every $\vec{\oracle} \in \powerset{\nat}^m$ and $\vec{n} \in\nat^k$:
  $(\vec{\oracle},\vec{n}) \in P$ if and only if
  $\inttminit{\oracle_1,\ldots,\oracle_m}{n_1,\;\ldots,\;n_k} = 1$.
\end{lemma}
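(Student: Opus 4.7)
The plan is to lift the Turing machine computation itself into a chain of equalities in every model $\aalg = \pair{A}{\sint}$ of the specification, bypassing the rewriting on infinite terms. The key observation is that each transition of $\tm$ corresponds to an instance of one of the equations of $\tmes{\tm}$, which must therefore hold in $\aalg$ under every variable assignment.

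To each configuration $c = \triple{\astate}{L_c}{R_c}$ of $\tm$ --- with $L_c \in A_{\sbitstream}$ the tape strictly left of the head read right-to-left, and $R_c \in A_{\sbitstream}$ the tape from the head rightward --- I would associate the element $\msf{interp}(c) := \int{\astate}(L_c, R_c) \in A_{\sbit}$. Using Lemma~\ref{lem:zap}, $\int{\szipn{k}}$ interleaves its arguments and $\int{\num{n}} = 1^n \cdot 0^\omega$ (by induction on $n$), so unfolding the definition of $\inttminit{\vec{\oracle}}{\vec{n}}$ yields precisely $\msf{interp}(c_0)$ for the initial configuration $c_0$ of $\tm$ started on oracles $\vec{\oracle}$ and input $\vec{n}$.

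The heart of the argument is an induction on the length of the computation showing $\msf{interp}(c_i) = \msf{interp}(c_{i+1})$ at each TM step. For a right-moving transition $\tmtrans{\astate}{b} = \triple{\astate'}{b'}{\tmR}$, the equation $\bfunap{\astate}{x}{b \cons y} = \bfunap{\astate'}{b' \cons x}{y}$ of $\tmes{\tm}$, instantiated with $x \mapsto L_{c_i}$ and $y \mapsto R'$ where $R_{c_i} = b \cons R'$, yields $\int{\astate}(L_{c_i}, R_{c_i}) = \int{\astate'}(L_{c_{i+1}}, R_{c_{i+1}})$, because $\int{\cons}(b,\sigma) = b \cons \sigma$ together with $\int{0}=0$ and $\int{1}=1$ make the tape entries match on the nose. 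The left-moving case is symmetric. For the halting configuration $c_f$ with head-cell bit $b$, the clause $\bfunap{\astate}{x}{b\cons y} = b$ forces $\msf{interp}(c_f) = b$, the output of $\tm$. Chaining the finitely many equalities gives $\inttminit{\vec{\oracle}}{\vec{n}} = b$, and by decidability $b = 1$ iff $(\vec{\oracle},\vec{n}) \in P$.

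The main obstacle I anticipate is bookkeeping: verifying that the stream pair $(L_{c_i}, R_{c_i})$ tracks the TM tape correctly through all $\tmL$ and $\tmR$ moves, and that the decomposition $R_{c_i} = \int{\cons}(b, R')$ with $R' \in A_{\sbitstream}$ (and analogously for $L_{c_i}$ on left moves) is actually available at every step so that the relevant equation can be instantiated in $\aalg$. This property holds at step $0$ because the $\szipn{k}$ equations force $\int{\szipn{k}}(\int{\cons}(b_0, \sigma), \ldots)$ to equal $\int{\cons}(b_0, \ldots)$ in $\aalg$, and it is preserved along the induction either by the fresh $\int{\cons}$ introduced by a right move or by the tail of $\int{\cons}(a,L')$ exposed when the previous move was a right move.
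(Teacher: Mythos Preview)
Your plan is correct and amounts to an explicit unfolding of the paper's one-line argument: the paper simply observes that decidability of $P$ yields a finite rewrite $\tminit{\num{\vec{\oracle}}}{\num{\vec{n}}} \to^*_{\tmtrs{\tm}} b$ with $b=1$ iff $(\vec{\oracle},\vec{n})\in P$, and since every rewrite step instantiates an equation of the specification, any model collapses the chain to $\inttminit{\vec{\oracle}}{\vec{n}}=b$. Your configuration-by-configuration induction is exactly this chain read semantically rather than syntactically; the decomposability bookkeeping you flag is genuine but the paper's proof does not address it either.
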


\begin{proof}
  %By assumption 
  $P$ is decidable, hence  
  $\tminit{\num{\oracle_1},\ldots,\num{\oracle_m}}{\num{n_1},\;\ldots,\;\num{n_k}}$
  has a nf in $\{0,1\}$, and the normal form is $1$ if and only if $(\vec{\oracle},\vec{n}) \in P$.
\end{proof}

\section{Levels of Undecidability}\label{sec:levels}

We briefly introduce complexity related notions that are relevant for this paper:
promise problems, reducibility, hardness and completeness, and the arithmetical and the analytical hierarchy.
For more details, %and further reading 
we refer to the standard textbooks~\cite{shoe:1971,roge:1967}.

\begin{definition}\normalfont\label{def:membership}
  Let $A \subseteq P \subseteq \nat$.
  The \emph{promise (membership) problem for $A$ with promise $P$}
  is the question of deciding on the input of $n \in P$ whether $n \in A$.
  For the case $P = \nat$, we speak of the 
  \emph{membership problem for $A$}.
\end{definition}
We identify the membership problem for $A$ with the set $A$ itself,
and
the promise problem for $A$ with promise $P$ with the pair $\pair{A}{P}$,
also denoted by $\pp{A}{P}$.

\begin{definition}\normalfont\label{def:reduce}
  Let $A,B,P,Q \subseteq \nat$.
  Then \emph{$\pp{A}{P}$ can be (many-one) reduced to $\pp{B}{Q}$}, denoted $A \reducible B$,
  if there exists a partial recursive function $f \funin \nat \pto \nat$
  such that 
  $P \subseteq \domain{f}$, $f(P) \subseteq Q$, and
  $\myall{n \in P}{n \in A \Leftrightarrow \funap{f}{n} \in B}$.
\end{definition}

\begin{definition}\normalfont\label{def:hard}
  Let $B,Q \subseteq \nat$ and $\mathcal{P} \subseteq \powerset{\nat} \times \powerset{\nat}$.
  Then $\pp{B}{Q}$ is called $\mathcal{P}$-\emph{hard} 
  if every $\pp{A}{P} \in \mathcal{P}$ can be reduced to $\pp{B}{Q}$.
  Moreover, $\pp{B}{Q}$ is $\mathcal{P}$-\emph{complete} if additionally $\pp{B}{Q}$
  can be reduced to some $\pp{A}{P} \in \mathcal{P}$.
\end{definition}
We stress that Definition~\ref{def:hard} does not require that a $\mathcal{P}$-complete
promise problem $\pp{B}{Q}$ is member of $\mathcal{P}$ itself.
This allows for classifying promise problem using the
usual arithmetic and analytical hierarchy (for membership problems).

\begin{lemma}\label{lem:reduce}
  If $\pp{A}{P}$ can be reduced to $\pp{B}{Q}$
  and $\pp{A}{P}$ is $\mathcal{P}$-\emph{hard}, then $B$ is $\mathcal{P}$-\emph{hard}.
  %\qed
\end{lemma}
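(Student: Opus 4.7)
The plan is to establish transitivity of the reduction relation $\reducible$ on promise problems, and then combine it with the given reduction to obtain, from any problem in $\mathcal{P}$, a reduction to $\pp{B}{Q}$.

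First I would verify transitivity: given a reduction $f$ witnessing $\pp{C}{R} \reducible \pp{A}{P}$ and a reduction $g$ witnessing $\pp{A}{P} \reducible \pp{B}{Q}$, the composition $g \circ f$ is again partial recursive (as partial recursive functions are closed under composition), and it satisfies the three conditions of Definition~\ref{def:reduce}. Namely, $R \subseteq \domain{g \circ f}$ because $R \subseteq \domain{f}$ and $f(R) \subseteq P \subseteq \domain{g}$; the image condition $(g \circ f)(R) \subseteq g(P) \subseteq Q$ follows by chaining the corresponding hypotheses; and for every $n \in R$, applying the two equivalences in turn, $n \in C$ iff $f(n) \in A$ iff $g(f(n)) \in B$, where the second equivalence uses $f(n) \in P$ so that the second reduction's promise is met.

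With transitivity in hand the lemma is immediate. Fix any $\pp{C}{R} \in \mathcal{P}$. Since $\pp{A}{P}$ is $\mathcal{P}$-hard, we have $\pp{C}{R} \reducible \pp{A}{P}$. Composing this reduction with the given $\pp{A}{P} \reducible \pp{B}{Q}$ yields $\pp{C}{R} \reducible \pp{B}{Q}$ by the transitivity just established. As $\pp{C}{R} \in \mathcal{P}$ was arbitrary, $\pp{B}{Q}$ is $\mathcal{P}$-hard by Definition~\ref{def:hard}.

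There is essentially no obstacle here; the only point requiring a moment's care is the domain bookkeeping for promise problems, namely checking that $g \circ f$ is defined on all of $R$ and sends $R$ into $Q$. Both use the chain $R \subseteq \domain{f}$, $f(R) \subseteq P \subseteq \domain{g}$, and $g(P) \subseteq Q$ simultaneously, which is exactly what Definition~\ref{def:reduce} was tailored to provide.
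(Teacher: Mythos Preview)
The paper states this lemma without proof, so there is nothing to compare against; your argument via transitivity of many-one reductions is the standard one and is correct. One small point: you conclude that $\pp{B}{Q}$ is $\mathcal{P}$-hard, whereas the lemma as stated concludes that $B$ (i.e., $\pp{B}{\nat}$) is $\mathcal{P}$-hard; this follows immediately since any reduction landing in $Q$ trivially lands in $\nat \supseteq Q$, so the same $g \circ f$ witnesses $\pp{C}{R} \reducible \pp{B}{\nat}$.
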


We use $\lencode{\cdot}$ to denote the well-known G\"odel encoding
of~finite \emph{lists} of numbers as elements of $\nat$: 
$\lencode{n_1,\sdots, n_k} := p_1^{n_1+1}\cdot \ldots \cdot p_k^{n_k+1}$, 
where $p_1 < p_2 < \ldots < p_k$ are the first $k$ prime numbers.

We define the arithmetical and analytical hierarchies:
\begin{definition}\normalfont\label{def:arithclasses}
  Let $\csig{0}{0}:= \cpi{0}{0} := \cdel{0}{0}$ be the collection of 
  recursive sets of natural numbers
  (the decidable problems).
  Then for $n \geq 1$, we define:
  \begin{itemize}
    \item 
      $\csig{0}{n}$ consists of sets $\{n \,{\mid}\, \myex{x \,{\in}\, \nat}{\lencode{x,n} \,{\in}\, B}\}$ 
      with~$B \in \cpi{0}{n-1}$,
    \item 
      $\cpi{0}{n}$ consists of sets $\{n \,{\mid}\, \myall{x \,{\in}\, \nat}{\lencode{x,n} \,{\in}\, B}\}$ 
      with~$B \in \csig{0}{n-1}$,
    \item 
      $\cdel{0}{n} := \csig{0}{n} \cap \cpi{0}{n}$.
  \end{itemize}
  The \emph{arithmetical hierarchy} consists of the classes
  $\cpi{0}{n}$, $\csig{0}{n}$ and $\cdel{0}{n}$ for $n \in \nat$.
\end{definition}

For example, the membership $a \in A$ for every set $A \in \cpi{0}{2}$ can be defined
by a formula of the form 
$\myall{x_1}{\myex{x_2}{\myall{x_3}{P(a,x_1,x_2,x_3)}}}$
where $P$ is a decidable predicate.

The analytical hierarchy extends this classification of sets to formulas of the language of second-order arithmetic,
\pagebreak
that is, with set (or equivalently function) quantifiers.
The following definition makes use of a result from recursion theory, see~\cite{roge:1967}, stating that
if there is at least one set quantifier, then two number quantifiers suffice
(for functions quantifiers, one number quantifier suffices).

\begin{definition}\normalfont\label{def:analytic}
  Let $\csig{1}{0}:= \cpi{1}{0} := \cdel{1}{0} = \bigcup_{n\in\nat} \cpi{0}{n}$ be the set of all arithmetic predicates.
  A set $A \subseteq \nat$ is in $\cpi{1}{n}$ for $n>0$ if 
  there is a decidable predicate~$P$ with $m$ oracles
  such that for all $a \in \nat$:
  \begin{align*}
%     a \in A \iff 
%     \forall{\oracle_1}\:\exists{\oracle_2}\: \sdots 
%     &\forall{\oracle_{m-1}}\:\exists{\oracle_m}\: \\
%     &\forall{x_1}\:\exists{x_2}\:
%       {P(\oracle_1,\sdots,\oracle_n,a,x_1,x_2)}
    a \in A &\iff \myall{\oracle_1}{\myex{\oracle_2}{{\sdots} \myex{\oracle_m}{\myall{x_1}{\myex{x_2}{P(\oracle_1,\sdots,\oracle_n,a,x_1,x_2)}}}}}\\
    a \in A &\iff \myall{\oracle_1}{\myex{\oracle_2}{{\sdots} \myall{\oracle_m}{\myex{x_1}{\myall{x_2}{P(\oracle_1,\sdots,\oracle_n,a,x_1,x_2)}}}}}
  \end{align*}
  for $n$ even, and $n$ odd, respectively.
  Here, $\oracle_1,\ldots,\oracle_m \subseteq \nat$, the corresponding quantifiers are set quantifiers,
  and $x_1,x_2 \in \nat$ with number quantifiers.
  Then $A$ is in $\csig{1}{n}$, if the condition holds with all $\forall$ and $\exists$ quantifiers swapped.
  Finally, $\cdel{1}{n} = \cpi{1}{n} \cap \csig{1}{n}$.
\end{definition}

\section{Equality in Models}\label{sec:models:theoretic}

% The specifications throughout this section
% tacitly include the equations for $\zeros$, $\ones$
% and those equations for $\szipn{n}$ from~\eqref{eq:standard} that are reachable.

In this section we study the complexity of different model-theoretic semantics
of equivalence of bitstream specifications. 
Based on the notion of models for bitstream specifications from Section~\ref{sec:specs},
we first formalize the equivalences that we consider.

For all of the following model-theoretic equivalences, 
we have the choice whether or not we require the models
to be full, that is, their domain contains all bitstreams.
For example, we can consider the equality of terms in all models or in all full models:
\begin{definition}
  Let $\aes$ be a bitstream specification over $\asig$, and $s,t \in \TT$ with $s,t \oftype \sbitstream$.
  Then $s$ and $t$ are said to be
  \begin{itemize}
    \item 
      \emph{equal in all models of $\aes$} if
      \begin{center}\vspace{-.5ex}
      $\aalg \models \aes$ implies $\aalg \models s = t$
      for all stream algebras $\aalg$\,,
      \end{center}
    \item 
      \emph{equal in all full models of $\aes$} if
      \begin{center}\vspace{-.5ex}
      $\aalg \models \aes$ implies $\aalg \models s = t$
      for all full stream algebras $\aalg$\,.
      \end{center}
  \end{itemize}
\end{definition}

The set of solutions of a term $s$ in a specification $\aes$ 
is the set of interpretations $\int{s}$ of $s$ in all models satisfying $\aes$:
\begin{definition}
  Let $\aes$ be a bitstream specification over $\asig$, and $s \in \TZ$ with $s \oftype \sbitstream$.
  Then the set of
  \begin{itemize}
    \item 
      \emph{solutions of $s$ in $\aes$ with respect to all models} is\vspace{-.5ex}
      \begin{align*}
        \sol{s}{\aes} = \{\,\int{s}^{\aalg} \mid \aalg \models \aes \,\}\,,
      \end{align*}
    \item 
      \emph{solutions of $s$ in $\aes$ with respect to all full models} is\vspace{-.5ex}
      \begin{align*}
        \solful{s}{\aes} = \{\,\int{s}^{\aalg} \mid \text{$\aalg$ full}, \aalg \models \aes \,\}\,.
      \end{align*}
  \end{itemize}
\end{definition}
Here it suffices to consider only ground terms $s \in \TZ$. 
For terms $t \in \TT$ with variables, the set of solutions can be defined
as $\sol{t}{\aes} = \{\,\int{t}^{\aalg}_\alpha \mid \aalg \models \aes, \alpha : \avars \to A \,\}$.
However, then $\sol{t}{\aes} = \sol{s}{\aes}$ if $s$ is the ground term obtained from~$t$ by
interpreting the variables in $t$ as fresh constants (formally, this amounts to an extension of the signature).

\begin{definition}
  Let $\aes_s$ and $\aes_t$ be bitstream specifications over $\asig_s$ and $\asig_t$, respectively.
  Let $s \in \ter{\asig_s}{\setemp}$ and $t \in \ter{\asig_t}{\setemp}$.
  Then $s$ and $t$ have
  \begin{itemize}
    \item \emph{equal solutions over all models} if $\sol{s}{\aes_s} = \sol{t}{\aes_t}$,
    \item \emph{equal solutions over all full models} if $\solful{s}{\aes_s} = \solful{t}{\aes_t}$.
  \end{itemize}
\end{definition}

\begin{definition}
  Let $\aes$ be bitstream specifications over $\asig$, and $s \in \ter{\asig}{\setemp}$.
  Then $s$ is said to have
  \begin{itemize}
    \item \emph{a unique solution over all models} if $\setsize{\sol{s}{\aes}} = 1$,
    \item \emph{a unique solution over all full models} if $\setsize{\solful{s}{\aes}} = 1$,
    \item \emph{a solution over all models} if $\setsize{\sol{s}{\aes}} \ge 1$,
    \item \emph{a solution over all full models} if $\setsize{\solful{s}{\aes}} \ge 1$,
    \item \emph{at most one solution over all models} if $\setsize{\sol{s}{\aes}} \le 1$,
    \item \emph{at most one solution over all full models} if $\setsize{\solful{s}{\aes}} \le 1$.
  \end{itemize}
\end{definition}

\subsection{Auxiliary Definitions}

First, we define a few (systems of) equations that are repeatedly used throughout this section.
The following function $\siszeros$ that maps $\zeros$ to $\ones$ and every other bitstreams to $\zeros$:
\begin{gather}
  \left.
  \begin{aligned}
  \iszeros{\zeros} &= \ones &
  \iszeros{0:\astr} &= \iszeros{\astr} \\
  && \iszeros{1:\astr} &= \zeros
  \end{aligned}
  \hspace{0.3cm}\right\}\hspace{-.3cm}
  \label{eq:filter}
\end{gather}
This function does exactly what its name suggests; 
it checks whether the argument is the stream of zeros.
We use the bit $0$ or the stream $\zeros$ for \emph{false},
and $1$ and $\ones$ for \emph{true}.

We focus on specifications of bitstreams, and 
encode streams of natural numbers as bitstreams
via the sequence of run\nb-length of ones.
For instance, the stream $3:1:0:2:\ldots$ is encoded as
$1:1:1:0:1:0:0:1:1:0:\ldots$.
% \todo{(I expected $1:1:1:0:1:0:0:\underline{0}:1:1:0:\ldots$)}
We then define functions $\suhd$ and $\sutl$ that 
are the unary counterpart for head and tail on streams of natural numbers:
\begin{gather}
  \left.
  \begin{aligned}
  \uhd{0:\astr} &= \zeros & \utl{0:\astr} &= \astr \\
  \uhd{1:\astr} &= 1:\uhd{\astr} & \utl{1:\astr} &= \utl{\astr}
  \end{aligned}
  \hspace{0.25cm}\right\}\hspace{0cm}
  \label{eq:bhd}
\end{gather}
For instance, we have 
\begin{align*}
  \uhd{1:1:1:0:1:0:0:1:1:\ldots} &= 1:1:1:\zeros \\
  \utl{1:1:1:0:1:0:0:1:1:\ldots} &= 1:0:0:1:1:\ldots
\end{align*}
The following lemma summarizes these properties:
\begin{lemma}\label{lem:iszeros}
  In every stream model $\aalg = \pair{A}{\sint}$ of a specification including 
  the equations from~\eqref{eq:filter} and \eqref{eq:bhd}
  we have: 
  \begin{enumerate}
    \item $\int{\siszeros}(0^\omega) = 1^\omega$,\\[.1ex]
          $\int{\siszeros}(w) = 0^\omega$ for every $w \in A_{\sbitstream} \setminus \{0^\omega\}$,
    \item $\int{\suhd}(1^n\,0\,w) = 1^n\,0^\omega$ for every $w \in A_{\sbitstream}$,\\[.1ex]
          $\int{\suhd}(1^\omega) = 1^\omega$,
    \item $\int{\sutl}(1^n\,0\,w) = w$ for every $w \in A_{\sbitstream}$. %\qed
  \end{enumerate}
\end{lemma}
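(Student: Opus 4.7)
The plan is to prove each of the three clauses by a direct induction, using only the defining equations of~\eqref{eq:filter} and~\eqref{eq:bhd} together with the identities $\int{\zeros} = 0^\omega$ and $\int{\ones} = 1^\omega$ supplied by Lemma~\ref{lem:zap}. For clause~(i), the first identity is immediate: instantiating $\iszeros{\zeros} = \ones$ in the model gives $\int{\siszeros}(0^\omega) = 1^\omega$. For the second identity, I would decompose any $w \in A_{\sbitstream} \setminus \{0^\omega\}$ as $w = 0^n \cons 1 \cons w'$ for some $n \ge 0$ and $w' \in A_{\sbitstream}$, and then induct on $n$. The base case $n=0$ applies $\iszeros{1:\astr} = \zeros$ with $\astr \mapsto w'$ to give $\int{\siszeros}(1 \cons w') = \int{\zeros} = 0^\omega$, and the inductive step applies $\iszeros{0:\astr} = \iszeros{\astr}$ to strip a leading zero and invoke the induction hypothesis.

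Clauses~(ii) and~(iii) are established analogously by induction on $n$. For~(ii), the base case $n=0$ uses $\uhd{0:\astr} = \zeros$ with $\astr \mapsto w$; the inductive step takes $\astr \mapsto 1^n \cons 0 \cons w$ in $\uhd{1:\astr} = 1:\uhd{\astr}$ to peel off a single leading $1$ and then invokes the induction hypothesis, giving $\int{\suhd}(1^{n+1} \cons 0 \cons w) = 1 \cons (1^n \cons 0^\omega) = 1^{n+1} \cons 0^\omega$. The remaining identity $\int{\suhd}(1^\omega) = 1^\omega$ is obtained by taking $\astr \mapsto \ones$ in the same equation, which yields $\int{\suhd}(1^\omega) = 1 \cons \int{\suhd}(1^\omega)$; since $1^\omega$ is the unique fixed point of $s \mapsto 1 \cons s$ in $\{0,1\}^\nat$, the claim follows. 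Clause~(iii) is identical in spirit, using $\utl{0:\astr} = \astr$ for the base case and $\utl{1:\astr} = \utl{\astr}$ for the step.

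I do not anticipate a substantive obstacle: the lemma is essentially a bookkeeping result that unfolds the defining equations finitely many times, combined with a one-step coinductive uniqueness argument for $1^\omega$. The only point that warrants attention is making sure that every application of an equation is a legal instance, i.e.\ that the term of sort $\sbitstream$ substituted for $\astr$ really lies in $A_{\sbitstream}$; this is guaranteed throughout, since every intermediate stream of the form $1^k \cons 0 \cons w$ with $k \le n$ is obtained from $w \in A_{\sbitstream}$ by repeated application of $\int{{:}}$, and $A_{\sbitstream}$ is closed under the stream constructor by definition of a stream algebra.
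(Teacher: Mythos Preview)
Your proof is correct and complete; the paper states this lemma without proof, treating it as a routine unfolding of the defining equations, which is exactly what your induction does. One small remark: your argument tacitly assumes the equations for $\zeros$ and $\ones$ from~\eqref{eq:zap} are present (to get $\int{\zeros}=0^\omega$, $\int{\ones}=1^\omega$, and $1^\omega\in A_{\sbitstream}$), which the lemma statement does not list explicitly---but the paper makes the same implicit assumption throughout, since the symbols $\zeros$ and $\ones$ occur in~\eqref{eq:filter} and~\eqref{eq:bhd}.
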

Note that all interpretations are uniquely defined,
apart from the combination $\int{\sutl}(1^\omega)$ which can be any stream depending on the model.
To avoid this case, we need means to ensure that a certain bitstream is a valid encoding
of a stream of natural numbers, that is, the stream contains infinitely many zeros:
\begin{gather}
  \left.
  \begin{aligned}
   \natstr{\ones} &= \zeros &
   \natstr{0:\astr} &= 1:\natstr{\astr} \\
   &&\natstr{1:\astr} &= \natstr{\astr} \\
  \end{aligned} 
  \hspace{.1cm}\right\}\hspace{-0.1cm}
  \label{eq:natstr}
\end{gather}
% Then we have that $\int{\snatstr}(w) = 1^\omega$ if and only if $w$ 
% contains infinitely many zeros. 
Then an equation 
$\natstr{X} = \ones$ guarantees that $\int{X}$ 
represents a stream of natural numbers:

\begin{lemma}\label{lem:natstr}
  In every stream model $\aalg = \pair{A}{\sint}$ of a specification including 
  the equations from~\eqref{eq:natstr}
  we have: $\int{\snatstr}(w) = 1^\omega$
  if and only if $w$ contains infinitely many zeros.
\end{lemma}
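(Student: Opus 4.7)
The plan is to prove both directions by exploiting the fact that even though the stream algebra is not a priori determined by the equations, the three equations in \eqref{eq:natstr} must hold for every variable assignment, which forces $\int{\snatstr}(w)$ on every specific argument $w \in A_{\sbitstream}$ in a deterministic way. Throughout, I use that $\int{{:}}(x,\astr) = x \cons \astr$ is fixed by the stream algebra structure, so syntactic decomposition of streams translates directly into decompositions of $\int{\snatstr}(w)$.

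For the ``if'' direction, assume $w$ contains infinitely many zeros, so $w = 1^{k_0}\,0\,1^{k_1}\,0\,1^{k_2}\,0\,\ldots$ with every $k_i$ finite. I would show by induction on $n$ that the first $n$ entries of $\int{\snatstr}(w)$ are all $1$. In the inductive step I repeatedly apply the equation $\int{\snatstr}(1{:}\astr) = \int{\snatstr}(\astr)$ a finite $k_n$ number of times to consume the $n$-th block of ones, and then apply $\int{\snatstr}(0{:}\astr) = 1 \cons \int{\snatstr}(\astr)$ once to expose a leading $1$. After $n$ such alternations we obtain $\int{\snatstr}(w) = 1^n \cons \int{\snatstr}(w')$ where $w'$ is the suffix of $w$ starting just after the $n$-th zero. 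Hence the $n$-th position of $\int{\snatstr}(w)$ is $1$, and since $n$ was arbitrary, $\int{\snatstr}(w) = 1^\omega$.

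For the ``only if'' direction, suppose $w$ has only finitely many zeros, say exactly $k$. Then $w = u \cdot 1^\omega$ for some finite prefix $u$ with $k$ zeros; equivalently, $w = 1^{k_0}\,0\,1^{k_1}\,0\,\sdots\,0\,\ones$ where the trailing $\ones$ is $\int{\ones} = 1^\omega$ (using Lemma~\ref{lem:zap}). By the same finite case analysis as above, repeatedly applying the second and third equations peels off the $k$ zeros, producing $\int{\snatstr}(w) = 1^k \cons \int{\snatstr}(\int{\ones})$. Now a single application of the first equation $\natstr{\ones} = \zeros$ gives $\int{\snatstr}(\int{\ones}) = \int{\zeros} = 0^\omega$, so $\int{\snatstr}(w) = 1^k \cdot 0^\omega \ne 1^\omega$.

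The only point that requires any care, and the closest thing to an obstacle, is the ``only if'' direction: we must be sure that the intended model actually uses the equation $\natstr{\ones} = \zeros$ at the right moment rather than trying to unfold $\natstr{1{:}\astr} = \natstr{\astr}$ infinitely. But this is not really an obstacle: both equations apply simultaneously to the argument $1^\omega = \int{\ones}$, so the first equation directly pins down $\int{\snatstr}(\int{\ones}) = 0^\omega$ in any model, which is all we need. Everything else is a finite unfolding, and no continuity or coinductive argument on the model is needed.
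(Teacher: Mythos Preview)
Your proof is correct and follows essentially the same approach as the paper's: use the two recursive equations to walk over the stream (deleting $1$'s and converting $0$'s to leading $1$'s), and invoke $\natstr{\ones} = \zeros$ to handle the eventually-all-ones case. Your version is considerably more detailed than the paper's informal two-sentence argument, and your closing remark --- that the equation $\natstr{\ones}=\zeros$ directly pins down $\int{\snatstr}(1^\omega)$ regardless of the competing unfolding --- makes explicit exactly the point the paper leaves implicit.
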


\begin{proof}
  The equations on the right `walk' over the stream,
  deleting $1$'s and converting $0$'s to $1$'s.
  If the stream contains infinitely many $0$'s,
  then an infinite stream of $1$'s will be produced.
  However, if some tail of the stream contains only $1$'s then the equation on the left
  ensures that the interpretation is unequal to $1^\omega$.
\end{proof}

\begin{definition}\label{def:canonical}
  Let $\tm = \triple{\tmstates}{\tmstart}{\stmtrans}$ be a Turing machine.
  Then the \emph{canonical model $\aalg = \pair{A}{\sint}$} for the union
  of the specifications $\tmes{\tm}$,
  \eqref{eq:zap}, \eqref{eq:filter}, \eqref{eq:bhd} and \eqref{eq:natstr}
  consists of the domain $A_{\sbitstream} = \{0,1\}^\nat$
  with interpretations $\sint$ 
  as given in Lemmas~\ref{lem:iszeros} and~\ref{lem:natstr},
  extended by
  \begin{enumerate}
    \item $\int{\sutl}(1^\omega) = 1^\omega$, 
    \item for every $\oracle_1,\ldots,\oracle_m,n_1,\ldots,n_k \subseteq \nat$:\\
          $\int{q}(\vec{\oracle},\vec{n}) = 1$ whenever
          $\int{q}(\num{\vec{\oracle}},\num{\vec{n}}) \to^* 1$,
          and\\
          $\int{q}(\vec{\oracle},\vec{n}) = 0$ otherwise.

%     \item for every $\oracle_1,\ldots,\oracle_m,n_1,\ldots,n_k \subseteq \nat$:\\
%           $\inttminit{\vec{\oracle}}{\vec{n}} = 1$ whenever
%           $\tminit{\num{\vec{\oracle}}}{\num{\vec{n}}} \to^* 1$,
%           and\\
%           $\inttminit{\vec{\oracle}}{\vec{n}} = 0$ otherwise.
  \end{enumerate}
\end{definition}

\begin{lemma}\label{lem:canonical}
  The canonical model is a model of the union of the equational specifications 
   $\tmtrs{\tm}$, \eqref{eq:zap}, \eqref{eq:filter}, \eqref{eq:bhd} and \eqref{eq:natstr}.
  %\qed
\end{lemma}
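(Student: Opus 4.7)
The plan is to verify, one specification at a time, that every equation holds under the canonical interpretation $\sint$ for every variable assignment $\alpha$. For the four auxiliary specifications~\eqref{eq:zap}, \eqref{eq:filter}, \eqref{eq:bhd} and~\eqref{eq:natstr} the bulk of the work is already packaged in Lemmas~\ref{lem:zap}, \ref{lem:iszeros} and~\ref{lem:natstr}: they list exactly the values that the canonical model assigns on the domain $A_{\sbitstream} = \{0,1\}^\nat$, so each equation becomes a direct pointwise calculation. The one delicate choice is the value $\int{\sutl}(1^\omega) = 1^\omega$ stipulated in Definition~\ref{def:canonical}, which is not pinned down by Lemma~\ref{lem:iszeros}. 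I would handle it by noting that the instance of $\utl{0{:}\astr} = \astr$ never applies when $\alpha(\astr) = 1^\omega$ (its left-hand side has head $0$), while $\utl{1{:}\astr} = \utl{\astr}$ at $\alpha(\astr) = 1^\omega$ collapses to $\int{\sutl}(1^\omega) = \int{\sutl}(1^\omega)$; both are therefore satisfied for this (or any other) choice.

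The bulk of the work lies in verifying $\tmes{\tm}$. The key step is to interpret $\int{q}(\sigma,\tau)$ for \emph{arbitrary} $\sigma,\tau \in \{0,1\}^\nat$ by ``running'' the machine on $q(\sigma,\tau)$ in the orthogonal TRS $\tmtrs{\tm}$, reading off bit $b$ whenever $q(\sigma,\tau) \to^*_{\tmtrs{\tm}} b$ and $0$ otherwise. Orthogonality (as noted before Lemma~\ref{lem:tm}) guarantees that the outermost-fair normal form, if it exists, is a unique element of $\{0,1\}$, so $\int{q}$ is well defined. I would then dispatch the three rule schemes of Definition~\ref{def:tmtrs}: for a right-move $\tmtrans{q}{b} = (q',b',\tmR)$ the two terms $q(\alpha(x), b{:}\alpha(y))$ and $q'(b'{:}\alpha(x),\alpha(y))$ encode the same machine configuration one TM-step later, so both evaluate to the same bit (or both fail to); a left-move $\tmtrans{q}{b} = (q',b',\tmL)$ is symmetric; and for a halting transition the right-hand side $b$ is already a normal form to which the left-hand side reduces in a single $\tmtrs{\tm}$-step, so the canonical interpretation gives the same value on both sides.

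The main obstacle will be to make the definition of $\int{q}(\sigma,\tau)$ coherent on the whole of $\{0,1\}^\nat$, not merely on the encodings $\num{\vec{\oracle}}, \num{\vec{n}}$ that Definition~\ref{def:canonical} singles out: the equation $q(x,b{:}y) = q'(b'{:}x,y)$ must hold for every $\alpha(x), \alpha(y) \in \{0,1\}^\nat$, regardless of whether these arise from any Lemma~\ref{lem:tm}-style initial tape. The point to make is that the $\tmtrs{\tm}$-rules inspect only the constructor at the head of the second argument (and, for left-moves, also the first), so the step is uniform in the remainder of the tape and the two ``configurations'' on either side of each equation are genuinely identified after a single step. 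Combining this with the earlier verifications yields $\aalg \models \tmes{\tm} \cup \eqref{eq:zap} \cup \eqref{eq:filter} \cup \eqref{eq:bhd} \cup \eqref{eq:natstr}$, as required.
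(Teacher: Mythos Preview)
Your proposal is correct and follows the same approach as the paper: a normal-form semantics for the Turing-machine symbols justified by orthogonality of $\tmtrs{\tm}$, and direct verification for the auxiliary specifications. The paper's own proof is a two-sentence sketch (``orthogonality gives unique normal forms, hence a normal-form semantics; the rest is easy''), so your write-up is considerably more explicit---in particular, you rightly flag that $\int{q}$ must be defined on \emph{all} pairs $(\sigma,\tau) \in (\{0,1\}^\nat)^2$, not just on the specific encodings that Definition~\ref{def:canonical} mentions, and that the single-step correspondence between the two sides of each $\tmes{\tm}$-equation is what makes the interpretation a model; the paper leaves both points implicit.
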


\begin{proof}
  The rewrite system $\tmtrs{\tm}$ is orthogonal, consequently we have finitary confluence 
  and infinitary unique normal forms~\cite{tere:2003}.
  Hence, we can employ a normal forms semantics for $\int{q}$
  (where we map terms without normal forms to $0$).
  For the remaining equations, it is easy to see that the chosen semantics forms a model.
\end{proof}

\subsection{Equality in all Models}\label{sec:models}

% Theorem~\ref{thm:models} classifies the complexity of the following problem:
For the complexity of equality in all models we obtain:

\begin{theorem}\label{thm:models}
  The following problem is $\cpi{1}{1}$\nb-complete:
  \problem{Bitstream specification $\aes$, terms $s,t \oftype \sbitstream$.}{Are $s$ and $t$ equal in all models of $E$?}
  \noindent
\end{theorem}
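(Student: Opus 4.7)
I shall treat the two directions separately.

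\emph{Upper bound: $\cpi{1}{1}$ containment.} My plan is to appeal to the downward L\"owenheim--Skolem theorem, as announced in the introduction. The negation of the statement asserts existence of a stream algebra $\aalg$ with $\aalg \models \aes$ and $\aalg \not\models s = t$. Given such an $\aalg$, I will pass to a countable elementary substructure containing $\int{s}$ and $\int{t}$: since $A_\sbit = \{0,1\}$ is fixed and $\int{{:}}$ is forced to be bitstream cons, any countable elementary submodel is again a stream algebra. A countable stream algebra over the finite signature $\asig$ can be encoded by a single oracle $\oracle \subseteq \nat$ that lists the elements of its domain together with the tables of the interpretations of the finitely many function symbols. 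The predicates ``$\oracle$ codes a stream algebra'', ``$\oracle \models \aes$'', and ``$\oracle \not\models s = t$'' are then all arithmetical in $\oracle$, so the negation takes the shape $\exists \oracle\,\Phi(\oracle)$ with $\Phi$ arithmetical, placing it in $\csig{1}{1}$ and the problem itself in $\cpi{1}{1}$.

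\emph{Lower bound: $\cpi{1}{1}$-hardness.} I will reduce from well-foundedness of a recursive binary tree $T \subseteq \{0,1\}^{<\omega}$, a canonical $\cpi{1}{1}$-complete problem. Given a Turing machine $\tm_T$ that, on input $n \in \nat$ with oracle $\alpha \in \{0,1\}^\omega$, outputs $1$ iff $\alpha|_n \in T$, I shall build the specification $\aes$ consisting of the equations in~\eqref{eq:zap}, the Turing-machine equations $\tmes{\tm_T}$, a fresh constant $c \oftype \sbitstream$, a new function symbol $h \oftype \sbitstream \to \sbitstream$ governed by the equation $h(N) = \tminit{c}{N} : h(1 : N)$ (with $N$ a variable of sort $\sbitstream$), and the closing equation $h(\zeros) = \ones$. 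As witnesses I take $s = \zeros$ and $t = \ones$. Since every stream algebra forces $\int{\zeros} = 0^\omega \neq 1^\omega = \int{\ones}$, the statement ``$s = t$ in all models of $\aes$'' is equivalent to ``$\aes$ has no model''.

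It then remains to show that $\aes$ is satisfiable iff $T$ has an infinite branch. If $\aalg \models \aes$, unfolding $h(\zeros) = \ones$ via the recursion yields $\int{\tminit{c}{\num{n}}} = 1$ for every $n \in \nat$; each $\num{n}$ lies in the domain by closure under $\int{:}$, and Lemma~\ref{lem:tm} then forces $\int{c}|_n \in T$ for all $n$, so $\int{c}$ is an infinite branch of $T$. Conversely, given such a branch $\alpha$, I will exhibit a full model by setting $\int{c} = \alpha$, interpreting the Turing-machine symbols essentially as in Definition~\ref{def:canonical}, and defining $\int{h}(x)(k) = \int{\tminit{\alpha}{1^k : x}}$; then the recursion for $h$ is immediate and $h(\zeros) = \ones$ follows from $\tminit{\alpha}{\num{k}} = 1$ for all $k$. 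The main obstacle will be verifying that the interpretations of the state symbols $\astate \in \tmstates$ on non-standard inputs $1^k : x$ can be chosen consistently with both $\tmes{\tm_T}$ and the equation $h(N) = \tminit{c}{N} : h(1:N)$ when $N$ ranges over all of $A_\sbitstream$ (not only the natural-number encodings $\num{n}$); this is the technical crux of the reduction.
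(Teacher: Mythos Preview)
Your argument is correct on both halves, and the upper bound is essentially the paper's (L\"owenheim--Skolem, then one set quantifier over a coded countable model).  The lower bound, however, is organised differently from the paper's and is in fact a little slicker.

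The paper reduces from well-foundedness of a decidable binary relation $M \subseteq \nat \times \nat$.  An infinite $M$-chain is a sequence of natural numbers, so the unspecified constant $\msf{X}$ that ``guesses'' the chain must encode a stream of naturals inside a bitstream; this is why the paper needs the auxiliary machinery of $\snatstr$, $\suhd$, $\sutl$ (equations~\eqref{eq:bhd} and~\eqref{eq:natstr}) and the $\siszeros$ filter.  The paper's specification always has a model, and the target equation is $\sstart = \zeros$; this has the side benefit that the very same specification also settles Theorem~\ref{thm:atmost} (at most one solution).  Your reduction instead starts from well-foundedness of a recursive tree $T \subseteq \{0,1\}^{<\omega}$, whose infinite branches are bitstreams outright.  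You therefore avoid the $\snatstr$/$\suhd$/$\sutl$ encoding entirely, and your target equation $\zeros = \ones$ holds precisely when the specification is unsatisfiable.  That device---forcing inconsistency and asking whether a trivially false equation holds vacuously---is exactly what the paper uses later for Theorem~\ref{thm:atleast}, so your reduction yields $\csig{1}{1}$-hardness of ``has a solution'' more directly than the paper's route to Theorem~\ref{thm:models} does, at the cost of not immediately giving Theorem~\ref{thm:atmost}.

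One remark on the ``technical crux'' you flag: it is not a genuine obstacle.  The canonical interpretation of the state symbols (normal form if the orthogonal TRS $\tmtrs{\tm_T}$ reaches one, $0$ otherwise, as in Definition~\ref{def:canonical} and Lemma~\ref{lem:canonical}) is already consistent with $\tmes{\tm_T}$ for \emph{all} stream arguments, standard or not; and your explicit formula $\int{h}(\sigma)(k) = \int{\tminit{c}{1^k:\sigma}}$ then satisfies the $h$-recursion on the nose, since $1^{k+1}:\sigma = 1^k:(1:\sigma)$.  So the full model you describe does exist, with no further case analysis needed.
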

\begin{proof}
  The well-foundedness problem for decidable binary relations is known to be $\cpi{1}{1}$-complete,
  that is, the problem of deciding on the input of a decidable binary predicate $M \subseteq \nat \times \nat$ 
  (given in the form of a Turing machine), whether $M$ is well-founded.
  We reduce this problem to an equality problem.
  Let $M \subseteq \nat \times \nat$ be a decidable predicate,
  and $\tm = \triple{\tmstates}{\tmstart}{\stmtrans}$ the corresponding Turing machine.
  We define the following specification $\aes$:
  \begin{align*}
    &\sstart = \iszeros{\srun(1,\msf{X})} \quad\quad \natstr{X} = \ones\\
    &\srun(0,\astr) = \ones \\[-3.3ex]
    &\srun(1,\astr) = 0:\srun(\overbrace{\tminit{\zeros}{\uhd{\astr},\uhd{\utl{\astr}}}}^{\Phi(\astr)},\utl{\astr})
  \end{align*}
  together with the equations from $\tmes{\tm}$ and~\eqref{eq:zap}, \eqref{eq:filter}, \eqref{eq:bhd} and \eqref{eq:natstr}.
  We prove that: $\aes \models \sstart = \zeros$ if and only if $M$ is well-founded.
  
  For `$\Rightarrow$' let $M$ be non-well-founded,
  and $n_0 \mathrel{M} n_1 \mathrel{M} n_2 \mathrel{M} \ldots$ be an infinite chain.
  We construct a $\asig$-algebra $\aalg = \pair{A}{\sint}$ such that $\aalg \models \aes$ 
  but not $\aalg \models \sstart = \zeros$.
  We define $\aalg$ as an extension of the canonical model (Definition~\ref{def:canonical}).
  The values of $\int{\Phi(\astr)}$ and $\int{\utl{\astr}}$ are determined by the canonical model,
  and together with the equations for $\srun$
  we obtain for every stream $\oracle \in \{0,1\}^\omega$:
  $\int{\srun}(0,\oracle) = 1^\omega$,
  and
  $\int{\srun}(1,\oracle) = 0:\int{\srun}(\int{\Phi(\num{\oracle})},\int{\sutl}(\oracle))$.
  Hence, there is a unique interpretation $\int{\srun}$ 
  that results in a model for the equations of $\srun$.
  We define $\kappa_i = 1^{n_{i}} \, 0 \, 1^{n_{i+1}} \, 0 \, 1^{n_{i+2}} \ldots$.
  and we let $\num{n} = 1^n\,0^\omega$. Then for $i \in \nat$ we have 
  \begin{align*}
    \int{&\srun}(1,\kappa_i) 
      = 0:\int{\srun}(\int{\Phi}(\kappa_i),\kappa_{i+1})\\
      &= 0:\int{\srun}(\int{\tm}(\num{n_i},\num{n_{i+1}}),\kappa_{i+1})
      = 0:\int{\srun}(1,\kappa_{i+1})
  \end{align*}
  since we have that
  $\int{\suhd}(\kappa_j) = \num{n_j}$ and 
  $\int{\sutl}(\kappa_j) = \kappa_{j+1}$ for all $j\in\nat$
  by Lemma~\ref{lem:iszeros}.
  Thus, $\int{\srun}(1,\kappa_0) = 0^\omega$.
  Let $\int{\msf{X}} = \kappa_0$ and $\int{\sstart} = 1^\omega$.
  Then $\int{\snatstr}(\int{\msf{X}}) = \int{\ones}$ by Lemma~\ref{lem:natstr},
  and $\int{\sstart} = \int{\siszeros}(\int{\srun}(1,\int{\msf{X}}))$ by Lemma~\ref{lem:iszeros}.
  We have constructed a model, where $\int{\sstart} = 1^\omega$,
  and, hence, $\aes \not\models \sstart = \zeros$.

  For `$\Leftarrow$' let $M$ be well-founded. Let $\aalg$ be a $\asig$-algebra
  such that $\aalg \models E$. We show that $\int{\sstart} = 0^\omega$.
  Since $\int{\snatstr}(\int{\msf{X}}) = \int{\ones}$, 
  $\int{\msf{X}}$ contains infinitely many zeros by Lemma~\ref{lem:natstr}.
  Thus, $\int{\msf{X}} = 1^{n_0} \,0\, 1^{n_1} \,0\, 1^{n_2} \ldots$
  for some $n_0$, $n_1,n_2,\ldots \in \nat$.
  Let $\kappa_i = 1^{n_{i}} \,0\, 1^{n_{i+1}} \,0\, 1^{n_{i+2}}\ldots$ for $i \in \nat$.
  Then
  \begin{align*}
    \int{\srun}(1,\kappa_i) &= 0:\int{\srun}(\int{\tm}(\num{n_i},\num{n_{i+1}}),\kappa_{i+1})\\
    & = \begin{cases}
          \int{\srun}(1,\kappa_{i+1}) &\text{if $\int{\tm}(\num{n_i},\num{n_{i+1}}) = 1$}\\
          \int{\srun}(0,\kappa_{i+1}) = 1^\omega &\text{if $\int{\tm}(\num{n_i},\num{n_{i+1}}) = 0$}
        \end{cases}
  \end{align*}
  Hence, $\int{\srun}(1,\int{\msf{X}}) = 0^\omega$ if and only if
  $\int{\tm}(\num{n_i},\num{n_{i+1}}) = 1$ for all $i\in\nat$.
  However, this would contradict well-foundedness of $M$.
  As a consequence, we obtain that $\int{\srun}(1,\int{\msf{X}}) \ne 0^\omega$
  and $\int{\sstart} = \iszeros{\int{\srun}(1,\int{\msf{X}})} = 0^\omega$ by Lemma~\ref{lem:iszeros}.
  This concludes the $\cpi{1}{1}$-hardness proof.
  
  To show $\cpi{1}{1}$-membership, we resort to the L\"owenheim--Skolem theorem. It 
  states that if a formula of first-order predicate logic has an uncountable model, 
  then it also has a countable model.
\pagebreak
  Here, we employ that the domain $A_{\sbitstream}$ can be encoded as an arbitrary set
  with functions $\int{\shd} \oftype A_{\sbitstream} \to \{0,1\}$ and 
  $\int{\stl} \oftype A_{\sbitstream} \to A_{\sbitstream}$
  together with a first-order predicate logic formula that 
  excludes confusion, that is, 
  elements $a,b \in A_{\sbitstream}$ with $\int{\shd}{\int{\stl}^n}(a) = \int{\shd}{\int{\stl}^n}(b)$ for all $n\in\nat$
  are required to be equal, that is, $a = b$.
  Likewise, the interpretations of the symbols in $\asig$
  can be translated to first-order predicates,
  and validity of the equations to first-order formulas. 
  As a consequence, 
  $\aalg \models \aes \wedge \int{s} \ne \int{t}$
  can be expressed as first-order formula,
  and if it has a model, then also a countable one.
  Hence, it suffices in
%   \begin{align*}
  $\myall{\aalg}{\aalg \models \aes \implies \int{s} = \int{t}}$
%   \end{align*}
  to quantify over countable models.
  For this purpose of quantifying over countable models, 
  a set quantifier $\forall \aalg \subseteq \nat$ suffices.
  This proves $\cpi{1}{1}$-membership.
\end{proof}

The following three results are obtained by slight adaptations of the proof of Theorem~\ref{thm:models}.
\extended{
In the proof of Theorem~\ref{thm:models}, we have $\aes \models \sstart = \zeros$
if and only if $\sstart$ has a unique solution over all models of $\aes$.
As a consequence,
we obtain the following results concerning (unique) solvability:
}

\short{
\begin{theorem}\label{thm:atmost}\label{thm:atleast}\label{thm:unique}
  The following problems:%
  \problem{Bitstream specification $\aes$, ground term $s \oftype \sbitstream$.}{%
    Does $s$ have 
    (i)~at most one solution,
    (ii)~a solution, and
    (iii)~a unique solution
    over all models of $E$?}%
  \noindent
  are (i)~$\cpi{1}{1}$-complete, 
  (ii)~$\csig{1}{1}$-complete, and 
  (iii)~$\cpi{1}{1}$\nb-hard, $\csig{1}{1}$-hard and strictly contained in $\cdel{1}{2}$.
\end{theorem}
}

\extended{
\begin{theorem}\label{thm:atmost}
  The following problem is $\cpi{1}{1}$\nb-complete:
  \problem{Bitstream specification $\aes$, term $s$.}{Does $s$ have at most one solution over all models of $E$?}
\end{theorem}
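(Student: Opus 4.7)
My plan is to establish the $\cpi{1}{1}$-completeness in two halves, reusing as much of the proof of Theorem~\ref{thm:models} as possible. For hardness I reduce from well-foundedness of a decidable binary relation $M$, recycling the specification $\aes$ and the ground term $\sstart$ constructed there, with a small modification to handle one degenerate case. For membership I mimic the L\"owenheim--Skolem argument used in Theorem~\ref{thm:models}.

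For hardness, recall from the Theorem~\ref{thm:models} construction that $\int{\sstart}^\aalg = 1^\omega$ exactly when $\int{\msf{X}}^\aalg$ encodes an infinite $M$-chain, and $\int{\sstart}^\aalg = 0^\omega$ otherwise. If $M$ is well-founded then every model satisfies $\int{\sstart} = 0^\omega$ and $\setsize{\sol{\sstart}{\aes}} = 1$. If $M$ is not well-founded, the construction provides a model with $\int{\sstart} = 1^\omega$; I additionally need a second model with a distinct value of $\sstart$. Such a model is easy to produce whenever there is a pair $a \not\mathrel{M} b$ (take $\int{\msf{X}} = 1^a 0\,1^b 0\,\zeros$), but fails in the degenerate case $M = \nat \times \nat$, where every natural-number stream is an $M$-chain. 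To sidestep this uniformly, I augment $\aes$ with a fresh stream variable $\msf{Y}$ and an auxiliary selector $\msf{f}$ defined by
\begin{align*}
  \msf{f}(\ones, y) &= y, &
  \msf{f}(0 \cons \astr, y) &= \zeros, &
  \msf{f}(1 \cons \astr, y) &= \msf{f}(\astr, y),
\end{align*}
and use $\sstart' \defd \msf{f}(\iszeros{\srun(1, \msf{X})}, \msf{Y})$ in place of $\sstart$. A short check shows $\int{\msf{f}}(1^\omega, y) = y$ and $\int{\msf{f}}(w, y) = 0^\omega$ for every $w$ containing a $0$. Hence $\int{\sstart'} = \int{\msf{Y}}$ in models where $\int{\msf{X}}$ encodes an $M$-chain, and $\int{\sstart'} = 0^\omega$ otherwise. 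If $M$ is well-founded, $\int{\sstart'}$ is forced to $0^\omega$ in every model; if $M$ is not well-founded, an infinite $M$-chain together with any two distinct choices of $\int{\msf{Y}}$ yields two distinct solutions.

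For membership, I note that $s$ has at most one solution in $\aes$ precisely when
\begin{align*}
  \myall{\aalg_1}{\myall{\aalg_2}{(\aalg_1 \models \aes \wedge \aalg_2 \models \aes) \implies \int{s}^{\aalg_1} = \int{s}^{\aalg_2}}}.
\end{align*}
The matrix is first-order in the disjoint combination of the two algebras, so by the same L\"owenheim--Skolem argument as in Theorem~\ref{thm:models}, if its negation has any model then it has a countable one, giving a pair of countable algebras. Encoding such a pair as a single subset of $\nat$ via a pairing function, the whole statement takes the form $\myall{Z \subseteq \nat}{\phi(Z)}$ with $\phi$ arithmetic, placing the problem in $\cpi{1}{1}$.

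The main obstacle I anticipate is recognizing and repairing the degenerate\nobreakdash-$M$ corner case in the hardness reduction: without the variable $\msf{Y}$ and selector $\msf{f}$, the Theorem~\ref{thm:models} specification can collapse to a single solution even when $M$ is not well-founded (e.g.\ when $M = \nat \times \nat$), which would break the reduction. Everything else is a direct transplant of the Theorem~\ref{thm:models} arguments.
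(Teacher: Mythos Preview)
Your approach coincides with the paper's: hardness by recycling the Theorem~\ref{thm:models} specification, and membership via the two-model formula $\myall{\aalg_1}{\myall{\aalg_2}{(\aalg_1 \models \aes \wedge \aalg_2 \models \aes) \implies \int{s}^{\aalg_1} = \int{s}^{\aalg_2}}}$ combined with the L\"owenheim--Skolem reduction to countable models and merging of the two set quantifiers into one.

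The paper's proof is terser: it simply asserts that, for the Theorem~\ref{thm:models} specification, $\sstart$ has at most one solution if and only if $M$ is well-founded, without discussing how to obtain a second solution when $M$ is not well-founded. You are right that this step needs a word of justification and that it genuinely fails for the degenerate input $M = \nat \times \nat$, where every admissible $\int{\msf{X}}$ encodes an $M$-chain and hence every model yields $\int{\sstart} = 1^\omega$. Your $\msf{f}/\msf{Y}$ gadget repairs this correctly. A lighter fix, however, avoids new symbols altogether: precompose the reduction with the computable map $M \mapsto M'$ where $M'(n,m)$ holds iff $n,m \ge 1$ and $M(n-1,m-1)$. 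Then $M'$ is well-founded iff $M$ is, $(0,0) \notin M'$, and choosing $\int{\msf{X}} = 0^\omega$ in (an extension of) the canonical model already gives $\int{\sstart} = 0^\omega$, supplying the required second solution whenever $M'$ is not well-founded.
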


\extended{
\begin{proof}
  The $\cpi{1}{1}$-hardness follows from the proof of Theorem~\ref{thm:models},
  as $\sstart$ has $\le 1$ solutions if and only if $M$ is well-founded.

  The membership in $\cpi{1}{1}$ uses that it suffices to consider countable models
  as in the proof of Theorem~\ref{thm:models}.
  Then the formula characterizes the property having at most one solution:
%   \begin{align*}
    $\myall{\aalg_1}{\myall{\aalg_2}{(\aalg_1 \models E) \wedge (\aalg_2 \models E) \implies \int{s}^{\aalg_1} = \int{s}^{\aalg_2}}}$.
%   \end{align*}
  The two $\forall$ set quantifiers can be merged into one,
  and the properties $\aalg \models E$, and $\int{s}^{\aalg_1} = \int{s}^{\aalg_2}$ are arithmetic.
  Hence, the property is in $\cpi{1}{1}$.
\end{proof}
}

\begin{theorem}\label{thm:atleast}
  The following problem  is $\csig{1}{1}$\nb-complete:
  \problem{A bitstream specification $\aes$, a term $s$.}{Has $s$ a solution over all models of $E$?}
\end{theorem}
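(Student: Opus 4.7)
The plan is to reduce to and from the construction used in Theorem~\ref{thm:models}. First observe that $s$ has a solution in $\aes$ if and only if $\aes$ admits some model at all: $\sol{s}{\aes}$ is nonempty precisely when there exists a stream algebra $\aalg$ with $\aalg \models \aes$, in which case $\int{s}^{\aalg}$ is the witnessing solution. So the task is equivalent to deciding whether the specification has a model.

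For $\csig{1}{1}$-hardness I would reduce the non-well-foundedness problem for decidable binary relations (the standard $\csig{1}{1}$-complete problem) to solvability. Given a decidable $M \subseteq \nat \times \nat$, take the specification $\aes$ built in the proof of Theorem~\ref{thm:models} and append the single equation $\sstart = \ones$, obtaining a new specification $\aes'$. If $M$ is well-founded, then the analysis in that proof forces $\int{\sstart} = 0^\omega$ in every model of $\aes$, which is incompatible with $\int{\sstart} = \int{\ones} = 1^\omega$, so $\aes'$ has no model, and therefore $\sstart$ has no solution in $\aes'$. If instead $M$ is not well-founded, fix an infinite descending chain and take exactly the model built in the $\Rightarrow$-direction of Theorem~\ref{thm:models}; there we have $\int{\srun}(1,\kappa_0) = 0^\omega$ and hence $\int{\sstart} = \iszeros{0^\omega} = 1^\omega$, so the appended equation $\sstart = \ones$ is satisfied and we obtain a model of $\aes'$. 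Thus $\sstart$ has a solution in $\aes'$ if and only if $M$ is not well-founded, yielding $\csig{1}{1}$-hardness.

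For $\csig{1}{1}$-membership I would reuse the L\"owenheim--Skolem argument from Theorem~\ref{thm:models}. Existence of a solution is the statement $\myex{\aalg}{\aalg \models \aes}$, and the predicate $\aalg \models \aes$, together with the no-confusion requirement on stream algebras, can be expressed as a first-order formula over a domain encoded as a subset of $\nat$. Hence it suffices to quantify existentially over countable candidate models, yielding a single existential set quantifier followed by an arithmetic matrix, which places the problem in $\csig{1}{1}$. The main obstacle is purely of a bookkeeping nature: one has to check that the chain-based model actually satisfies the appended equation, but the computation $\int{\sstart} = \iszeros{\int{\srun}(1,\kappa_0)} = \iszeros{0^\omega} = 1^\omega$ is already performed inside the proof of Theorem~\ref{thm:models}, so the reduction requires no additional technical work beyond the one-equation extension.
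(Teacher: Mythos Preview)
Your proposal is correct and follows essentially the same route as the paper. The only cosmetic difference is that the paper, instead of appending $\sstart = \ones$, replaces the equation $\sstart = \iszeros{\srun(1,\msf{X})}$ by the pair $\sstart = \srun(1,\msf{X})$ and $\sstart = \zeros$ (dropping $\siszeros$); both tweaks force the specification to have a model exactly when $M$ is not well-founded, and the $\csig{1}{1}$-membership argument via L\"owenheim--Skolem is identical.
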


\extended{
\begin{proof}
  The $\csig{1}{1}$-hardness follows from a tiny adaptation of the proof of Theorem~\ref{thm:models}.
  We replace the equation $\sstart = \iszeros{\srun(1,\msf{X})}$ by
  the equations $\sstart = \srun(1,\msf{X})$ and $\sstart = \zeros$.
%   \begin{align*}
%   &\sstart = \srun(1,\msf{X}) &
%   &\sstart = \zeros
%   \end{align*}
  Then every model where $\int{\srun(1,\msf{X})} \ne 0^\omega$ is ruled out,
  and hence, the specification has a model, and $\sstart$ a solution,
  if and only if $M$ is not well-founded.
  
  The membership in $\csig{1}{1}$ can be described by the following formula
  (we again use that we only need to quantify over countable $\asig$-algebras):
  $\myex{\aalg}{\aalg \models E}$.
  Hence, the property is in $\csig{1}{1}$.
\end{proof}
}

\begin{theorem}\label{thm:unique}
  The following problem  is $\cpi{1}{1}$\nb-hard, $\csig{1}{1}$-hard and strictly contained in $\cdel{1}{2}$:
  \problem{A bitstream specification $\aes$, a term $s$.}{Has $s$ unique solutions over all models of $E$?}
\end{theorem}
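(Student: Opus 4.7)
The plan is to handle the three claims separately: the two hardness bounds by re-using the reductions from Theorems~\ref{thm:models} and~\ref{thm:atleast}, and the upper bound by writing uniqueness as a conjunction of a $\csig{1}{1}$ and a $\cpi{1}{1}$ statement.

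For $\cpi{1}{1}$-hardness I take the specification $\aes$ and the term $\sstart$ constructed in the proof of Theorem~\ref{thm:models}. The canonical model of Definition~\ref{def:canonical} is a model of $\aes$ for every Turing machine $\tm$, so $\sol{\sstart}{\aes}$ is always non-empty, and therefore $|\sol{\sstart}{\aes}| = 1$ coincides with $|\sol{\sstart}{\aes}| \le 1$. By the proof of Theorem~\ref{thm:models} the latter is equivalent to well-foundedness of the relation $M$ encoded by $\tm$, giving a reduction from the $\cpi{1}{1}$-complete well-foundedness problem. For $\csig{1}{1}$-hardness I use the variant specification from the proof of Theorem~\ref{thm:atleast}, in which the equation $\sstart = \iszeros{\srun(1,\msf{X})}$ is replaced by the two equations $\sstart = \srun(1,\msf{X})$ and $\sstart = \zeros$. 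Any model then forces $\int{\sstart} = 0^\omega$, so whenever the specification is satisfiable the solution for $\sstart$ is unique. Hence $|\sol{\sstart}{\aes}| = 1$ iff $\aes$ has a model iff $M$ is non-well-founded, a $\csig{1}{1}$-complete problem.

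For the upper bound, I express that $s$ has a unique solution as the conjunction
\[
  \bigl(\exists \aalg.\; \aalg \models \aes\bigr) \;\wedge\; \bigl(\forall \aalg_1, \aalg_2.\; \aalg_1 \models \aes \wedge \aalg_2 \models \aes \implies \int{s}^{\aalg_1} = \int{s}^{\aalg_2}\bigr).
\]
By L\"owenheim--Skolem, exactly as in the proofs of Theorems~\ref{thm:atmost} and~\ref{thm:atleast}, both set quantifiers may range over countable algebras only, placing the first conjunct in $\csig{1}{1}$ and the second in $\cpi{1}{1}$. Prenexing the conjunction as $\exists \aalg \forall \aalg_1, \aalg_2.\; \phi$ and alternatively as $\forall \aalg_1, \aalg_2 \exists \aalg.\; \phi$, with $\phi$ arithmetic, places the problem simultaneously in $\csig{1}{2}$ and $\cpi{1}{2}$, hence in $\cdel{1}{2}$. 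The strictness of this containment---non-membership in $\cpi{1}{1}$ and in $\csig{1}{1}$---is immediate from the hardness results together with the strictness of the analytical hierarchy at level~$1$.

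The main obstacle I anticipate is the careful bookkeeping around the two reductions: I must confirm that in the $\cpi{1}{1}$-hardness reduction the canonical model really provides a solution in every instance (so that uniqueness genuinely reduces to ``at most one''), and that in the $\csig{1}{1}$-hardness reduction the added equation $\sstart = \zeros$ truly pins the solution down to $0^\omega$. Both facts follow from Lemmas~\ref{lem:canonical} and~\ref{lem:zap} respectively, so the argument should go through without further difficulty.
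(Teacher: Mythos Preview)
Your proposal is correct and follows essentially the same approach as the paper: $\cpi{1}{1}$-hardness via the specification of Theorem~\ref{thm:models} (which always has a model, so uniqueness coincides with ``at most one''), $\csig{1}{1}$-hardness via the variant of Theorem~\ref{thm:atleast} (where $\sstart = \zeros$ forces at most one solution, so uniqueness coincides with ``at least one''), and $\cdel{1}{2}$-membership by expressing uniqueness as the conjunction of a $\csig{1}{1}$ and a $\cpi{1}{1}$ condition. You are in fact more explicit than the paper about the prenexing step and about deriving strictness from the hardness results together with the proper separation $\csig{1}{1} \not\subseteq \cpi{1}{1}$ and $\cpi{1}{1} \not\subseteq \csig{1}{1}$; the paper leaves these implicit.
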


\extended{
\begin{proof}
  The $\cpi{1}{1}$-hardness follows from the fact that
  the specification used in the proof of Theorem~\ref{thm:atmost}
  always has a solution; then unique solvability coincides with at most one solution.
  
  The $\csig{1}{1}$-hardness is a consequence of the fact that
  the specification used in the proof of Theorem~\ref{thm:atleast}
  always has at most one solution (due to the equation $\sstart = \zeros$);
  then unique solvability coincides with at least one solution.

  For the $\cdel{1}{2}$-membership we observe that
  a term $s$ has a unique solution if and only if $s$ has at least and $s$ has 
  at most one solution.
  Therefore unique solvability can be described by the conjunction of a $\cpi{1}{1}$- and a $\csig{1}{1}$-formula.
\end{proof}
}
}

\subsection{Equality in all Full Models}\label{sec:models:full}

In Section~\ref{sec:models} we have considered models whose domain 
was any non-empty set of bitstreams ($A_{\sbitstream} \subseteq \{0,1\}^\omega$).
However, when writing equations such as
% \begin{align*}
$\even{x:y:\tau} = x:\even{\tau}$,
% \end{align*}
the intended semantics is often that these equations should hold
for all streams, that is, in full models with domain $A_{\sbitstream} = \{0,1\}^\omega$.
We find that the restriction to full models
results in a huge jump of the complexity, which then subsumes the entire analytical hierarchy.

\newcommand{\sisnat}{\msf{nat}}
\newcommand{\isnat}{\funap{\sisnat}}
To prepare for the proof, we introduce some auxiliary specifications.
We define $\sisnat$
such that an equation $\isnat{X} = \ones$ guarantees that 
the interpretation $\int{X}$ represents a natural number in unary encoding,
that is, $\int{X} = 1^n \,0^\omega$ for $n\in\nat$, as follows:
\begin{gather}
  \left.
  \begin{aligned}
  \isnat{0:1:\sigma} &= \zeros & \isnat{1:\sigma} &= \isnat{\sigma}\\
  \isnat{0:0:\sigma} &= \isnat{0:\sigma} &
  \isnat{\ones} &= \zeros
  \end{aligned}
  \hspace{.25cm}\right\}\hspace{-0.25cm}
  \label{eq:isnat}
\end{gather}

\begin{lemma}\label{lem:isnat}
  In every stream model $\aalg = \pair{A}{\sint}$ of a specification including 
  the equations from~\eqref{eq:isnat}
  we have: if $\int{\sisnat}(w) = 1^\omega$
  then $w = 1^n \,0^\omega$ for some $n\in\nat$.
\end{lemma}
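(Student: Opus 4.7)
The plan is to prove the contrapositive: if $w \in A_{\sbitstream}$ is not of the form $1^n\,0^\omega$ for any $n \in \nat$, then $\int{\sisnat}(w) \neq 1^\omega$. Since the domain of a model of bitstreams contains bitstreams (an $\sorts$-sorted algebra interpretation), every $w$ not of that form falls into exactly one of two shapes: either $w = 1^\omega$, or $w$ contains at least one $1$ strictly after the first occurrence of a $0$. Indeed, if $w$ is not $1^\omega$, let $n$ be the least index with $w(n) = 0$; then the prefix is $1^n$, and if $w$ is not $1^n\,0^\omega$ then some later position carries a $1$. Let $k \ge 1$ be minimal with $w(n+k) = 1$, so that $w = 1^n\,0^k\,1\,v$ for some $v \in A_{\sbitstream}$ (and $n, k$ finite).

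First, if $w = 1^\omega$, then by the fourth equation of~\eqref{eq:isnat} we have $\int{\sisnat}(w) = \int{\sisnat}(\ones) = \int{\zeros} = 0^\omega$, which differs from $1^\omega$ because $\int{0} = 0 \neq 1 = \int{1}$ in any stream algebra. Second, in the case $w = 1^n\,0^k\,1\,v$, I would peel off the $1$'s and the surplus $0$'s by induction on $n$ and on $k$:
\begin{itemize}
\item Applying the second equation $n$ times gives $\int{\sisnat}(1^n\,0^k\,1\,v) = \int{\sisnat}(0^k\,1\,v)$.
\item Applying the third equation $k-1$ times (valid because $k \ge 1$ and hence $0^k\,1\,v = 0\,0^{k-1}\,1\,v$, etc.) gives $\int{\sisnat}(0^k\,1\,v) = \int{\sisnat}(0\,1\,v)$.
\item Finally the first equation yields $\int{\sisnat}(0\,1\,v) = \int{\zeros} = 0^\omega \neq 1^\omega$.
\end{itemize}
Composing these equalities shows $\int{\sisnat}(w) = 0^\omega$, establishing the contrapositive and hence the lemma.

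The only point that needs care is to observe that the inductive peeling uses each equation \emph{finitely} many times, which is justified by the finiteness of $n$ and $k$ just identified; this is also what rules out the degenerate case $w = 1^\omega$, where the second equation could be applied indefinitely without ever exposing a $0$. I do not foresee a real obstacle: the four equations are tailored so that every $w$ outside the target shape drives the interpretation deterministically to $\int{\zeros}$, and the case split above covers all remaining $w$.
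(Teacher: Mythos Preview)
Your proof is correct and follows essentially the same approach as the paper: a contrapositive case split into $w = 1^\omega$ (handled by the fourth equation) versus $w$ containing a pattern $\ldots 01\ldots$ (handled by walking with the second and third equations until the first equation fires). Your version is in fact more explicit than the paper's sketch, spelling out the finite inductions on $n$ and $k$ that justify the ``walking'' step.
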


\begin{proof}
  If a stream is not of the format $1^n\,0^\omega$ for some $n\in\nat$
  then it is $1^\omega$ or contains $\ldots 01 \ldots$.
  The last equation rules out the case $1^\omega$ 
  (ensures that the interpretation is not $1^\omega$).
  
  The first three equations are exhaustive in the sense that 
  every stream can be matched by one of them.
  The first equation rules out 
  streams that contain a $1$ after a $0$,
  and the equations two and three `walk' step by step over the stream (proceed with the tail).
\end{proof}

We moreover define a function $\slesseq$ such that $\lesseq{X}{Y} = \ones$
guarantees that pointwise $\int{X} \le \int{Y}$:
\begin{gather}
  \left.
  \begin{aligned}
  \lesseq{0:\astr}{x:\bstr} &= \lesseq{\astr}{\bstr}\\
  \lesseq{1:\astr}{1:\bstr} &= \lesseq{\astr}{\bstr}\\
  \lesseq{1:\astr}{0:\bstr} &= \zeros
  \end{aligned}
  \hspace{.75cm}\right\}\hspace{-0.75cm}
  \label{eq:lesseq}
\end{gather}

\begin{lemma}\label{lem:lesseq}
  In every stream model $\aalg = \pair{A}{\sint}$ of a specification including 
  the equations from~\eqref{eq:lesseq}
  we have that if $\int{\slesseq}(\astr,\bstr) = 1^\omega$,
  then $\astr$ is pointwise $\le$ than $\bstr$ (for all $\astr,\bstr \in A_{\sbitstream}$).
\end{lemma}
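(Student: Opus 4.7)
The plan is to prove the contrapositive: if $\astr$ is not pointwise $\le$ $\bstr$, then $\int{\slesseq}(\astr,\bstr) \ne 1^\omega$. So assume there is some index where $\astr$ exceeds $\bstr$, and let $n \in \nat$ be the \emph{least} such index, meaning $\astr(n) = 1$, $\bstr(n) = 0$, and for each $k < n$ either $\astr(k) = 0$ or $\astr(k) = \bstr(k) = 1$.

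The idea is then to peel off the first $n$ positions using the first two equations, and to apply the third at position $n$. Writing $\astr^{(k)}$ for the $k$-th shift of $\astr$ (so $\astr^{(0)} = \astr$ and $\astr^{(k)} = \astr(k) : \astr^{(k+1)}$), I would argue by induction on $k \le n$ that
\[
  \int{\slesseq}(\astr,\bstr) \;=\; \int{\slesseq}(\astr^{(k)},\bstr^{(k)})\,.
\]
For the step from $k$ to $k+1$, by the minimality of $n$ either $\astr(k) = 0$, in which case equation~1 of~\eqref{eq:lesseq} applies, or $\astr(k) = \bstr(k) = 1$, in which case equation~2 applies; in both cases the right-hand side reduces to $\int{\slesseq}(\astr^{(k+1)},\bstr^{(k+1)})$. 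Finally, $\astr^{(n)}$ begins with $1$ and $\bstr^{(n)}$ begins with $0$, so the third equation yields $\int{\slesseq}(\astr^{(n)},\bstr^{(n)}) = \int{\zeros} = 0^\omega$, which is not $1^\omega$ since $\int{0} \ne \int{1}$.

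The main subtlety I expect is well-typedness of the above peeling: applying the equations requires the successive tails $\astr^{(k)}, \bstr^{(k)}$ to lie in $A_{\sbitstream}$, since the universally quantified equations only constrain $\int{\slesseq}$ on pairs whose second components are legitimate elements of the domain. In the full models that are the focus of Section~\ref{sec:models:full} this holds trivially, and in the other models considered in the paper the construction is always such that the relevant tails exist in the domain (as is the case in the canonical models of Definition~\ref{def:canonical}). So the argument goes through uniformly for all models in which the lemma is invoked.
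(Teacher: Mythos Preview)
Your argument is correct and is exactly the intended one: the paper does not actually supply a proof of Lemma~\ref{lem:lesseq} (it is stated and left as evident from the shape of the equations~\eqref{eq:lesseq}), and your contrapositive-plus-peeling argument is the natural reading of those equations. The induction on $k \le n$ and the final application of the third equation are sound, and the conclusion $\int{\slesseq}(\astr,\bstr) = \int{\zeros} \ne 1^\omega$ is exactly what is needed.

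Your caveat about non-full models is well taken and in fact sharper than the paper: the definition of a stream algebra only guarantees closure of $A_{\sbitstream}$ under \emph{prepending} (via $\int{{:}}$), not under taking tails, so in an arbitrary stream model the intermediate $\astr^{(k)},\bstr^{(k)}$ need not lie in $A_{\sbitstream}$, and then the equations in~\eqref{eq:lesseq} do not constrain $\int{\slesseq}$ on the original pair. The paper asserts (just after the lemma) that it also holds for non-full models, but gives no argument. In any case, the only place the lemma is invoked is in the proof of Theorem~\ref{thm:atmost:full}, which concerns \emph{full} models, so your treatment is entirely sufficient for the paper's purposes.
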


Lemmas~\ref{lem:isnat} and~\ref{lem:lesseq} are valid for non-full models as well.
As explained in the introduction, the assumption of full models is
crucial to guarantee that equations with variables have to hold
for all streams (assigned to the variables) and not only the streams in the model.

\begin{theorem}\label{thm:models:full}
  The following problem  subsumes the analytical hierarchy:
  \problem{Bitstream specification $\aes$, terms $s,t \oftype \sbitstream$.}{Are $s$ and $t$ equal in all \emph{full} models of $E$?}
\end{theorem}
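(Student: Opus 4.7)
The plan is to reduce every $\cpi{1}{n}$ problem to the equality-in-all-full-models problem; since $\csig{1}{n} \subseteq \cpi{1}{n+1}$, covering all $\cpi{1}{n}$ already exhausts the entire analytical hierarchy. The guiding intuition is the one labeled (A) in the introduction: in a \emph{full} model, a free variable of an equation ranges over the entire set $\{0,1\}^\omega$, so a free stream variable realizes a $\forall$ set quantifier; dually, the interpretation of a fresh function symbol is freely chosen by the model, so such a symbol realizes a Skolem function for an $\exists$ set quantifier (the existence of some model with a suitable interpretation is precisely the existence of a suitable Skolem function).

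Fix a $\cpi{1}{n}$ formula $\phi(a) \equiv \forall X_1 \exists X_2 \ldots Q X_n \; P(\vec X, a)$ with $P$ decidable. Skolemizing the negation yields
\[
  \neg\phi(a) \;\equiv\; \exists\, X_1, F_3, F_5, \ldots\;\forall\, X_2, X_4, \ldots\;:\;\neg P\bigl(X_1, X_2, F_3(X_2), X_4, F_5(X_2, X_4), \ldots, a\bigr),
\]
where each $F_{2k+1}$ is a stream function of arity $k$. Let $\tm$ be a Turing machine deciding $\neg P$. I would build a bitstream specification $\aes$ from $\tmes{\tm}$, the relevant equations from~\eqref{eq:zap}, declarations of fresh symbols $X_1, F_3, F_5, \ldots$ of the appropriate stream types, a fresh constant $\sstart \oftype \sbitstream$ appearing in no equation, and the single extra equation
\[
  \tminit{X_1, X_2, F_3(X_2), X_4, F_5(X_2, X_4), \ldots}{\num{a}} = 1
\]
with $X_2, X_4, \ldots$ free. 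The claim to prove is that $\sstart = \zeros$ holds in every full model of $\aes$ iff $\phi(a)$ holds. For the forward direction, if $\phi(a)$ holds then $\neg\phi(a)$ has no Skolem realization, so by Lemma~\ref{lem:tm} together with fullness the extra equation cannot be satisfied under any interpretation of the Skolem symbols; hence $\aes$ has no full model and the equality holds vacuously. For the converse, if $\phi(a)$ fails then Skolem witnesses $X_1^\ast, F_3^\ast, \ldots$ exist, and augmenting the canonical model of Definition~\ref{def:canonical} with those witnesses and with $\int{\sstart} := 1^\omega$ yields a full model of $\aes$ that falsifies $\sstart = \zeros$.

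The main obstacle is the treatment of the two innermost \emph{number} quantifiers $\forall x_1 \exists x_2$ permitted inside $P$ by Definition~\ref{def:analytic}: a stream variable in a full model quantifies over all of $\{0,1\}^\omega$, not only over the nat-encoded streams $\{1^n 0^\omega : n\in\nat\}$. I would resolve this by including~\eqref{eq:isnat} in $\aes$, Skolemizing $\exists x_2$ by a new stream function symbol $H$ subject to the auxiliary equation $\isnat{H(\cdots)} = \ones$ (which in a full model forces $\int{H}$ to always output a stream of the shape $1^n 0^\omega$), and guarding the main equation with an $\isnat$-test on each nat-quantified variable so that it becomes trivially satisfied whenever a free variable is instantiated outside the nat-encodings. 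A subsidiary check is that the Turing-machine equations leave enough interpretive freedom to host any choice of Skolem witnesses; this is exactly the content of the canonical-model construction and Lemma~\ref{lem:canonical}, which fix the TM interpretation only on reachable configurations and are compatible with any further interpretation of the newly added constants and function symbols.
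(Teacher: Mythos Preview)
Your approach coincides with the paper's in every essential move: free stream variables in a full model realize universal set quantifiers, fresh unspecified function symbols realize Skolem witnesses for existential set quantifiers, the Skolemized number existential is constrained by an equation $\isnat{H(\cdots)} = \ones$, and the target equality is arranged to hold precisely when the specification has no full model at all. The paper phrases this last step as $E \models \zeros = \ones$; your unconstrained constant $\sstart$ plays the same role.

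The one genuine difference is the treatment of the innermost \emph{universal} number quantifier. You make it a free stream variable guarded by an $\sisnat$-test; the paper instead iterates through all naturals via an auxiliary function $\srun$ that prepends $1{:}\hole$ to a counter $\gamma_1$ and demands that the machine succeed at every stage, directly reusing the mechanism from the proof of Theorem~\ref{thm:models}. Your variant does work, but it needs a remark you omit: Lemma~\ref{lem:isnat} is only one-directional, and the equations~\eqref{eq:isnat} leave $\int{\sisnat}(1^k 0^\omega)$ unconstrained (they all reduce to $\int{\sisnat}(0^\omega) = \int{\sisnat}(0^\omega)$). What rescues the guard is that your Skolem constraint $\isnat{H(\cdots)} = \ones$ already forces $\int{\sisnat}(0^\omega) = 1^\omega$ in any model, so $\sisnat$ becomes bi-directionally determined and the guard is sound. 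The paper's iteration avoids this subtlety and keeps the construction parallel to Theorem~\ref{thm:models}; your version is more uniform in treating number and set quantifiers alike. One cosmetic slip: after negating, the inner block is $\exists x_1\,\forall x_2\,\neg P'$, so it is $x_1$ that is Skolemized by $H$ and $x_2$ that becomes the guarded free variable, not the other way round.
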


The idea of the proof is as follows.
We translate formulas of the analytical hierarchy into stream specifications
by representing $\forall$ set quantifiers by equations with variables.
This simulates a quantification over all streams as the models are \emph{full},
and the equations have to hold for all assignments of the variables.

The $\exists$ set quantifiers are eliminated in favor of Skolem functions~$f$, %(here stream functions),
that is, axioms of the form 
$\myall{\vec{x}}{\myex{y}{\psi(x_1,\ldots,x_n,y)}}$
% $\myall{x_1,\ldots,x_n}{\myex{y}{\psi(x_1,\ldots,x_n,y)}}$
are replaced by 
$\myall{\vec{x}}{\psi(x_1,x_2,\ldots,x_n,f(x_1,\ldots,x_n))}$.
% $\myall{x_1,\ldots,x_n}{\psi(x_1,x_2,\ldots,x_n,f(x_1,\ldots,x_n))}$.
% SkolemÕs Theorem states that such a replacement in a first-order theory is conservative, 
% that is, the set of derivable sentences not involving $f$ is not changed.
%
The interpretation of these functions is determined by the model,
and the question whether there exists a model corresponds 
to an existential quantification over all Skolem functions.

\begin{proof}
  For every analytical set $A$, we reduce the membership problem in $A$
  to~an equality problem. % in the sense of the theorem.
  Every set $A$ of the analytical hierarchy can be defined by
  \begin{align}\label{eq:proof:full}
    a \not\in A \;\;  \Longleftrightarrow\;\;& \\
    \myall{\oracle_1}{\myex{\oracle_2}{&\myall{\oracle_3}{\ldots\myex{\oracle_n}{\;\myall{x_1}{\myex{x_2}{\;
      M(\oracle_1,\ldots,\oracle_n,a,x_1,x_2)
    }}}}}} \nonumber
  \end{align}
  where $n \in \nat$ is even (without loss of generality since $\cpi{1}{n} \subset \cpi{1}{n+1}$) 
  and $M$ a decidable predicate.
  Let $\tm = \triple{\tmstates}{\tmstart}{\stmtrans}$ the Turing machine corresponding to $M$.
  Let $a \in \nat$ be given.
  We define $\aes$ to be the following system of equations:
  \begin{align*}
    &\sstart(\tau_1,\tau_3,\ldots,\tau_{n-1}) = \srun(1,\: \zipn{n}{\tau_1,\msf{g_2}(\tau_1),\tau_3,\msf{g_4}(\tau_1,\tau_3),\\
      &\hspace{3cm}\ldots,\tau_{n-1},\msf{g_n}(\tau_1,\tau_3,\ldots,\tau_{n-1})},\: \zeros) \\ 
    &\sstart(\tau_1,\tau_3,\ldots,\tau_{n-1}) = \zeros\\
    &\srun(0,\tau,\gamma_1) = \ones \\
    &\srun(1,\tau,\gamma_1) = 0:\srun(\tminit{\tau}{A,\gamma_1,\msf{h_2}(\tau,\gamma_1)},\;\tau,\; 1:\gamma_1)\\%[-2.75ex]
    &A = (1:)^a\;\zeros\\
    &\isnat{\msf{h_2}(\tau,\gamma_1)} = \ones
  \end{align*}
  together with the equations from $\tmes{\tm}$, \eqref{eq:zap}, and \eqref{eq:isnat}.
  The symbols $g_{2i}$ are typed $\sbitstream^i \to \sbitstream$.
  We claim:  $\aes \models \zeros = \ones$ if and only if $a \in A$.
  For this purpose it suffices to show that the specification has a model ($\myex{\aalg}{\aalg \models \aes}$) 
  if and only if the formula in the right-hand side of~\eqref{eq:proof:full} is valid.  
  
  The idea is that the specification models a Skolem normal form of the 
  analytical formula in~\eqref{eq:proof:full}.
%   Note that for this purpose, we employ the Axiom of Choice.
  The $\forall$ set quantifiers are modeled by an equation with stream variables;
  recall that equations have to hold for all assignments of the variables.
  In particular, the variables $\tau_{1},\tau_3,\ldots,\tau_{n-1}$ in the 
  first equation $\sstart(\tau_1,\tau_3,\ldots,\tau_{n-1}) = \ldots$
  model the set quantifiers $\forall \oracle_1,\ldots,\forall \oracle_{n-1}$, respectively. 
  The $\exists$ set quantifiers are modeled by Skolem functions
  $g_2,g_4,\ldots,g_n$ which in the specification are stream functions
  that get the value of the preceding $\forall$ quantifiers as arguments.
  These stream functions~$g_{2i}$ are unspecified and can be `freely chosen' by the model~$\aalg$.
  Thus, the existential quantification over the Skolem functions
  corresponds to the existential quantification over all models in $\myex{\aalg}{\aalg \models \aes}$.

  The streams $\tau_1,g_2(\tau_i),\ldots,\tau_{n-1},g_n(\tau_1,\tau_3,\ldots,\tau_{n-1})$ 
  that represent the values of the set quantifiers are then
  interleaved by $\szipn{n}$, and passed as the second argument, named $\tau$, to $\srun$;
  this argument serves as the left side of the tape for every invocation of the Turing machine $\tm$.

  The $\forall x_1$ number quantifier is modeled by the third argument $\gamma_1$ of $\srun$.
  The initial value of $\gamma_1$ is $\zeros$, and `$1:\cxthole$' is prepended
  (corresponding to counting up)
  each time the Turing machine halts with output $1$.
  The number quantifier $\exists x_2$ is modeled by the Skolem function $\msf{h}_2$
  for which the equation $\isnat{\msf{h_2}(\tau,\gamma_1)} = \ones$
  ensures by Lemma~\ref{lem:iszeros} that the interpretation $\int{\msf{h_2}(\tau,\gamma_1)}$
  is a unary encoding of a natural number.
  Then the term $\tminit{\tau}{A,\gamma_1,\msf{h_2}(\tau,\gamma_1)}$
  with $\tau = \zipn{n}{\tau_1,\msf{g_2}(\tau_1),\tau_3,\msf{g_4}(\tau_1,\tau_3),\ldots,\tau_{n-1},\msf{g_n}(\tau_1,\tau_3,\ldots,\tau_{n-1})}$
  corresponds precisely to $M(\oracle_1,\ldots,\oracle_n,a,x_1,x_2)$ in~\eqref{eq:proof:full}.

  For `$\Leftarrow$', assume that the formula in~\eqref{eq:proof:full} is valid.
  We construct a model $\aalg = \pair{A}{\sint}$ as an extension of the canonical model 
  (Definition~\ref{def:canonical}).
  For $\int{g_2},\int{g_4},\ldots,\int{g_n},\int{h_2}$ we pick the Skolem functions
  for the quantifiers $\exists \oracle_2,\exists \oracle_4,\ldots,\exists \oracle_n,\exists x_2$, respectively
  (where $\int{h_2}$ is a stream function that works on the unary encoding of natural numbers).
  For $\sigma \in \{0,1\}^\omega$, we define 
  $\int{\sisnat}(\sigma) = 1^\omega$ if $\sigma$ is of the form $1^n \,0^\omega$,
  and $0^\omega$, otherwise.
  The definition of $\int{\srun}$ is analogous to the proof of Theorem~\ref{thm:models}.
  Finally, we define $\int{\sstart}(\tau_1,\tau_2,\ldots,\tau_{n-1}) = 0^\omega$
  for all $\tau_1,\tau_2,\ldots,\tau_{n-1} \in \{0,1\}^\omega$,
  and $\int{A} = 1^a\, 0^\omega$.
  Then it is straightforward to verify that $\aalg$ is a model of the specification.
  
  For `$\Rightarrow$', let $\aalg = \pair{A}{\sint}$ be a model of the specification.
  Then we let the existential quantifiers
  $\exists \oracle_2,\exists \oracle_4,\ldots,\exists \oracle_n$ and $\exists x_2$ in~\eqref{eq:proof:full}
  behave according to the interpretations
  $\int{g_2},\int{g_4},\ldots,\int{g_n},\int{h_2}$, respectively
  (here the translation from sets $\oracle \subseteq \nat$ to streams $\num{\oracle}$ is as usual).
  Assume that there exists an assignment of the $\forall$ quantifiers $\forall \oracle_1,\forall \oracle_2,\ldots,\forall \oracle_{n-1}$
  and $\forall x_2$ for which the formula in~\eqref{eq:proof:full} is~not valid,
  that is, $M(\oracle_1,\ldots,\oracle_n,a,x_1,x_2)$ does not hold
  where the existential choices are governed by the model as described above.
  We translate this `counterexample' back to the model
  by considering $\int{\sstart}(\num{\oracle_1},\num{\oracle_3}\ldots,\num{\oracle_{n-1}})$.
  As in the proof of Theorem~\ref{thm:models}, it is then straightforward
  to show that $\int{\sstart}(\num{\oracle_1},\num{\oracle_3}\ldots,\num{\oracle_{n-1}}) \ne 0^\omega$.
  However, this contradicts the assumption of $\aalg$ being a model
  due to the equation $\sstart(\tau_1,\tau_3,\ldots,\tau_{n-1}) = \zeros$.
\end{proof}

The proof of Theorem~\ref{thm:models:full}
immediately yields the following:

\short{
  \begin{theorem}\label{thm:all:full}
    Each of the following problems (i), (ii), and (iii), % \ref{all:i}, \ref{all:ii} and \ref{all:iii}
    subsume the analytical hierarchy:
    \problem{Bitstream specification $\aes$, ground term $t \oftype \sbitstream$.}{%
      Does $t$ have: 
      (i)~a solution, (ii)~a unique solution, (iii)~at most one solution, 
      over all \,\emph{full}\, models of~$E$?}
%       \begin{enumerate}\vspace{-2ex}\setlength{\itemindent}{5ex}
%         \item\label{all:i}
%         a solution over all \emph{full}  models of $E$?
%         \item\label{all:ii}
%         a unique solution over all \emph{full}  models of $E$?
%         \item\label{all:iii}
%         at most one solution over all \emph{full} models of~$E$?
%       \end{enumerate}
  \end{theorem}
}
\extended{
  \begin{theorem}\label{thm:atleast:full}
    The following problem subsumes the analytical hierarchy:
    \problem{Bitstream specification $\aes$, term $s$.}{Does $s$ have a solution over all \emph{full}  models of $E$?}
  \end{theorem}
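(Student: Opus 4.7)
The plan is to recycle the construction from the proof of Theorem~\ref{thm:models:full} essentially verbatim and combine it with a trivial vacuous-truth observation. Fix any set $A$ in the analytical hierarchy, presented in the Skolemised $\cpi{1}{n}$ form
\[
  a \in A \iff \forall\oracle_1\,\exists\oracle_2\cdots\exists\oracle_n\,\forall x_1\,\exists x_2\;M(\oracle_1,\ldots,\oracle_n,a,x_1,x_2)
\]
with $M$ decidable. The proof of Theorem~\ref{thm:models:full} yields, computably from $(A,a)$, a finite bitstream specification $\aes = \aes(A,a)$ such that $a \in A \iff \aes \models \zeros = \ones$. I would treat that construction as a black box and make no changes to it.

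The new ingredient I need is the following trivial observation. In every \emph{full} stream algebra $\aalg$ we have $\int{0}=0$ and $\int{1}=1$, so $\int{\zeros}=0^\omega \ne 1^\omega=\int{\ones}$ outright; consequently $\aes\models \zeros = \ones$ can hold only vacuously, i.e.\ precisely when $\aes$ admits no full model at all. Combining this with the property of $\aes(A,a)$ gives
\[
  \exists\,\aalg\;(\text{full, }\aalg\models\aes)\iff a\notin A.
\]
Now pick any fixed ground stream term, for instance $s := \zeros$. By the definition of $\solful{\cdot}{\cdot}$, the set $\solful{\zeros}{\aes}=\{\int{\zeros}^{\aalg}\mid \aalg\text{ full},\,\aalg\models\aes\}$ is non-empty iff some full model of $\aes$ exists; thus ``$\zeros$ has a solution over all full models of $\aes$'' is equivalent to $a\notin A$. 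The computable map $(A,a)\mapsto(\aes(A,a),\zeros)$ therefore is a many-one reduction from membership in $\bar A$ to the solvability problem.

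To conclude that the problem subsumes the entire analytical hierarchy, I would use the standard fact that complementation is a bijection on the analytical hierarchy that interchanges $\cpi{1}{n}$ and $\csig{1}{n}$ at each level. Hence as $A$ ranges over all analytical sets so does $\bar A$, and every analytical set many-one reduces to the solvability problem. The only step beyond Theorem~\ref{thm:models:full} is the vacuous-truth argument, whose whole content is the inequality $\int{0}\ne\int{1}$ in full models; consequently no new obstacle arises, and the potential sticking point—making sure the reduction for ``has a solution'' does not collapse to triviality—is handled precisely by using the equation $\zeros=\ones$ in $\aes(A,a)$ as the sole witness of unsatisfiability.
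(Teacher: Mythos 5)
Your proposal is correct and follows essentially the same route as the paper, whose (one-line) proof likewise observes that in the construction of Theorem~\ref{thm:models:full} the term $\zeros$ has a solution over all full models precisely when $\aes$ admits a full model, which happens iff $a \notin A$. Your additional spelling-out of the vacuous-truth observation ($\int{\zeros} = 0^\omega \ne 1^\omega = \int{\ones}$ in every model containing the equations~\eqref{eq:zap}) and of the closure of the analytical hierarchy under complementation just makes explicit what the paper leaves implicit.
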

  \extended{%
  \begin{proof}
    Follows from the proof of Theorem~\ref{thm:models:full},
    as $\zeros$ has a solution over all models of $\aes$ 
    if and only if $\aes$ has a model.
  \end{proof}}
  
  \begin{theorem}\label{thm:unique:full}
    The following problem  subsumes the analytical hierarchy:
    \problem{Bitstream specification $\aes$, term $s$.}{Does $s$ have a unique solution over all \emph{full}  models of $E$?}
  \end{theorem}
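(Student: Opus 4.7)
The plan is to reduce the membership problem for arbitrary analytical sets directly to unique-solvability, reusing the construction of Theorem~\ref{thm:models:full} and instantiating the ground term as $s = \zeros$. The key observation is that $\zeros$ has a forced interpretation in every stream algebra, namely $0^\omega$, so the unique-solvability question for $\zeros$ collapses to the question whether $\aes$ has any full model at all, which already subsumes the analytical hierarchy by Theorem~\ref{thm:atleast:full}.

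Concretely, I would argue as follows. Fix an arbitrary set $B$ in the analytical hierarchy and let $A$ be its complement $\bar{B}$, which is again in the analytical hierarchy by closure under complement at each level. The proof of Theorem~\ref{thm:models:full} effectively produces, from any input $a$, a bitstream specification $\aes$ such that $\aes$ has a full model iff $a \notin A$, that is, iff $a \in B$. By Lemma~\ref{lem:zap} we have $\int{\zeros} = 0^\omega$ in every stream algebra, so $\solful{\zeros}{\aes}$ equals $\{0^\omega\}$ whenever $\aes$ admits a full model and $\setemp$ otherwise. Hence $|\solful{\zeros}{\aes}| = 1$ precisely when $a \in B$, yielding a computable many-one reduction from $B$ to the unique-solvability problem. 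Since $B$ was an arbitrary analytical set, the problem subsumes the analytical hierarchy.

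No substantial new difficulty arises beyond what has already been surmounted in the proof of Theorem~\ref{thm:models:full}: the Skolemisation of analytical formulae as bitstream specifications is inherited verbatim, and no additional coding is required. The only extra observation, and the reason the argument lifts at once from solvability to uniqueness, is that the bit constants have fixed interpretations $\int{0} = 0$ and $\int{1} = 1$, so $\zeros$ has a unique possible value in any model; this forces $|\solful{\zeros}{\aes}| \le 1$ unconditionally, so existence of a solution and uniqueness of the solution coincide for this particular ground term. The slight bookkeeping step of complementing $B$ to $A = \bar B$ before invoking Theorem~\ref{thm:models:full} is the only place one must be careful, and it is harmless because of closure of the analytical hierarchy under complement.
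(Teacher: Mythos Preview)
The proposal is correct and takes essentially the same approach as the paper, which also instantiates $s = \zeros$ and observes that $\zeros$ has a (unique) solution iff the specification $\aes$ from Theorem~\ref{thm:models:full} has a full model; the paper's entire proof is the one-liner ``In the proof of Theorem~\ref{thm:models:full}, $\zeros$ has a (unique) solution if and only if $\aes$ has a model.'' Your explicit complementation step merely unpacks what the paper leaves implicit in its appeal to the arbitrariness of the analytical set~$A$.
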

  \extended{%
  \begin{proof}[Proof of Theorems~\ref{thm:atleast:full} and~\ref{thm:unique:full}]
    In the proof of Theorem~\ref{thm:models:full},
    $\zeros$ has a (unique) solution %over all models of $\aes$ 
    if and only if $\aes$ has a model.
  \end{proof}
  }%
  
  For the proof of the following theorem, we slightly adapt the specification
  in the proof of Theorem~\ref{thm:models:full} such that it always has a solution,
  and has more than one solution if and only if the analytical formula in~\eqref{eq:proof:full} holds.
  
  \begin{theorem}\label{thm:atmost:full}
    The following problem subsumes the analytical hierarchy:
    \problem{Bitstream specification $\aes$, term $s$.}{Does $s$ have at most one solution over all \emph{full} models of $E$?}
  \end{theorem}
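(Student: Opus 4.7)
The plan is to reduce the membership problem for an arbitrary analytical set $A$ to the at-most-one-solution problem by a minimal modification of the specification built in the proof of Theorem~\ref{thm:models:full}. Given an input $a \in \nat$, I first construct the specification $\aes_a$ exactly as in that proof, so that $\aes_a$ admits a full model iff the analytical formula~\eqref{eq:proof:full} is valid, equivalently iff $a \not\in A$. I then extend the signature by a fresh constant symbol $M \oftype \sbitstream$, adding no defining equation involving $M$; call the resulting specification $\aes'_a$, and take $s = M$ as the query term.

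The case analysis is then immediate. If $a \in A$, the formula fails, $\aes_a$ has no full model, and since $\aes'_a$ retains every equation of $\aes_a$, neither does $\aes'_a$; hence $\solful{M}{\aes'_a} = \setemp$ and vacuously $|\solful{M}{\aes'_a}| \le 1$. If $a \not\in A$, Theorem~\ref{thm:models:full} supplies a full model $\aalg_0$ of $\aes_a$; since $M$ occurs in no equation of $\aes'_a$, for every bitstream $\astr \in \{0,1\}^\omega$ the algebra obtained from $\aalg_0$ by setting $\int{M} := \astr$ is again a full model of $\aes'_a$. Thus $\solful{M}{\aes'_a} = \{0,1\}^\omega$, and in particular $|\solful{M}{\aes'_a}| > 1$. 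Combining the two cases, $M$ has at most one solution over all full models of $\aes'_a$ iff $a \in A$, and as $A$ ranges over all levels of the analytical hierarchy this yields the claimed subsumption.

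The only non-routine verification is the model-extension step: because $M$ is fresh and literally absent from every equation of $\aes_a$, any expansion of a model of $\aes_a$ by choosing an arbitrary bitstream for $\int{M}$ still satisfies $\aes'_a$, so the two-model witness used in the $a \not\in A$ case is obtained without further work. The reduction is manifestly recursive, since $\aes_a$ is produced uniformly in $a$ from the decidable predicate underlying $A$ and we only append one fresh constant symbol and change the query term, so no genuine obstacle arises beyond correctly re-invoking Theorem~\ref{thm:models:full}. Note that this route produces zero solutions (rather than exactly one) in the $a \in A$ case, which still satisfies the at-most-one predicate; an alternative construction could instead force a canonical fallback value for $M$ to guarantee unique solvability when $a \in A$, but the vacuous-truth route taken here is shorter.
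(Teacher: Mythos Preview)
Your proof is correct and takes a genuinely different route from the paper's. The paper adapts the specification of Theorem~\ref{thm:models:full} more substantially: it drops the two $\sstart(\ldots)=\ldots$ equations and replaces them by a single equation
\[
\ones \;=\; \lesseq{\sstart}{\;\iszeros{\srun(1,\zipn{n}{\tau_1,\msf{g_2}(\tau_1),\ldots},\zeros)}\;}
\]
so that $\int{\sstart}=0^\omega$ is \emph{always} a solution, while a second solution $\int{\sstart}\ne 0^\omega$ exists precisely when the analytical formula~\eqref{eq:proof:full} holds. Thus the paper's reduction yields \emph{exactly one} solution when $a\in A$ and more than one otherwise.

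Your construction is more black-box: you observe that ``$\aes_a$ has a full model'' is already equivalent to $a\notin A$, and that adjoining a fresh unconstrained constant $M$ turns ``no full model'' into ``$0$ solutions for $M$'' and ``some full model'' into ``all of $\{0,1\}^\omega$ as solutions for $M$''. This is shorter and requires no new equational machinery (no $\slesseq$, no $\siszeros$), and it works for any specification whose model-existence problem is hard. The price is exactly what you noted in your last paragraph: in the $a\in A$ case you get zero rather than one solution, so your construction does not simultaneously witness Theorem~\ref{thm:unique:full} (unique solvability), whereas the paper's $\slesseq$-based specification, always admitting the model with $\int{\sstart}=0^\omega$, could in principle be reused for that purpose.
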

  \extended{%
  \begin{proof}
    We adapt the proof of Theorem~\ref{thm:models:full} by 
    exchanging the two equations $\sstart(\ldots) = \ldots$ by the following one:
    \begin{align*}
      &\ones = \lesseq{\sstart}{ \iszeros{ \srun(1,\: \zipn{n}{\tau_1,\msf{g_2}(\tau_1),\tau_3,\msf{g_4}(\tau_1,\tau_3),\\
        &\hspace{3cm}\ldots,\tau_{n-1},\msf{g_n}(\tau_1,\tau_3,\ldots,\tau_{n-1})},\: \zeros) }} 
    \end{align*}
    An interpretation $\int{\sstart} = 0^\omega$ always yields a solution.
    In addition, by Lemma~\ref{lem:lesseq} we have 
    $\int{\sstart} \ne 0^\omega$
    only if $\int{\iszeros{\ldots}} \ne 0^\omega$ for every assignment of $\tau_1,\tau_2,\ldots,\tau_{n-1}$.
    But then $\int{\iszeros{\ldots}} = 1^\omega$ by Lemma~\ref{lem:iszeros},
    and, thus, $\int{\srun(\ldots)} = 0^\omega$.
    As in the proof of Theorem~\ref{thm:models:full}, $\int{\srun(\ldots)} = 0^\omega$ for all
    $\tau_1,\tau_2,\ldots,\tau_{n-1}$ if and only if
    the formula in~\eqref{eq:proof:full} holds.
  \end{proof}
  }
}

\subsection{Equality of Solutions}

In this section, we study the complexity of deciding whether
terms have the same set of solutions over all (full) models.
It is easy to see that the hardness of these problems 
is at least that of deciding equality in all (full) models.
When considering all models, the problem turns out $\cpi{1}{2}$-complete,
and, thus, higher than the degree $\cpi{1}{1}$ of equality in all models.

\begin{remark}\label{rem:union}
  Let us briefly discuss the applicability of equality in~all (full) models
  for the comparison of terms $s$, $t$ that are specified in
  independent specifications $\aes_s$ and $\aes_t$.
  First, we rename the symbols of one of the specifications
  such that $\asig_s \cap \asig_t = \{{0,1,:}\}$.
  Thereafter, we consider the validity of $s = t$ in the union $\aes_s \cup \aes_t$. 

  We show on two examples that this approach does not always yield the intended results.
  Let $E_M$ consist of the single equation $M = 1:M$,
  and $E_N$ of 
  \begin{align*}
    N &= \inv{N} & \inv{0:\astr} &= 1:\inv{\astr} &
       \inv{1:\astr} &= 0:\inv{\astr}
  \end{align*}
  Then $M$ has the stream of ones as unique solution,
  but $N$ has no solution.
  Since $E_N$ does not have model, the
  union $E_M \cup E_N$ also does not admit one.
  Thus, $E_M \cup E_N \models M = N$ holds for trivial reasons.
  Nevertheless, we would not like to consider $M$ and $N$ as equivalent
  (at least if they are given by independent specifications).

  Even if the specifications have unique solutions,
  a similar~effect can occur. 
  Let $M = \zeros$ and $E_M$ consist of the equations
  \newcommand{\snxor}{\msf{nxor}}
  \newcommand{\nxor}{\funap{\msf{nxor}}}
  \begin{align*}
    \iszeros{\nxor{\astr}} &= \zeros\\
    \nxor{0:0:\astr} &= 1:\nxor{\astr} & \nxor{0:1:\astr} &= 0:\nxor{\astr}\\
    \nxor{1:0:\astr} &= 0:\nxor{\astr} & \nxor{1:1:\astr} &= 1:\nxor{\astr}
  \end{align*}
  together with the equations~\eqref{eq:filter}.
  Let $N = \alt$ and $E_N$ consist of the equation $\alt = 0:1:\alt$.
  Both specifications have models,
  and $\zeros$ and $\alt$ have unique solutions.
  For example, $E_M$ admits a model whose domain consists of
  all eventually constant streams.
  However, $E_M$ rules out models for which there exist elements
  $\astr \in A_{\sbitstream}$ with $\int{\snxor}(\astr) = 0^\omega$.
  In particular, the stream $0101\ldots$ is excluded from the domain $A_{\sbitstream}$.
  As a consequence, the union $E_M \cup E_N$ has no models,
  and $E_M \cup E_N \models \zeros = \alt$ holds.
\end{remark}

As a consequence of the proof of Theorem~\ref{thm:models:full}, we obtain:
\begin{theorem}\label{thm:solutions:full}
  The following problem  subsumes the analytical hierarchy:
  \problem{Bitstream specifications $\aes_s$, $\aes_t$, ground terms $s,t \,{\oftype}\, \sbitstream$.}{%
    Do $s$ and $t$ have equal solutions over all \emph{full} models,
    that is, $\solful{s}{\aes_s} = \solful{t}{\aes_t}$\,?}
\end{theorem}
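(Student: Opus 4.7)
The plan is to obtain this directly from the construction in the proof of Theorem~\ref{thm:models:full}. Fix an arbitrary set $A$ in the analytical hierarchy. Given an input $a \in \nat$, let $\aes$ denote the specification produced by that construction (parameterised by $a$), and let the reduction output the instance $\aes_s = \aes_t = \aes$ together with $s = \zeros$ and $t = \ones$. This map is clearly computable, inheriting computability from the construction of $\aes$.

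First I would observe that in every full stream algebra $\aalg \models \aes$, the defining equations $\zeros = 0{:}\zeros$ and $\ones = 1{:}\ones$, together with fullness $A_{\sbitstream} = \{0,1\}^\omega$, pin down $\int{\zeros}^{\aalg} = 0^\omega$ and $\int{\ones}^{\aalg} = 1^\omega$ uniquely. Consequently $\solful{\zeros}{\aes}$ equals $\{0^\omega\}$ if $\aes$ has at least one full model and $\setemp$ otherwise, and analogously $\solful{\ones}{\aes} \in \{\{1^\omega\},\setemp\}$. Since $0^\omega \ne 1^\omega$, the two solution sets coincide iff both are empty, i.e.\ iff $\aes$ admits no full model at all.

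Finally I would invoke the central property established during the proof of Theorem~\ref{thm:models:full}: via Skolemisation of the set quantifiers as stream functions $g_{2i}$ and of the innermost existential number quantifier as $h_2$, the existence of a full model of $\aes$ is equivalent to the validity of the analytical formula defining $\nat \setminus A$ at $a$. Combining this with the previous paragraph yields $a \in A \iff \solful{\zeros}{\aes} = \solful{\ones}{\aes}$, so the equal-solutions problem is hard at every level of the analytical hierarchy.

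The only point requiring care is uniformity across the hierarchy, but this is already handled in Theorem~\ref{thm:models:full}: the quantifier depth $n$ is baked into the arities of the Skolem-function symbols and the shape of the $\sstart$ equation at construction time rather than at decision time, so the same recipe simultaneously witnesses hardness for every class $\cpi{1}{n}$ and $\csig{1}{n}$. Thus no new hardness argument is needed beyond reusing that construction; the present theorem is essentially a repackaging of the previous one, exploiting that fully-determined constants like $\zeros$ and $\ones$ have either a singleton or an empty solution set according as $\aes$ admits a full model.
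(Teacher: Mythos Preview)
Your proposal is correct and follows essentially the same approach as the paper: both reuse the specification $\aes$ from Theorem~\ref{thm:models:full} and reduce to the question of whether $\aes$ admits a full model, exploiting that fully-determined constants have singleton-or-empty solution sets. The only cosmetic difference is that the paper compares $\zeros$ in $\aes$ against $\zeros'$ in a trivial always-satisfiable specification $\{\zeros' = 0{:}\zeros'\}$, whereas you compare $\zeros$ against $\ones$ within the same specification $\aes$; both variants hinge on exactly the same existence-of-a-full-model criterion.
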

\extended{
\begin{proof}
  Let $\aes_s$ be the specification in the proof of Theorem~\ref{thm:models:full}, and $s = \zeros$.
  Then $\solful{s}{\aes_s} = \{0^\omega\}$ if $\aes_s$ has a model, and $\setemp$ otherwise.
  Let $\aes_t = \{\zeros' = 0:\zeros'\}$ and $t' = \zeros'$,
  then we have $\solful{t}{\aes_t} = \{0^\omega\}$. 
  Thus, $\solful{s}{\aes_s} = \solful{t}{\aes_t}$
  is equivalent to $\aes \models \zeros = \ones$\, in the proof of Theorem~\ref{thm:models:full}.
\end{proof}
}

We conclude this section with an investigation of the complexity of deciding whether
two terms have the same set of solutions over all models.
The proof of Theorem~\ref{thm:models} yields only $\cpi{1}{1}$\nb-hardness.
In order to show $\cpi{1}{2}$-hardness, we employ a result of~\cite{omega:computations}
stating that it is a $\cpi{1}{2}$-complete problem to decide whether
the $\omega$-language of a non-deterministic Turing machine
contains all words $\{0,1\}^\omega$.

Therefore, we consider non-deterministic Turing machines with one\nb-sides tapes.
Without loss of generality, we may restrict the non-determinism
$\stmtrans \funin \autstates \times \tmsig \to \powerset{\tmstates \times \tmsig \times \{\tmL,\tmR\}}$
to binary choices in each step, that is, $\setsize{\tmtrans{\astate}{b}} \le 2$
for every $q \in \tmstates$ and $b \in \{0,1\}$.
(Broader choices then are simulated by sequences of binary choices.)
Moreover, for our purposes, it suffices to consider Turing machines that never halt.
For the $\omega$-language, halting always corresponds to rejecting a run,
and this rejection can be simulated by alternating moving forth and back eternally.

That is, a non-deterministic Turing machine 
$\tm = \quadruple{\tmstates}{\tmstart}{\stmtrans_0}{\stmtrans_1}$
has two transition functions
$\stmtrans_0,\stmtrans_1 \funin \autstates \times \tmsig \to \tmstates \times \tmsig \times \{\tmL,\tmR\}$
and we allow a non-deterministic choice between these~functions in each step.
Note that, for modeling non-determinism in an equational specifications,
we cannot take the union of the specifications
$\tmes{\triple{\tmstates}{\tmstart}{\stmtrans_0}}$
and
$\tmes{\triple{\tmstates}{\tmstart}{\stmtrans_1}}$,
since multiple equations having the same left-hand side
do not model choice, but additional restrictions on the models of the specification.
To this end, we introduce a third argument 
for the binary function symbols $q \in \tmstates$ in Definition~\ref{def:tmtrs}.
This argument then governs the non-deterministic choice.
In order to model one-sided tapes, 
% , that is, to restrict the machine to positive tape indices, 
we introduce a fourth argument that stores the position on the tape,
and is increased, when moving right, and decreased, when moving left.
That is, we adapt Definition~\ref{def:tmtrs} to:
  \begin{align*}
    \qfunap{\astate}{x}{b:y}{i:z}{p} &= \qfunap{\astate'}{b':x}{y}{z}{1:p} \\
%     &\hspace{1cm}\text{ for } \stmtrans_i(\astate,b) = \triple{\astate'}{b'}{\tmR}\\
    \qfunap{\astate}{a:x}{b:y}{i:z}{1:p} &= \qfunap{\astate'}{x}{a:b':y}{z}{p}
%     &\hspace{1cm}\text{ for } \stmtrans_i(\astate,b) = \triple{\astate'}{b'}{\tmL}
  \end{align*}
for $\stmtrans_i(\astate,b) = \triple{\astate'}{b'}{\tmR}$
and $\stmtrans_i(\astate,b) = \triple{\astate'}{b'}{\tmL}$, respectively.
We use $\tmesn{\tm}$ to denote this specification, and $\tmtrsn{\tm}$
for the corresponding term rewriting system.
In the initial configuration, the third argument should be an
underspecified stream, allowing for any non-deterministic choice.
We pass $\zeros$ as fourth argument, 
thereby ensuring that the head cannot move to negative tape indices.

A \emph{run} of $\tm$ on an $\omega$-word $w \in \{0,1\}^\omega$ 
is a $\tmtrsn{\tm}$ rewrite sequence starting from a term $\tmstart(\zeros,\underline{w},\underline{N},\zeros)$
where $N \in \{0,1\}^\omega$ determines the non-deterministic choices;
here $\underline{w}$ is the term $w(0) : w(1) : \ldots$
A run of $\tm$ is \emph{complete} if every tape position $p \ge 0$ is visited 
(that is, positions right of the starting position), and
it is \emph{oscillating} if some tape position is visited infinitely often.
A run is \emph{accepting} if it is complete and not oscillating, that it,
it visits every position $p \ge 0$ at least once, but only finitely often.

\begin{definition}
  The $\omega$-language $\lang{\tm}$
  is the set of all $\omega$-words $w \in \{0,1\}^\omega$ 
  such that $\tm$ has an accepting run $w$.
\end{definition}

We employ the following result, which follows from \cite{omega:computations}:
\begin{theorem}\label{thm:lang}
  The set $\{\tm \mid \lang{\tm} = \{0,1\}^\omega\}$ is $\cpi{1}{2}$-complete.
\end{theorem}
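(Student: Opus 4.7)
}
The plan is to establish the two directions separately: first the $\cpi{1}{2}$ upper bound by an explicit arithmetic unfolding of the acceptance condition, then $\cpi{1}{2}$-hardness by a reduction from a canonical $\cpi{1}{2}$-complete analytic set.

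For the upper bound, I would translate the definition of $\lang{\tm} = \{0,1\}^\omega$ into a second-order formula of the shape $\forall w\; \exists N\; \Phi(\tm,w,N)$, where $w \in \{0,1\}^\omega$ is the input word and $N \in \{0,1\}^\omega$ is the sequence of non-deterministic choices. The predicate $\Phi$ must assert that the (uniquely determined) infinite run of $\tm$ on $w$ using choices $N$ is both complete (every tape position $p \geq 0$ is eventually visited: $\forall p\, \exists t\, \mit{pos}(t) = p$) and non-oscillating (each position is visited only finitely often: $\forall p\, \exists t_0\, \forall t \geq t_0\, \mit{pos}(t) \neq p$). Since the position $\mit{pos}(t)$ at time $t$ is computable from $(\tm, w, N, t)$, the formula $\Phi$ is arithmetic (in fact $\cpi{0}{3}$), and the overall shape $\forall w\, \exists N\, \Phi$ is $\cpi{1}{2}$ as desired.

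For the lower bound, I would reduce from an arbitrary $\cpi{1}{2}$ predicate $\forall X\, \exists Y\, \psi(X,Y,\vec{n})$ with $\psi$ arithmetic. Given such a predicate (and parameters $\vec{n}$), I would build a non-deterministic machine $\tm$ that, on input $w \in \{0,1\}^\omega$, treats $w$ as the set $X$, non-deterministically guesses both $Y$ and further witnesses for the arithmetic matrix $\psi$, and stages its computation in phases indexed by tape position: phase $p$ verifies the $p$-th instance of the arithmetic kernel. The machine is arranged so that if the current phase succeeds it advances the head one cell to the right and continues; if verification of the current phase would fail (or no witness $Y$ is consistent), it enters an infinite loop that remains on finitely many cells, making the run oscillating and thus non-accepting. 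Thus for a given $w = X$, there exists a choice sequence $N$ yielding an accepting run iff some $Y$ witnesses $\psi(X,Y,\vec{n})$, and $\lang{\tm} = \{0,1\}^\omega$ iff $\forall X\, \exists Y\, \psi(X,Y,\vec{n})$.

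The main obstacle is the staged-verification construction: arranging the machine so that the two acceptance conditions (completeness \emph{and} non-oscillation) together are equivalent to a given $\cpi{0}{n}$ arithmetic predicate. Completeness naturally encodes a $\cpi{0}{2}$ liveness property while non-oscillation is $\cpi{0}{3}$; combining them allows encoding of arbitrary arithmetic matrices by interleaving searches for existential witnesses (whose unboundedness would be punished by oscillation) with universal checks (whose failure is punished by stalling before reaching the next cell). This is precisely the construction carried out in~\cite{omega:computations}, so I would cite that source for the details and simply verify that the variant with binary non-determinism, one-sided tape, and the complete/non-oscillating acceptance criterion used here is equivalent to the setting of that reference.
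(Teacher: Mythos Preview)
The paper does not prove Theorem~\ref{thm:lang} at all: it is stated as a known result, introduced with ``We employ the following result, which follows from~\cite{omega:computations}'', and no argument is given. Your proposal therefore goes well beyond what the paper does; your final paragraph, where you defer the hardness construction to~\cite{omega:computations} and merely check that the present formulation (binary non-determinism, one-sided tape, complete and non-oscillating acceptance) matches that setting, is already the entirety of what is required here.

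Your upper-bound argument and the outline of the hardness reduction are both reasonable sketches, but since the paper treats this as a black-box citation, the appropriate ``proof'' is simply a pointer to the literature together with the compatibility check you describe at the end.
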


We are now ready for the proof of $\cpi{1}{2}$-completeness of
equality of the set of solutions over all models.
In the proof, we introduce a fifth argument for the
symbol $q \in \tmstates$ in $\tmesn{\tm}$ which 
enforces progress (productivity) and rules out exactly the oscillating runs.

\begin{theorem}\label{thm:solutions}
  The following problem is $\cpi{1}{2}$-complete:
  \problem{Bitstream specifications $\aes_s$, $\aes_t$, ground terms $s,t \,{\oftype}\, \sbitstream$.}{%
    Do $s$ and $t$ have equal solutions over all models equal,
    that is, $\sol{s}{\aes_s} = \sol{t}{\aes_t}$\,?}
\end{theorem}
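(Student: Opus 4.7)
My plan is to establish $\cpi{1}{2}$-hardness by a reduction from the problem $\lang{\tm} = \{0,1\}^\omega$, which is $\cpi{1}{2}$-complete by Theorem~\ref{thm:lang}, and $\cpi{1}{2}$-membership by a L\"owenheim--Skolem argument in the style of Theorem~\ref{thm:models}. For the hardness side, I would take $\aes_t$ to be a specification with a single constant $t \oftype \sbitstream$ and no equations restricting $t$, so that $\sol{t}{\aes_t} = \{0,1\}^\omega$. It then suffices to build $\aes_s$ with $\sol{s}{\aes_s} = \lang{\tm}$; the two solution sets coincide iff $\lang{\tm} = \{0,1\}^\omega$.

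The specification $\aes_s$ includes $\tmesn{\tm}$ (with the extra fifth argument on each symbol $q \in \tmstates$), a Skolem symbol $\msf{nd} \oftype \sbitstream \to \sbitstream$ providing a nondeterministic-choice stream for each candidate input, and a ground equation forcing $s$ to be the input of an accepting run of $\tm$ under choices $\msf{nd}(s)$. Making ``accepting'' coincide with satisfiability of this equation is the role of the fifth argument and is the main obstacle: following the hint in the excerpt, the idea is to let the fifth argument serve as a progress gauge that is consumed at each rewrite step and refreshed only when the head reaches a strictly greater maximum tape position, with auxiliary equations (patterned on \eqref{eq:filter}, \eqref{eq:bhd}, and \eqref{eq:natstr}) that force this gauge to evaluate to $\ones$ throughout the computation. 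Any model then witnesses a run in which the head visits each tape cell only finitely often yet unboundedly many cells overall, i.e., $\liminf_{t} p_t = \infty$, which is exactly the accepting condition. Varying the model over interpretations of $\msf{nd}$ realizes the existential over the nondeterminism stream in the definition of $\lang{\tm}$; hence $\sol{s}{\aes_s}$ equals $\lang{\tm}$. The delicate point is to enforce \emph{non-satisfiability} (not merely non-productivity) on oscillating runs, since non-full models otherwise have enough freedom to absorb stuck terms into arbitrary values.

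For $\cpi{1}{2}$-membership, by L\"owenheim--Skolem it suffices to quantify over countable stream algebras, each of which is encodable as a subset of $\nat$. Then $\sol{s}{\aes_s} = \sol{t}{\aes_t}$ can be expressed as
\[
\forall w\, \bigl[(\myex{\aalg_1}{\aalg_1 \models \aes_s \wedge \int{s}^{\aalg_1} = w}) \leftrightarrow (\myex{\aalg_2}{\aalg_2 \models \aes_t \wedge \int{t}^{\aalg_2} = w})\bigr]\,.
\]
Unfolding the biconditional as a conjunction of two implications and pulling the set quantifiers outward yields a prenex formula of shape $\forall w\, \forall \aalg_1\, \forall \aalg_2\, \exists \aalg_3\, \exists \aalg_4\, \Phi$ with $\Phi$ arithmetic; merging the homogeneous blocks of set quantifiers produces the $\forall X\, \exists Y\, \Phi$ form witnessing membership in $\cpi{1}{2}$.
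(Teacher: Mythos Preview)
Your overall strategy coincides with the paper's: reduce from $\lang{\tm}=\{0,1\}^\omega$, make $\aes_t$ trivial so that $\sol{t}{\aes_t}=\{0,1\}^\omega$, and build $\aes_s$ so that $\sol{s}{\aes_s}=\lang{\tm}$; the $\cpi{1}{2}$-membership via L\"owenheim--Skolem and a $\forall\exists$ set-quantifier shape is also what the paper does (your extra outer $\forall w$ is redundant but harmless, since the paper simply uses $\int{s}^{\aalg_s}$ in place of $w$).

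Where you diverge is in the mechanics of the fifth argument, and there your description is both more complicated and not obviously sound. Tracking a ``strictly greater maximum tape position'' and refreshing a gauge on that event would require an extra argument for the running maximum and, crucially, a source for the fresh budget; if that budget is deterministic the construction fails (the time between successive maxima is unbounded), and if it is an unspecified stream you are back to the paper's idea but with superfluous bookkeeping. The paper avoids all of this: it introduces an unspecified \emph{constant} $P$ with $\natstr{P}=\ones$ (so $\int{P}$ has infinitely many zeros), and lets the fifth argument simply be $P$. A $1$ in $P$ licenses one TM step; a $0$ in $P$ decrements the \emph{position counter} (fourth argument) and emits a $0$; two explicit fail equations output $\ones$ whenever the counter would go negative or a left move is attempted at counter $0$. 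The single ground equation $\tmstart(\zeros,\msf{X},\msf{N},\zeros,P)=\zeros$ then forces, in any model, that the run determined by $\int{\msf{N}}$ on input $\int{\msf{X}}$ never falls behind the schedule $\int{P}$---which is exactly the accepting condition. This is the clean way to turn ``oscillating'' into \emph{unsatisfiability} rather than mere non-productivity, the concern you correctly flag. Finally, your Skolem function $\msf{nd}\oftype\sbitstream\to\sbitstream$ is unnecessary: since only $\msf{nd}(s)$ is ever used and $s$ is a ground constant, a fresh unspecified constant $\msf{N}$ (and likewise $P$) suffices, which is exactly what the paper does.
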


\begin{proof}
  Let $\tm = \quadruple{\tmstates}{\tmstart}{\stmtrans_0}{\stmtrans_1}$ be a non-deterministic Turing~machine.
  We reduce the problem in Theorem~\ref{thm:lang} to 
  a decision problem for the equality of the set of solutions over all full models.
  We let $s = \msf{X}$ and define the specification $\aes_s$ to consist of:
  {\allowdisplaybreaks
  \begin{align}
    \tmstart(\zeros,\msf{X},\msf{N},\zeros,P) =\ &\zeros \label{eq:sol:start}\\
    \natstr{P} =\ &\ones \label{eq:sol:natstr} \\  
    \pfunap{\astate}{x}{b:y}{i:z}{p}{1:v} =\ &\pfunap{\astate'}{b':x}{y}{z}{1:p}{v} \label{eq:sol:tmr} \\
    &\text{for } \stmtrans_i(\astate,b) = \triple{\astate'}{b'}{\tmR}\hspace{.5cm} \notag\\
    \pfunap{\astate}{a:x}{b:y}{i:z}{1:p}{1:v} =\ &\pfunap{\astate'}{x}{a:b':y}{z}{p}{v} \label{eq:sol:tml} \\
    &\text{for } \stmtrans_i(\astate,b) = \triple{\astate'}{b'}{\tmL} \notag\\
    \pfunap{\astate}{x}{y}{z}{1:p}{0:v} =\  &0:\pfunap{\astate}{x}{y}{z}{p}{v} \label{eq:sol:prod}\\
    \pfunap{\astate}{x}{y}{z}{0:p}{0:v} =\  &\ones \label{eq:sol:fail1}\\
    \pfunap{\astate}{a:x}{b:y}{i:z}{0:p}{1:v} =\ &\ones \label{eq:sol:fail2}\\
    &\text{for } \stmtrans_i(\astate,b) = \triple{\astate'}{b'}{\tmL} \notag
  \end{align}}
  The equation~\eqref{eq:sol:start} starts $\tm$ on the stream $\msf{X}$
  with non-deterministic choices governed by $\msf{N}$ and
  $P$ for enforcing progress.
  The streams $\msf{X}$ and $\msf{N}$ are unspecified, thus arbitrary. 
  The equation~\eqref{eq:sol:natstr}
  ensures that $\int{P}$ contains infinitely many zeros.
  The equations~\eqref{eq:sol:tmr} and~\eqref{eq:sol:tml}
  model the computation of $\tm$ as discussed before,
  but now in each step removing the context $1:\hole$ from the fifth argument.
  If the fifth argument starts with a $0$, then \eqref{eq:sol:prod}
  decrements the position counter (the fourth argument).
  Recall, the position counter determines how many steps
  the Turing machine~$\tm$~is permitted to move left.
  Thus, always eventually decrementing the counter rules out the oscillating runs.
  The equations~\eqref{eq:sol:fail1} and~\eqref{eq:sol:fail2}
  rule out models where the head move left of the envisaged progress $\int{P}$.

  It is important to note that for any non-oscillating run $\sigma$,
  we can define a function $p : \nat \to \nat$ such that
  after $p(n)$ steps, $\tm$ visits only tape indices $\ge n$.
  Then an assignment $\int{P} = 1^{p(0)} \, 0\, 1^{p(1)} \,0\, 1^{p(2)} \,0 \ldots$
  in the model will permit this run to happen,
  that is, the head will never fall behind the envisaged progress
  and Equations~\eqref{eq:sol:fail1} and~\eqref{eq:sol:fail2} do not apply.

  As a consequence, we have $\sol{s}{\aes_s} = \{0,1\}^\omega$ if and only if
  for every $\int{\msf{X}} \in \{0,1\}^\omega$ there exists a non-oscillating run 
  (that is, an appropriate choice $\int{\msf{N}}$)
  of $\tm$ on $\int{\msf{X}}$.
  Now we define $t = \msf{Y}$ and $E_t = \{ \msf{Y} = \msf{Y} \}$
  for which obviously $\sol{t}{\aes_t} = \{0,1\}^\omega$.
  Therefore, $\sol{s}{\aes_s} = \sol{t}{\aes_t}$ if and only if $\lang{\tm} = \{0,1\}^\omega$.
  This concludes the proof of $\cpi{1}{2}$-hardness.
  
  For $\cpi{0}{2}$-membership, the problem can be characterized
  by the following analytical formula: 
  $\myall{\pair{\aalg_s}{\aalg_t}}{\myex{\pair{\aalg'_s}{\aalg'_t}}{
    (\aalg_s \models \aes_s \implies \aalg_t' \models \aes_t \wedge \int{s}^{\aalg_s} = \int{t}^{\aalg'_t}) \wedge
    (\aalg_t \models \aes_t \implies \aalg_s' \models \aes_s \wedge \int{t}^{\aalg_t} = \int{s}^{\aalg'_s})
  }}$.
  As in the proof of Theorem~\ref{thm:models}, here, it suffices to quantify over countable models.
\end{proof}

\section{Equality for Behavioral Specifications}\label{sec:rosu}
In this section we consider the notion of equality from~\cite{rosu:2006} which is based on hidden algebras~\cite{rosu:2000}.
We introduce the hidden models of bitstream specifications as employed in~\cite{rosu:2006},
where it has been shown that deciding the equality of (equationally defined) streams,
with respect to this semantics, is a $\cpi{0}{2}$\nb-complete problem.
% \todo{the semantics is $\cpi{0}{2}$\nb-complete??}
We consider the following two extensions of this semantics:
\begin{enumerate}
  \item extending the semantics to streams over natural numbers, or
  \item requiring the behavioral equivalence $\equiv$ to be a congruence.
\end{enumerate}
We show that both extensions lift the complexity of deciding equality
to the level~$\cpi{1}{1}$ of the analytical hierarchy.
If the specifications are required to be productive 
(thus, separating the problem of productivity~\cite{endr:grab:hend:2009} from that of equality)
it can be shown that the complexity resides at $\cpi{0}{1}$~\cite{grab:endr:hend:klop:moss:2012}.
The results in~\cite{rosu:2006} (as well as the results we mention in the current paper)
are based on the comparison of non-productive specifications,
and the proofs inherently encode productivity problems.

Let us briefly explain why the $\cpi{1}{1}$-completeness for the equality of bitstreams in Theorem~\ref{thm:models}
does not directly carry over the setup of~\cite{rosu:2006}.
The problem is the definition of the function $\snatstr$ in~\eqref{eq:natstr}
containing the equation $\natstr{\ones} = \zeros$.
This equation does not work if we have confusion in the models 
and behavioral equivalence is not a congruence. In particular,
as discussed in Section~\ref{sec:related}, 
if $\ones' = 1:\ones'$, 
we cannot conclude that $\natstr{\ones'} = \zeros$.
As a consequence, with the behavioral specifications of \cite{rosu:2006}
it is not possible to enforce that a bitstream always eventually contains a zero.
However, if we consider behavioral specifications 
of streams of natural numbers,
then we no longer need $\snatstr$,
hence, reestablishing the $\cpi{1}{1}$-completeness result for the equality of streams of natural numbers
specified behaviorally.
There is a similar problem with the equation $\iszeros{\zeros} = \ones$, that, however, can be
overcome by discarding $\siszeros$ as in the proof of Theorem~\ref{thm:atleast}.

\subsection{Basic Setup}

In~\cite{rosu:2006}, every bitstream specification contains the equations
\begin{align*}
  \hd{x:\sigma} &= x &
  \tl{x:\sigma} &= \sigma
\end{align*}
where $\shd \oftype \sbitstream \to \sbit$ and $\stl \oftype \sbitstream \to \sbitstream$.

\begin{definition}
  A \emph{hidden $\asig$-algebra} $\aalg = \pair{A}{\sint}$ consists of
  \begin{enumerate}
    \item an $\sorts$-sorted domain $A$ where $A_{\sbit} = \{0,1\}$,
    \item for every $f \oftype s_1 \times \ldots \times s_n \to s \in \asig$ 
          an \emph{interpretation}
          $\int{f} : A_{s_1} \times \ldots A_{s_n} \to A_s$,
    \item $0,1 \in \asig$ with $\int{0} = 0$ and $\int{1} = 1$.
  \end{enumerate} 
\end{definition}
We stress that now $A_\sbitstream$ is an \emph{arbitrary} set.

\begin{definition}
  Let $\aalg = \pair{A}{\sint}$ be a hidden $\asig$-algebra.
  Then $\astr,\bstr \in A_{\sbitstream}$ are called \emph{behaviorally equivalent}, denoted by $\astr \equiv \bstr$,
  if they are indistinguishable with $\{\shd,\stl\}$-experiments, that is:
  \begin{align*}
    \astr \equiv \bstr \iff \myall{n \in \nat}{\int{\shd}(\int{\stl}^n(\astr) = \int{\shd}(\int{\stl}^n(\bstr)}
  \end{align*}
  On the domain $A_{\sbit}$, we let $\equiv$ be the identity relation. 
\end{definition}

Note that $\equiv$ is a not a congruence (only for $\int{\shd}$ and $\int{\stl}$).

\begin{definition}
  Let $\aes$ be a bitstream specification over $\asig$.
  A  hidden $\asig$-algebra \emph{behaviorally satisfies} $\aes$, denoted $\aalg \satisfies \aes$, if
  for every equation of $\aes$, the left- and right-hand sides are behaviorally equivalent:
  \short{%
    $\int{\ell}_{\alpha} \equiv \int{r}_{\alpha}$ 
    for every $\ell = r \in \aes$ and $\alpha : \avars \to A$.
  }%
  \extended{%
  \begin{align*}
    \int{\ell}_{\alpha} \equiv \int{r}_{\alpha} && \text{for every $\ell = r \in \aes$ and $\alpha : \avars \to A$}
  \end{align*}
  }%
  We say that an equation $\ell = r$ is \emph{behaviorally satisfied in all hidden models of $\aes$}, 
  denoted $\aes \satisfies \ell = r$
  if $\aalg \satisfies E$ implies $\aalg \satisfies \ell = r$
  for every hidden $\asig$-algebra $\aalg$.
\end{definition}

For a discussion of this semantics, we refer to Section~\ref{sec:related}.

\subsection{Behavioral Equivalence as Congruence}

We now adapt the basic setup by requiring $\equiv$ to be a \emph{congruence relation},
that is, $s \xbis t$ implies $f(\ldots,s,\ldots) \xbis f(\ldots,t,\ldots)$.
The resulting models are called \emph{behavioral} in~\cite{bido:henn:kurz:2003}.

\begin{definition}
  A hidden $\asig$-algebra is called \emph{behavioral} if $\equiv$
  is a congruence relation.
  For a bitstream specification $\aes$ over $\asig$, 
  we say that  \emph{$\ell = r$ is behaviorally satisfied in all behavioral models of $\aes$} if
  $\aalg \satisfies E \implies \aalg \satisfies \ell = r$ for every behavioral hidden $\asig$-algebra $\aalg$.
\end{definition}

\begin{theorem}\label{thm:behavioral:models}
  The following problem is $\cpi{1}{1}$-complete:
  \problem{Bitstream specification $\aes$, terms $s,t \oftype \sbitstream$.}{Is $s = t$  satisfied in all behavioral models of $E$?}
\end{theorem}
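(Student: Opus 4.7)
The plan is to show that behavioral satisfaction in behavioral models, where $\equiv$ is required to be a congruence, coincides with ordinary satisfaction in the stream algebras of Section~\ref{sec:specs}, so that both the lower and upper bounds transfer directly from Theorem~\ref{thm:models}.

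First I would establish the correspondence via a quotient construction. Given a behavioral model $\aalg = \pair{A}{\sint}$ of $\aes$, the fact that $\equiv$ is a congruence ensures that the quotient $\aalg/{\equiv}$ is a well-defined $\asig$-algebra. The map $\phi : A_{\sbitstream}/{\equiv} \to \{0,1\}^\nat$ defined by $\phi([s])(n) = \int{\shd}(\int{\stl}^n(s))$ is injective by the very definition of $\equiv$. Since every behavioral specification comes equipped with the equations $\hd{x:\sigma} = x$ and $\tl{x:\sigma} = \sigma$, under the embedding $\phi$ the interpretation of the stream constructor in $\aalg/{\equiv}$ must agree with standard stream prepending. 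Hence $\aalg/{\equiv}$ may be viewed as a stream algebra in the sense of Section~\ref{sec:specs} (no confusion, domain inside $\{0,1\}^\nat$). Because $\equiv$ is a congruence, $\aalg \satisfies \aes$ translates to $\aalg/{\equiv} \models \aes$, and $\aalg \satisfies s = t$ translates to $\aalg/{\equiv} \models s = t$. Conversely, every stream algebra in the sense of Section~\ref{sec:specs} is trivially a behavioral model with $\equiv$ equal to identity, which is a congruence. Consequently, $\aes \satisfies s = t$ holds in every behavioral model iff $\aes \models s = t$ holds in every stream model of Section~\ref{sec:specs}.

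With the correspondence in place, $\cpi{1}{1}$-hardness is immediate: take the specification constructed in the proof of Theorem~\ref{thm:models} reducing well-foundedness to model-theoretic equality, augmented (if needed) with the obligatory $\hd$/$\tl$ equations, whose addition does not affect the reduction since in the stream algebras of Section~\ref{sec:specs} the head and tail are forced to be the first-element and drop-first-element functions. $\cpi{1}{1}$-membership transfers analogously: the L\"owenheim--Skolem argument used in Theorem~\ref{thm:models} applies to countable behavioral models directly (since behavioral models are first-order axiomatisable with a congruence-closure axiom for $\equiv$), so the problem is expressible as a single $\forall$ set quantifier followed by an arithmetic predicate.

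The main obstacle is the careful verification that the quotient genuinely produces a Section~\ref{sec:specs}-style stream algebra, i.e., that the image of $\phi$ is closed under the interpretations of every non-destructor symbol in $\asig$, and that the non-emptiness and no-confusion conditions are automatically met. Closure is immediate from the fact that interpretations of function symbols in $\aalg/{\equiv}$ map into $A_{\sbitstream}/{\equiv}$, which we have embedded in $\{0,1\}^\nat$; no-confusion is exactly the injectivity of $\phi$; and non-emptiness carries over from the behavioral model. Once these checks are dispatched, the theorem follows by a direct application of Theorem~\ref{thm:models} through the correspondence.
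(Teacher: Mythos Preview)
Your proposal is correct and follows essentially the same route as the paper: establish that behavioral satisfaction in all behavioral models coincides with ordinary satisfaction in all stream algebras by quotienting a behavioral model by the congruence $\equiv$ and embedding the quotient into $\{0,1\}^\nat$, then invoke Theorem~\ref{thm:models} for both $\cpi{1}{1}$-hardness and $\cpi{1}{1}$-membership. The paper's argument is the same quotient-and-rename construction, though stated more tersely; your explicit checks (injectivity of $\phi$, closure under the operations, non-emptiness) are exactly what the paper leaves implicit.
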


\begin{proof}
  We show: the equation $s = t$ is behaviorally satisfied in all behavioral models of $E$ 
  if and only if $s = t$ holds in all models of $E$;
  the latter property is $\cpi{1}{1}$-complete by Theorem~\ref{thm:models}.
  
  The direction `$\Leftarrow$' follows immediately,
  since every $\Sigma$-algebra is a behavioral hidden $\Sigma$-algebra.
  For `$\Rightarrow$', let $\aalg = \pair{A}{\sint}$ be a hidden $\Sigma$-algebra.
  Let $\aalg/_{\equiv} = \pair{A/_\equiv}{\sint/_\equiv}$ be the
  quotient~algebra. That is,
  $A/_\equiv$ are the congruence classes of $A$ with respect to $\equiv$. 
  For symbols $f \in \asig$ and $B_1,\ldots,B_{\ar{f}} \in A/_\equiv$,
  we define $\int{f}/_\equiv(B_1,\ldots,B_{\ar{f}}) = B$ if
  $\int{f}(b_1,\ldots,b_{\ar{f}}) = b$ for $b_1\in B_1,\ldots,b_{\ar{f}} \in B_{\ar{f}}$, 
  and $B$ is the congruence class of $b$ with respect to $\equiv$. 
  The quotient algebra $\aalg/_\equiv$ is a behavioral hidden $\asig$-algebra
  that, due to $\equiv$ being a congruence, behaviorally satisfies
  the same equations as $\aalg$.
  Let $\aalg'$ be the $\asig$\nb-algebra obtained from $\aalg/_\equiv$
  by renaming the domain elements into the streams they represent,
  that is, $a \in (\aalg/_\equiv)_{\sbitstream}$ becomes
  $\int{\shd}(a) : \int{\shd}(\int{\stl}(a)) : \ldots$.
  Then $\int{{:}}(x,\astr) = x:\astr$, since in $\aalg/_\equiv$
  every stream has a unique representative in the model.
  Hence, $\aalg'$ is a stream algebra.
  Moreover, for elements $a,b$ of the domain of $\aalg/_\equiv$,
  we have $a \equiv b$ iff $a = b$.
  Hence, $\aalg'$ is a model of an equation $s = t$ if and only if $s = t$ is behaviorally 
  satisfied in $\aalg$. %This concludes the proof.
\end{proof}

\subsection{Streams of Natural Numbers}

We briefly study hidden models with confusion, described in Section~\ref{sec:related}, for streams of natural numbers. 
A \emph{$\nat$-stream specification} is now defined like a bitstream specification,
except the sorts are $\sorts = \{\snat,\sstream\}$, and
the symbols are $0 \oftype \snat$, $s \oftype \snat \to \snat$ and `${:}$' of type $\snat \times \sstream \to \sstream$.
We adapt the definition of hidden $\asig$-algebras accordingly.
\begin{definition}
  A \emph{hidden $\asig$-algebra} $\aalg = \pair{A}{\sint}$ consists of
  \begin{enumerate}
    \item an $\sorts$-sorted domain $A$ and $A_{\snat} = \nat$,
    \item for every $f \oftype s_1 \times \ldots \times s_n \to s \in \asig$ 
          an \emph{interpretation}
          $\int{f} : A_{s_1} \times \ldots A_{s_n} \to A_s$,
    \item $0,s \in \asig$ with $\int{0} = 0$ and $\int{s}(x) = x+1$,
    \item for every $s \in A_{\sstream}$ there are $n \in \nat$  and $s' \in A_{\sstream}$ such that we have $s = \int{{:}}(b,s')$;
          see further Remark~\ref{rem:fix}.
  \end{enumerate} 
\end{definition}
The definitions of behavioral equivalence and satisfaction~are~the same as for bitstream specifications.
A slight modification of the proof of Theorem~\ref{thm:atleast} results in the following. 

\begin{theorem}\label{thm:confusion}
  The following problem  is $\cpi{1}{1}$-complete:
  \problem{$\nat$-stream specification $E$, terms $s,t  \oftype \sbitstream$.}{
    Does $\aes \satisfies s = t$ hold? 
    That is, is $s = t$ behaviorally satisfied in all hidden models of $\aes$?}%
\end{theorem}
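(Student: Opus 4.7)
The plan is to adapt the hardness reduction of Theorem~\ref{thm:atleast} to the behavioral $\nat$\nb-stream setting: given a decidable binary relation $M \subseteq \nat \times \nat$, we construct a $\nat$\nb-stream specification $\aes$ whose hidden models are in bijection with witnesses of infinite $M$\nb-chains, and we choose $s,t$ that are never behaviorally equivalent in any hidden model. Then $\aes \satisfies s = t$ holds vacuously iff $\aes$ has no hidden model iff $M$ is well-founded. Since well-foundedness of decidable binary relations on $\nat$ is $\cpi{1}{1}$\nb-complete, this yields $\cpi{1}{1}$\nb-hardness.

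Concretely, let $\tm$ be the Turing machine computing $M$, adapted so that the term $\tm(x,y)$ evaluates to $\suc{0}$ when $M(x,y)$ and to $0$ otherwise. We take $\aes$ to consist of (the $\nat$\nb-stream adaptation of) $\tmes{\tm}$, an unspecified constant $\msf{X} \oftype \sstream$ representing the candidate descending chain, the equations $\zeros = 0 : \zeros$ and $\ones = \suc{0} : \ones$, the check function
\[
  \srun(x:y:\tau) = \tm(x,y) : \srun(y:\tau) ,
\]
and the coupling equation $\srun(\msf{X}) = \ones$. In any hidden model the coupling equation forces $\int{\srun}(\int{\msf{X}}) \equiv \int{\ones}$, so for every $n$ the observation $\int{\shd}(\int{\stl}^n(\int{\srun}(\int{\msf{X}})))$ equals $\suc{0}$; unfolding $\srun$ via the stream constructor (available since every hidden stream element is of the form $\int{:}(b,s')$), this means $\int{\tm}(b_n, b_{n+1}) = \suc{0}$ where $b_i = \int{\shd}(\int{\stl}^i(\int{\msf{X}}))$, i.e., the naturals read off $\int{\msf{X}}$ form an infinite $M$\nb-chain. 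Conversely, from any such chain we build a standard hidden model by instantiating $\int{\msf{X}}$ accordingly. Taking $s = \zeros$ and $t = \ones$, the $\shd$ observation yields $0 \ne \suc{0}$ in every hidden model, so $s \not\equiv t$ whenever $\aes$ has a model.

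The crucial point that makes this reduction go through in the $\nat$\nb-stream setting (whereas the bitstream behavioral setting of~\cite{rosu:2006} only gives $\cpi{0}{2}$\nb-completeness) is that no stream-level well-formedness predicate is required: stream entries are already naturals, so we bypass the use of $\snatstr$ whose defining equation $\natstr{\ones} = \zeros$ is undermined by confusion, as explained at the start of this section. Likewise, by following the two-equation trick of Theorem~\ref{thm:atleast}, we avoid $\siszeros$ and need only the direct coupling $\srun(\msf{X}) = \ones$. The main obstacle is to verify that, despite the permissiveness of hidden models (confusion, non-congruence), the coupling really does pin down an $M$\nb-chain; this succeeds because on the carrier $A_{\snat} = \nat$ behavioral equivalence collapses to equality, so the per-depth observations of $\int{\srun}(\int{\msf{X}})$ are forced to be exactly $\suc{0}$.

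For $\cpi{1}{1}$\nb-membership we mirror the argument at the end of the proof of Theorem~\ref{thm:models}: the class of hidden $\asig$\nb-algebras behaviorally satisfying $\aes$ is axiomatizable in first-order predicate logic, and behavioral equivalence is arithmetically definable from $\int{\shd}$ and $\int{\stl}$. By L\"owenheim--Skolem it suffices to universally quantify over countable hidden algebras, so $\aes \satisfies s = t$ is expressible as $\myall{\aalg \subseteq \nat}{\;(\aalg \satisfies \aes \implies \int{s}^{\aalg} \equiv \int{t}^{\aalg})}$, a single universal set quantifier over an arithmetic matrix.
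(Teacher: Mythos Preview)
Your proposal is correct and follows essentially the same approach as the paper: reduce well-foundedness of a decidable binary relation $M$ to the equality question by building a specification that admits a hidden model iff $M$ has an infinite chain, and then ask whether $\zeros = \ones$ is (vacuously) behaviorally satisfied. The paper's construction differs only superficially: it gives $\srun$ an explicit control argument and extracts the successive elements of $\msf{X}$ via the destructors $\shd$/$\stl$, converting them to tape streams through an auxiliary $\sunary : \snat \to \sstream$ before invoking the Turing-machine encoding, whereas you pattern-match on the stream constructor directly in $\srun(x:y:\tau)$ and leave the conversion implicit inside ``$\tm(x,y)$''. Both variants are justified by clause~(iv) of the hidden $\asig$-algebra definition, which guarantees every stream element lies in the image of $\int{{:}}$ (so constructor patterns are exhaustive), and both exploit the same key point you highlight: with stream entries already in $\nat$, the problematic $\snatstr$ predicate is unnecessary. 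Your $\cpi{1}{1}$-membership argument via L\"owenheim--Skolem is likewise the one the paper intends.
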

\begin{proof}
  We reduce the well-foundedness problem for decidable binary relations to an equality problem.
  Let $M \subseteq \nat \times \nat$ be a decidable predicate,
  and $\tm = \triple{\tmstates}{\tmstart}{\stmtrans}$ the corresponding Turing machine.
  We define the following specification $\aes$:
  \begin{gather*}
    \begin{aligned}
    &\zeros = \srun(1,\msf{X}) &&&&\unary{0} = \zeros \\
    &\srun(0,\astr) = \ones    &&&&\unary{s(x)} = 1:\unary{x}\\
    \end{aligned}\\
    \begin{aligned}
    &\srun(1,\astr) = 0:\srun(\tminit{\zeros}{\unary{\hd{\astr}},\\
    &\hspace{3.5cm}\unary{\hd{\tl{\astr}}}},\tl{\astr})
    \end{aligned}
  \end{gather*}
  together with the equations from $\tmes{\tm}$ and~\eqref{eq:zap}.
  In contrast with the proof of Theorem~\ref{thm:atleast},
\pagebreak
  $\msf{X}$ is now a stream of natural numbers.
  Since $\msf{X}$ is unspecified, its interpretation in the model can be an arbitrary stream of natural numbers.
  As in the proofs of Theorems~\ref{thm:models} and~\ref{thm:atleast},
  we employ $\msf{X}$ to guess an infinite path through $M$.
  Instead of $\uhd{\cdot}$ and $\utl{\cdot}$ on bitstreams,
  we now take $\unary{\hd{\cdot}}$
  and $\tl{\cdot}$, respectively,
  where the function $\sunary$ converts natural numbers
  to unary representations in forms of streams.
  As in the proof of Theorem ~\ref{thm:atleast}, it follows that
  there exists a hidden $\asig$\nb-algebra $\aalg$ with $\aalg \satisfies \aes$
  if and only if $M$ is not well-founded.
  Thus, $\aes \satisfies \zeros = \ones$ if and only if $M$ is well-founded.
\end{proof}

\section{Equivalence of Lambda Terms}\label{sec:lambda}

\newcommand{\stdred}{\to_{\mit{std}}}
\newcommand{\cK}{\msf{K}}
\newcommand{\cI}{\msf{I}}
\renewcommand{\suc}{\msf{succ}}
\newcommand{\zer}{\msf{zer}}
\newcommand{\cbohm}{\funap{\msf{BT}}}
\newcommand{\clevi}{\funap{\msf{LT}}}
\newcommand{\sink}{\bot}
\newcommand{\hred}{\to_h}

In this section we investigate the complexity of deciding the equality
of $\lambda$-terms with respect to the observational equivalences $\seqn$, $\seqh$ and $\seqw$
as introduced in Section~\ref{sec:intro}.
Furthermore, we study the complexity of deciding whether
two $\lambda$-terms have the same \bohm{} trees or \levy{} trees.
The interested reader is referred to~\cite{bare:1984,deza:seve:vrie:2000}
for an introduction to \bohm{} trees, and to~\cite{deza:giov:2001} 
for a thorough study of the observational equivalences on $\lambda$-terms.

\begin{definition}\label{def:bohm} % [\bohm{} trees]
  Let $M$ be a $\lambda$-term.
  The \emph{\bohm{} tree $\cbohm{M}$ of $M$}
  is a potentially infinite term defined as follows.
  If $M$ has no hnf, then $\cbohm{M} = \sink$.
  Otherwise,
  there is a head reduction $M \hred^* \mylam{x_1}{\ldots\mylam{x_n}{y M_1 \ldots M_m}}$ to head normal form.
  Then we define 
  $\cbohm{M} = \mylam{x_1}{\ldots\mylam{x_n}{y \cbohm{M_1} \ldots \cbohm{M_m}}}$.
\end{definition}

\begin{definition}%[\levi{} trees]
  Let $M$ be a $\lambda$-term.
  The \emph{\levi{} tree $\clevi{M}$ of $M$}
  is a potentially infinite term defined as follows:
  \begin{align*}
    \clevi{M} & = \sink && \text{if $M$ has no whnf} \\
    \clevi{M} & = \mylam{x}{\clevi{N}} && \text{if $M \hred^* \mylam{x}{N}$} \\
    \clevi{M} & = x \clevi{M_1} \ldots \clevi{M_m} && \text{if $M \hred^* x M_1 \ldots M_m$}
  \end{align*}
%
%   If $M$ has no whnf, then we define $\clevi{M}$ as $\sink$.
%   Otherwise,
%   there exists a head rewrite sequence 
%   $M \hred^* \mylam{x}{N}$ to whnf
%   and then we set 
%   $\clevi{M} = \mylam{x}{\clevi{N}}$,
%   or
%   there exists a head rewrite sequence 
%   $M \hred^* x M_1 \ldots M_m$ to whnf
%   and then $\clevi{M} = x \clevi{M_1} \ldots \clevi{M_m}$.
%   Otherwise, there exists a head rewrite sequence 
%   $M \hred^* \mylam{x}{N}$ or $M \hred^* x M_1 \ldots M_m$ to whnf.
%   In this case, we define
%   $\clevi{M}$ as 
%   $\mylam{x}{\clevi{N}}$ or
%   $x \clevi{M_1} \ldots \clevi{M_m}$, respectively.
\end{definition}

For the observational equivalences we obtain:
\begin{theorem}\label{thm:equiv}
  For each ${=_?} \in \{=_n,\, =_h,\, =_w\}$,
  the following problem is $\cpi{0}{2}$-complete:
  \problem{$\lambda$-terms $M$, $N$.}{Does $M =_? N$ hold?}
  \noindent
\end{theorem}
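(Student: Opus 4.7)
The plan is to establish $\cpi{0}{2}$-membership and $\cpi{0}{2}$-hardness separately, using a single construction that works for all three observational equivalences $\eqn$, $\eqh$ and $\eqw$.

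For $\cpi{0}{2}$-membership, I would simply unfold the definitions. Each equivalence has the form $M =_{?} N \iff \myall{C}{(\phi(C[M]) \leftrightarrow \phi(C[N]))}$, where $\phi(T)$ is the predicate ``$T$ has a (w)hnf/nf''. Such a $\phi$ is $\csig{0}{1}$, witnessed by a step-bounded $\beta$-reduction sequence ending in the required normal form. An implication $(\myex{n}{P(n)}) \to (\myex{m}{Q(m)})$ between two $\csig{0}{1}$ predicates is equivalent to $\myall{n}{\myex{m}{(P(n) \to Q(m))}}$ and is thus $\cpi{0}{2}$; a biconditional is a conjunction of two such implications, still $\cpi{0}{2}$; and the outer $\myall{C}{}$ over the computably enumerable set of contexts preserves $\cpi{0}{2}$.

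For $\cpi{0}{2}$-hardness I would reduce from totality $\msf{TOT} = \{e \mid \myall{n}{\myex{m}{T(e,n,m)}}\}$ (where $T$ is Kleene's predicate), known to be $\cpi{0}{2}$-complete. Given $e$, I plan to build closed $\lambda$-terms $N$ and $M_e$ whose \bohm{} trees are stream-like and agree exactly when $f_e$ is total. Using Church pairing $[a,b] \defd \mylam{p}{p\,a\,b}$, set $N \defd Y\,(\mylam{s}{\mylam{p}{p\,I\,s}})$, whose \bohm{} tree is the infinite stream $[I,[I,[I,\sdots]]]$. Let $\msf{test}_e(n)$ be the standard $\lambda$-encoding of unbounded Kleene search for $m$ with $T(e,n,m)$, post-composed so that $\msf{test}_e(n) \mred I$ when $f_e(n)\downarrow$ and otherwise no whnf is ever reached (head reduction cycles through $Y F\,\bar{k}$ for $k = 0, 1, 2, \sdots$, never exposing a lambda or a head variable). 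Set $M_e \defd Y\,(\mylam{s}{\mylam{n}{[\msf{test}_e(n),\,s\,(\msf{succ}\,n)]}})\,\bar{0}$. If $e \in \msf{TOT}$, then $N$ and $M_e$ have the same \bohm{} tree. If $e \notin \msf{TOT}$ with least failure point $k$, then the projection context $C_k[\hole] \defd \hole\,(\mylam{x}{\mylam{y}{y}})^k\,(\mylam{x}{\mylam{y}{x}})$ gives $C_k[N] \mred I$ (nf, hnf and whnf) whereas $C_k[M_e] \mred \msf{test}_e(k)$ has no whnf, so $N \not=_{?} M_e$ under each of the three equivalences simultaneously.

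The main obstacle will be the ``$e \in \msf{TOT} \Rightarrow M_e =_{?} N$'' direction, since contextual equivalence cannot be verified context-by-context. My plan is to invoke the characterisations stated in the introduction: $\eqn$ corresponds to $\eta$-\bohm{} tree equality, $\eqh$ to $\eta^\infty$-\bohm{} tree equality, and $\eqw$ to \levy{} tree equality. When $f_e$ is total, $N$ and $M_e$ have syntactically identical \bohm{} trees containing neither $\sink$-subterms nor free variables eligible for $\eta$-expansion, so their \levy{}, $\eta$-\bohm{} and $\eta^\infty$-\bohm{} trees coincide pairwise, yielding all three equivalences. A secondary delicate point is the head-reduction analysis showing that unsuccessful Kleene search has no whnf; this is a direct unfolding of the fixed-point combinator.
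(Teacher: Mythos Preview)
Your proposal is correct and follows essentially the same approach as the paper's proof: both establish $\cpi{0}{2}$-membership by rewriting the biconditional of two $\csig{0}{1}$ conditions into $\forall\exists$ form, and both establish $\cpi{0}{2}$-hardness by reducing from totality, building a fixed-point term whose \bohm/\levi{} tree is an infinite spine that is matched level-by-level by a second term iff a Kleene-style search succeeds at each index, with a projection context isolating the first failure. The only cosmetic difference is that the paper applies the search term to a continuation (so failure makes the whole subtree meaningless), whereas you store it as a pair component (so failure yields a $\bot$ leaf); both yield identical \bohm{} and \levi{} trees in the total case and are separated by the same kind of projection context otherwise.
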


\begin{proof}
  First, we show $\cpi{0}{2}$-membership of the problem.
  We consider $=_n$ ($=_h$ and $=_w$ work analogously).
  A $\lambda$-term $Q$ has a normal form 
  if and only if $Q$ admits a standard reduction $\stdred^*$ to a normal form, see~\cite{bare:1984}.
For a $\lambda$-term $Q$, and $n\in\nat$, we write $Q \stdred^{\le n} \mit{nf}$
  to denote that $Q$ rewrites to a normal form within $\le n$ steps of standard reduction.
  Note that this is a decidable property.
  Then we claim:
  \begin{align}\label{frm:nf}
    M =_n N  & \iff  \\
     \myall{C}{ \myall{n} {\myex{m}{& \left(C[M]  \stdred^{\le n+m} \mit{nf} 
      \Leftrightarrow C[N] \stdred^{\le n+m} \mit{nf}\right)}}}    \nonumber   
  \end{align}
  For `$\Rightarrow$' in \eqref{frm:nf}, assume that $M =_n N$. Let $C$ be a context. 
  We distinguish the following cases:
  \begin{enumerate}
    \item \label{case:CM}
      Assume that $C[M]$ has a normal form. Then $C[N]$ has one,
      and $C[M] \stdred^k \mit{nf}$ and $C[N] \stdred^\ell \mit{nf}$ for some $k,\ell\in\nat$.
      Then in~\eqref{frm:nf} for any $n \in \nat$ we can choose $m = \max(k,\ell)$.
    \item The case that $C[N]$ has a normal form is symmetric to \ref{case:CM}.
    \item If neither $C[M]$ nor $C[N]$ have a normal form, then neither $C[M] \stdred^{\le n+m} \mit{nf}$
      nor $C[N] \stdred^{\le n+m} \mit{nf}$ for any $n,m\in\nat$.
  \end{enumerate}
  For `$\Leftarrow$' in \eqref{frm:nf}, assume $M \ne_n N$. 
  Then there is a context $C$ such that exactly one of the terms $C[M]$ and $C[N]$ has a normal form;
  without loss of generality, %let $C[M]$ be the term having a normal form.
  assume $C[M] \stdred^{\le n} \mit{nf}$ for some $n \in\nat$.
  Hence, $C[M] \stdred^{\le n+m} \mit{nf}$ for every $m\in\nat$,
  but $C[N] \stdred^{\le n+m} \mit{nf}$ for no $m\in\nat$.
  Thus, the right-hand side of~\eqref{frm:nf} is not satisfied.
  
  From~\eqref{frm:nf} it follows that $=_n$ is in $\cpi{0}{2}$, 
  since the two quantifiers $\forall{C}$ and $\forall{n}$ can be merged into a single $\forall$-quantifier.
\pagebreak
  
  We now proceed with proving $\cpi{0}{2}$-hardness of the problem.
  Let $\msf{T}$ be a Turing machine, and let $T$ be a $\lambda$-term
  such that for all $n,m\in\nat$,
  $T\, \num{n}\, \num{m}$ rewrites to $\cK$ if $\msf{T}$ terminates on input $n$ within $m$ steps,
  and to $\cK\cI$, otherwise.
  Here, $\cK = \mylam{xy}{x}$ and $\cI = \mylam{x}{x}$
  are the usual combinators,
  and $\num{k} = \mylam{f}{\mylam{x}}{f^n x}$ is the Church numeral representing the natural number $k \in \nat$.
  The construction of such $T$ is standard, see \cite{bare:1984}.
  Now we define:
  \begin{align*}
    & \begin{aligned}
      M & = (\mylam{x}{\mylam{a}{a(xx)}})(\mylam{x}{\mylam{a}{a(xx)}})  &&&
       \\ 
    \end{aligned}\\    
    & \begin{aligned}
      N &= N'N'\zer &&&
      N' &= \mylam{xn}{T'n\,\zer (\mylam{a}{a(xx(\suc\,n))})} \\
      T' &= T''T''  &&&
      T'' &= \mylam{xnm}{Tnm\cI(xxn(\suc\, m))} \\ 
      \zer &= \mylam{fx}{x}   &&&
      \suc &= \mylam{zfx}{f (zfx)} \\ 
    \end{aligned}
  \end{align*}
  We show that $M =_? N$ if and only if $\msf{T}$ halts on all $n\in\nat$. 
  Note that 
  $T'\num{n}\,\num{m} \to^* \cI$ if $T\num{n}\,\num{m}\to^*\cK$, that is, 
  if $\msf{T}$ terminates on input $n$ in $m$ steps;
  otherwise $T'\num{n}\,\num{m} \to^* T'\num{n}\,(\num{m+1})$.
  Hence, we obtain
  \begin{align*}
   T'\num{n}\,\num{0} \to^* \cI 
    \iff & \text{$\msf{T}$ halts on input $n$} \\
   \iff & \text{$T'\num{n}\,\num{0}$ has a (weak) head normal form} 
  \end{align*}
  The \levy{} tree of $M$ is $\mylam{a}{a(\mylam{a}{a(\mylam{a}{a\ldots})})}$.
  If $\msf{T}$ halts on input $n$, we have 
  \[N'N'\num{n} \to^* T'\num{n}\,\zer(\mylam{a}{a(N'N'\num{n+1})}) \to^* \mylam{a}{a(N'N'\num{n+1})}\]
  Thus if $\msf{T}$ terminates on all $n\in\nat$,
  then the \levy{} trees of $M$ and $N$ are equal, and, hence, by \cite{deza:giov:2001}
  we have $M =_w N$, $M =_h N$ and $M =_n N$.
  Otherwise, let $n \in \nat$ be minimal such that $\msf{T}$ does not halt on $n$.
  Then by the above, we have:
  \[N \to^* \underbrace{\lambda a.\, a (\lambda a.\, a (\ldots \lambda a.\, a}_{\text{$n$-times}}(N'N'\num{n}) \ldots))\]
  Then $N\cI^n \to^* N'N'\num{n}$ has no (weak) head normal form, but $M\cI^n$ has.
  Thus we have $M \ne_n N$, $M \ne_h N$ and $M \ne_w N$.
  This proves $\cpi{0}{2}$-hardness.
\end{proof}

The proof immediately yields the following result:

\begin{theorem}\label{thm:lambda}
  The following problems are $\cpi{0}{2}$-complete:
  \problem{$\lambda$-terms $M$, $N$.}{
    \begin{enumerate}
      \item Do $s$ and $t$ have equal \bohm{} trees?
      \item Do $s$ and $t$ have equal \levi{} trees?
    \end{enumerate}
  }
\end{theorem}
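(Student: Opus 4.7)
The plan is to reuse the reduction built in the proof of Theorem~\ref{thm:equiv} for the hardness direction, and to give an independent $\cpi{0}{2}$ upper bound for each of the two tree-equalities.

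For $\cpi{0}{2}$-membership I will enumerate the finite ``node shapes'' that can appear in a \bohm{} tree, namely triples $\tau = \tuple{n, y, m}$ specifying a leading $\lambda$-count, head variable and application arity, together with positions $p \in \nat^*$ addressing a node of the tree. The crucial observation is that the predicate $H_\tau(M, p) := $ ``the subtree at position $p$ in $\cbohm{M}$ is a hnf of shape $\tau$'' is recursively enumerable: one enumerates standard reductions of $M$ and of the appropriate argument subterms along the path $p$, halting once a matching sequence of hnfs is witnessed. Consequently, $\cbohm{M} = \cbohm{N}$ iff for every pair $(p, \tau)$, $H_\tau(M, p)$ holds exactly when $H_\tau(N, p)$ does. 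A biconditional of two $\csig{0}{1}$ predicates lies in $\cdel{0}{2}$, and a universal quantifier over the decidable set of pairs $(p, \tau)$ lifts the overall statement to $\cpi{0}{2}$. The \levi{} case is entirely analogous: hnf is replaced by whnf and the inventory of shapes is enlarged to include pure abstractions $\lambda x.\,?$ whose body is $\sink$.

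For hardness I will recycle the $\lambda$-terms $M$ and $N$ constructed in the proof of Theorem~\ref{thm:equiv} and argue that both $\cbohm{M} = \cbohm{N}$ and $\clevi{M} = \clevi{N}$ are equivalent to $\msf{T}$ halting on every input. That proof already computes $\clevi{M}$ as the infinite tree $\lambda a.\, a(\lambda a.\, a(\ldots))$, and shows that when $\msf{T}$ halts everywhere the reductions $N'N'\num{n} \mred \lambda a.\, a(N'N'\num{n+1})$ make $\clevi{N}$ coincide with $\clevi{M}$; otherwise, for the least diverging $n$, the subterm $N'N'\num{n}$ lacks a whnf and thus $\clevi{N}$ carries a $\sink$ at depth $n$. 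I then observe that each layer $\lambda a.\, a(\cdots)$ produced along these reductions is already a hnf, so $\cbohm{M}$ agrees positionwise with $\clevi{M}$ and $\cbohm{N}$ with $\clevi{N}$; since $N'N'\num{n}$ has no whnf it has no hnf either, so the $\sink$ in $\cbohm{N}$ appears at exactly the same depth $n$. The reduction from halting-on-all-inputs (known to be $\cpi{0}{2}$-complete) to tree equality is thereby preserved.

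The main obstacle I anticipate is the fine bookkeeping in the hardness step, namely confirming that the outer ``$\lambda a.\, a(\cdots)$'' layers of $N$ are genuine head normal forms rather than merely weak head normal forms, so that $\cbohm{N}$ does not collapse to $\sink$ strictly above depth $n$ while $\clevi{N}$ does not. Once this coincidence of depths for the two tree notions is verified from the explicit shape of $N'$ and from the fact that $\lambda a.\, a(X)$ is a hnf irrespective of $X$, the statement of Theorem~\ref{thm:lambda} follows immediately from the $\cpi{0}{2}$-completeness established in Theorem~\ref{thm:equiv}.
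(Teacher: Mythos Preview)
Your proposal is correct. The hardness direction is identical to the paper's: both reuse the terms $M$ and $N$ from Theorem~\ref{thm:equiv} and observe that each layer $\lambda a.\,a(\cdots)$ is already a head normal form, so for these particular terms the \bohm{} and \levi{} trees coincide, and tree equality holds precisely when $\msf{T}$ is total. The paper compresses this into a single sentence, while you spell out the depth-$n$ $\bot$ analysis explicitly; the content is the same.

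For the upper bound the two diverge slightly. The paper simply says the result ``follows immediately from the proof of Theorem~\ref{thm:equiv}'' and does not separately justify $\cpi{0}{2}$-membership of \bohm{}/\levi{} tree equality; it leans on the fact that, for the constructed $M$ and $N$, tree equality coincides with the observational equivalences whose $\cpi{0}{2}$ upper bound was already established. You instead give a direct, self-contained argument: the predicate ``the node at position $p$ in $\cbohm{M}$ has shape $\tau$'' is $\csig{0}{1}$, a biconditional of two such predicates is $\cdel{0}{2}$, and the outer universal quantifier over $(p,\tau)$ yields $\cpi{0}{2}$. This is a genuine (if small) addition over the paper's treatment and has the advantage of not relying on the coincidence between tree equality and observational equivalence for the specific witness terms; it establishes membership for \emph{arbitrary} inputs, which is what the theorem actually claims.
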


\begin{proof}
  Follows immediately from the proof of Theorem~\ref{thm:equiv}
  since $M$ and $N$ are observationally equal if and only if they 
  have the same \levi{} tree, and for $M$ and $N$ the \levi{} trees
  coincide with their \bohm{} trees.
\end{proof}

We mention that for \ber{} trees, the proof of Theorem~\ref{thm:equiv}
implies $\cpi{0}{2}$-hardness. It is not difficult to see that
the problem of deciding the equality of \ber{} trees is in $\cpi{0}{3}$.
We leave the determination of the precise complexity to future work.

\section{Conclusions}
We have investigated different model-theoretic and rewriting based semantics of equality of
infinite objects, specified either by systems of equations or by $\lambda$-terms.
It turns out that the complexities for these notions vary
from the low levels of the arithmetical hierarchy $\cpi{0}{1}$ and $\cpi{0}{2}$,
up to $\cpi{1}{1}$ and $\cpi{1}{2}$ of the analytical hierarchy,
and some even subsume the entire arithmetical and analytical hierarchy.
In particular, the observational equivalences of $\lambda$\--terms,
that are of interest for functional programming,
are all $\cpi{0}{2}$-complete.

Apart from $\cpi{0}{1}$,
% (see~\cite{grab:endr:hend:klop:moss:2012} 
% where it is proved that equivalence of productive specifications is $\cpi{0}{1}$), 
none of these classes are recursively enumerable or
co-recursively enumerable. Thus, there exists no complete proof systems for
proving or for disproving equality.
An exception is the equality of normal forms for productive specifications
for  which inequalities can be recursively enumerated~\cite{grab:endr:hend:klop:moss:2012}.
% 
% The notion of equality studied in~\cite{rosu:2006}
% seems \change{unsuitable} for equational specifications of streams, as we argued in Section~\ref{sec:related}.
% We also have shown that the results in~\cite{rosu:2006} are valid only for bitstreams.
% When considering the same semantics for specifications of streams of natural numbers, 
% the complexity is lifted to the level $\cpi{1}{1}$ of the analytical hierarchy.

\bibliography{main}
\bibliographystyle{abbrv}

\end{document}